\numberwithin{equation}{section}
\definecolor{forestgreen}{rgb}{0.13, 0.55, 0.13}
\newcommand{\reals}{\mathbb{R}}
\crefname{equation}{}{}
\crefname{lem}{Lemma}{Lemmas}
\crefname{section}{Section}{Sections}
\crefname{subsubsubsection}{Section}{Sections}
\crefname{rem}{Remark}{Remarks}
\crefname{figure}{Figure}{Figures}
\crefname{table}{Table}{Tables}
\Crefname{lem}{Lemma}{Lemmas}
\crefname{thm}{Theorem}{Theorems}
\Crefname{thm}{Theorem}{Theorems}
\crefname{assn}{Assumption}{Assumptions}
\Crefname{assn}{Assumption}{Assumptions}
\newcommand{\term}[1]{\ensuremath{\mathtt{#1}}\xspace}
\newcommand{\GPD}{\term{GPD}} 
\newcommand{\PaceReg}{R_{\term{pace}}} 
\newcommand{\uniObj}[1][]{\Phi^{\term{uni}}_{#1}} 
\newcommand{\uniObjFixed}{\uniObj[\term{fix}]} 
\newcommand{\overmu}{\overline{\mu}}
\newcommand{\PacingSGD}{\term{PacingSGD}}
\newcommand{\PacingOGD}{\term{PacingOGD}}
\newcommand{\PacingOMD}{\term{PacingOMD}}
\newcommand{\PacingOFTRL}{\term{PacingOFTRL}}
\newcommand{\myGrad}[1][k,t]{\widetilde{\nabla}_{#1}}
\newcommand{\bjl}[1]{{\color{red}  [BJL: #1]}}
\newcommand{\boldsym}[1]{\ensuremath{\boldsymbol{#1}}}  
\def\nomarkfootnote{\gdef\@thefnmark{}\@footnotetext}
\title{Budget Pacing in Repeated Auctions:\\Regret and Efficiency without Convergence%
\footnote{The authors are grateful to Bach Ha (Microsoft Bing Ads) for many conversations that informed our perspective, and to the anonymous conference and journal referees for many helpful comments.\vspace{2mm}\newline
A preliminary version of this paper~\citep{Gaitonde-itcs23} appeared in \emph{ITCS 2023}: the 14th Conf. on Innovations in Theoretical Computer Science.\vspace{2mm}\newline
Compared to the initial working paper and the conference version, the current version features revised presentation and expanded discussions, as well as \textbf{several major updates}: our aggregate and individual guarantees are extended to several other gradient-based pacing algorithms (\Cref{sec:other}, since Dec'25), the regret analysis is extended to the utility-maximization objective (since Dec'25), and numerical experiments are added (\Cref{sec:expts}, since Aug'24, with additional baselines added in Dec'25).
\vspace{2mm}
}}
\author{Jason Gaitonde%
\footnote{Duke University, Durham NC, USA. Email: jason.gaitonde@duke.edu.
Many results were obtained while J. Gaitonde was a research intern at Microsoft Research New England and a graduate student at Cornell University, supported in part by NSF Award CCF-1408673 and AFOSR Award FA9550-19-1-0183.}
\and Yingkai Li%
\footnote{Department of Economics, National University of Singapore, Singapore.
Email: yk.li@nus.edu.sg. Research carried out while Y. Li was a graduate student at Northwestern University and a research intern at Microsoft Research NYC.}
\and Bar Light%
\footnote{Business School and Institute of Operations Research and Analytics, National University of Singapore, Singapore. Email: barlight@nus.edu.sg.
Most of the research was conducted while Bar Light was a postdoc at Microsoft Research NYC.}
\and Brendan Lucier%
\footnote{Microsoft Research, Cambridge MA, USA. Email: brlucier@microsoft.com.}
\and Aleksandrs Slivkins%
\footnote{Microsoft Research, New York NY, USA. Email: slivkins@microsoft.com.}
}
\date{%
    Initial version: February 2022\\
    This version: December 2025
}
\begin{document}

\begin{titlepage}
\maketitle
\thispagestyle{empty}

\begin{abstract}
We study the aggregate welfare and individual regret guarantees of dynamic \emph{pacing algorithms} in the context of repeated auctions with budgets.
Such algorithms are commonly used as bidding agents in Internet advertising platforms, adaptively learning to shade bids by a tunable linear multiplier in order to match a specified budget.
We show that when agents simultaneously apply a natural form of
gradient-based pacing, the liquid welfare obtained over the course of the learning dynamics is at least half the optimal expected liquid welfare obtainable by any allocation rule.
Crucially, this result holds \emph{without requiring convergence of the dynamics},
allowing us to circumvent known complexity-theoretic obstacles of finding equilibria.
This result is also robust to the correlation structure between agent valuations and holds for any \emph{core auction}, a broad class of auctions that includes first-price, second-price, and generalized second-price auctions as special cases.  For individual guarantees, we further show such pacing algorithms enjoy \emph{dynamic regret} bounds for individual utility- and value-maximization, with respect to the sequence of budget-pacing bids, for any auction satisfying a monotone bang-for-buck property. To complement our theoretical findings, we provide semi-synthetic numerical simulations based on auction data from the Bing Advertising platform.

\end{abstract}

\end{titlepage}

\pagenumbering{arabic}

\section{Introduction}

Online advertising increasingly dominates the marketing landscape, accounting for $54.2\%$ of total media ad spending in the US in 2019 (~$\$129$ billion)  \citep{enberg_2019}.
Such ads are predominantly allocated by auction: advertisers submit bids to an Internet platform to determine whether they will be displayed as part of a given page view and at what price.  A typical advertiser participates in many thousands of auctions each day, across a variety of possible ad sizes and formats, payment options (pay per impression, per click, per conversion, etc.), and bids tailored to a variety of signals about user demographics and intent.  To further complicate the decision-making process from the  advertiser's perspective, these many auction instances are strategically linked through a \emph{budget constraint}, the total amount of money that can be allocated to advertising.
An advertiser therefore faces the daunting task of choosing how to appropriately allocate a global budget across a complex landscape of advertising opportunities, and then convert that intent into a bidding strategy.

To help address this difficulty,
all major online platforms provide automated budget management services that adjust campaign parameters on an advertiser's behalf.
This is commonly achieved via
\emph{budget-pacing}:
an advertiser specifies a global budget target and a maximum willingness to pay (or ``value") for different {advertising opportunities},
and these
values
are then scaled down (or ``paced'') by a multiplier
into bids such
that the realized daily spend matches the target budget.
An algorithmic bidding agent learns, online, how best to pace the advertiser's bids as it observes auction outcomes.
This campaign management service
lowers the barrier to entry into the online advertising ecosystem and removes the need for the advertiser to constantly adjust their campaign in the face of changing market conditions.  Moreover, the platform is often better positioned to manage the budget since they have direct access to detailed market statistics.

Bidding agents are now near-universally adopted across all mature advertising platforms, but this success raises some pressing questions about the whole-market view.  What can we say about the aggregate market outcomes when nearly all advertiser spend is controlled by automated bidding agents that are simultaneously learning to pace their bids?  And to what extent does this depend on the details of the underlying auction?

Central to our question is the interplay between individual learning and aggregate market efficiency, each of which has been studied on its own.  For example, when each advertising opportunity is sold by a second-price auction, it is known that linear bidding strategies {(i.e., mappings from maximum willingness to pay to a bid)}
are in fact optimal over all possible
{bidding strategies} 
 for both utility-maximizing and value-maximizing agents, and that gradient-based methods can be used by an agent to achieve vanishing regret relative to the best bidding strategy in hindsight~\citep{Borgs-www07,BalseiroGur19}.  On the other hand, when multiple bidding agents participating in second-price auctions choose pacing factors that form a pure Nash equilibrium, the resulting outcome is known to be approximately efficient (in the sense of maximizing expected liquid welfare; more on this below)~\citep{Gagan-wine19,Babaioff-itcs21}.
At first glance this combination of results seems to address the question of aggregate performance of bidding agents in second-price auctions.  
But convergence of online learning algorithms to a Nash equilibrium,
{let alone a pure Nash equilibrium, is notoriously difficult to guarantee and}
should not be taken for granted. Moreover,
finding a pure Nash equilibrium of the pacing game is PPAD-hard for second-price auctions \citep{Chen-ec21}, and hence we should not assume that bidding agents employing polynomial-time online learning strategies will efficiently converge to a {pure Nash equilibrium in the full generality of second-price auctions}.  So the question remains: \emph{if bidding agents do not converge, what happens to overall market performance?}

\subsection{Our Contributions}

%
%
%
%
We provide (classes of) bidding algorithms that simultaneously admit good \emph{aggregate guarantees}
in terms of overall market efficiency, \emph{without relying on convergence,}
while still providing good \emph{individual guarantees} as online learning algorithms that benefit a particular advertiser. Closely related are three literatures: (i) on aggregate outcomes in single-shot budget-constrained ad auctions, without regard to bidding dynamics, (ii) on online learning with budget constraints, without regard to the aggregate performance, and (iii) conditions under which learning agents converge to equilibrium in repeated games.  With this perspective in mind, we match a state-of-art aggregate guarantee from (i), while being qualitatively on par with state-of-art individual guarantees in (ii), without relying on convergence and thereby side-stepping
conditions from
(iii).
We accomplish this with bidding algorithms that are arguably quite natural and
for a broad class of auctions.
%

In our model there are $T$ rounds, each corresponding to an auction instance.
%
 The {private} values (i.e., maximum willingness to pay) observed by the agents are randomly drawn in each round and can be arbitrarily correlated with each other, capturing scenarios where the willingness to pay of different advertisers is correlated through characteristics of the impression.%
\footnote{{An \emph{impression} is the industry term for an ``atomic" advertising opportunity: a  specific slot on a specific webpage when this webpage is rendered for a specific user. Impression characteristics depend on the slot, the page, and the user.}}
In each round the bidding agents place bids on behalf of their respective advertisers.  The agents operate independently of each other, interacting only through the feedback they receive from the auction.

\xhdr{Algorithm(s).}
We focus on a gradient-based pacing algorithm (\Cref{alg:bg}) that was first introduced by \citet{BalseiroGur19} in the context of utility maximization in second-price auctions. They derive this algorithm via the Lagrangian dual of the (quasilinear) utility maximization problem and establish optimality of linear pacing for utility maximization.
In practice, gradient-based pacing also sees widespread use in richer allocation problems (such as for multiple ad slots) and auction rules (such as first-price auctions)~\cite{aggarwal2024auto}. Even though the utility optimality guarantees do not extend to all such settings, the algorithm can nevertheless be interpreted as maximizing value subject to linearity, budget, and maximum bid constraints.
%
Motivated by this broader usage, we extend the definition and analysis of \Cref{alg:bg} to a richer class of allocation problems and auction formats, including first-price, second-price, and GSP auctions, to which all of our results apply.\footnote{\label{fn:intro-non-truthful} For non-truthful auctions the restriction to linear bidding strategies is not without loss. Our individual regret guarantees are therefore with respect to optimal \emph{linear} pacing strategies,
see our description of the individual regret guarantees later in this section.}
Underlying this extension is a modified interpretation of this algorithm as stochastic gradient descent on a certain artificial objective that applies even beyond second-price auctions.

While our exposition focuses on \Cref{alg:bg}, all our guarantees extend to several other gradient-based algorithms. Essentially, \Cref{alg:bg} invokes on an update step akin to stochastic gradient descent, and this step can be replaced with several other well-known algorithms from online convex optimization:
optimistic gradient descent
\citep{Rakhlin-colt13,Rakhlin-nips13},
optimistic mirror descent
\citep{Chiang-colt12,Rakhlin-colt13,Rakhlin-nips13},
and optimistic FTRL \cite{Rakhlin-colt13}.

\xhdr{First Result: Aggregate Market Performance.}
We prove that when the bidding agents employ {\Cref{alg:bg}}
to tune their pacing multipliers,
the resulting market outcome over the full time horizon achieves at least half of the optimal expected \emph{liquid welfare}. {Crucially, this guarantee  does not depend on the convergence of the algorithms' actions to an equilibrium of the bidding game. Nevertheless, it matches the best possible guarantee even for a pure Nash equilibrium in a static truthful auction \citep{Gagan-wine19,Babaioff-itcs21}.}


Liquid welfare is the maximum amount that the agents are jointly willing to pay for a given allocation. Put differently, it is the maximum revenue that can be extracted for this allocation by an all-knowing auctioneer. Liquid welfare coincides with \emph{compensating variation} when specialized to our setting.%
\footnote{\emph{Compensating variation}, a standard notion in economics, is the amount that agents would need to pay (or be paid) to return to their original utility levels after some change, such as a change in prices~\cite{perloff2009microeconomics}. When interpreting liquid welfare as compensating variation, the change being considered is the allocation itself. See \Cref{app:cv} for further discussion.}
Conveniently, it is a welfare measure that applies even when agents seek to maximize \emph{value} (e.g., number of clicks or impressions received) subject to constraints, rather than a monetary utility objective.%
\footnote{When the agents' objective \emph{can} be expressed in dollars, such as the objective of maximizing advertiser utility subject to the budget constraints, utilitarian welfare would be reasonable aggregate objective. However, strong impossibility results are known even in a single-shot (non-repeated) setting for a single good \citep{Shahar-icalp14}.
Thus, liquid welfare is a meaningful notion of social surplus in budgeted environments, and it specializes to utilitarian welfare when budgets are infinite.}
It has become a standard objective in the analysis of budget-constrained auctions
 \citep{Shahar-icalp14,Azar-wine17,Gagan-wine19,Babaioff-itcs21}.

While our discussion so far has focused mainly on second-price single-item auctions,
our approximation result actually holds for a far richer set of allocation problems and auction formats, including those used in real-world advertising platforms.
%
We allow arbitrary downward-closed constraints on the set of feasible allocations of a single divisible good,%
\footnote{\Ie a good that can be divided fractionally among the agents.
Any item can be interpreted as divisible via probabilistic allocation.}
which captures single-item auctions as well as complex settings such as sponsored search auctions with multiple slots and separable click rates.
Further, our result applies even when the underlying mechanism is not 
truthful.  We accommodate any \emph{core auction}: an auction
that generates outcomes in the core, meaning that no coalition of agents could improve their joint utility by renegotiating the outcome with the auctioneer \citep{ausubel2002ascending}. This {is a well-studied class of auctions that} includes both first and second-price auctions, as well as the generalized second price (GSP) auction, and has previously been studied in the context of advertising auctions \citep{goel2015core,hartline2018fast}.
We emphasize, however, that the problem remains non-trivial (and almost as challenging) in a much simpler model with a repeated single-item second-price
auction and constant private values.

\xhdr{Second Result: Individual Regret Guarantees.}
We have analyzed
aggregate market performance for a broad class of (possibly non-truthful) auctions including first-price and GSP auctions, but is gradient-based pacing an effective learning method in those settings?
Regret guarantees are known when the underlying auction is truthful and the environment is stochastic~\citep{BalseiroGur19,Balseiro-BestOfMany-Opre}, but what about non-truthful auctions and non-stationary environments?
To address this question,
we bound the regret obtained by an individual bidding agent participating in any auction format that satisfies a \emph{monotone bang-per-buck} condition, which implies that the marginal value obtained per dollar spent weakly decreases in an agent's bid.  For example, first and second-price auctions satisfy this condition, as does the GSP auction.
{We first consider the \emph{stochastic}  environment where the profile of opposing bids is drawn independently from the same distribution in each round. Our benchmark is the pacing multiplier that optimally realizes the desired budget in hindsight.\footnote{{Pacing multipliers are defined to only adjust bids downward relative to the maximum willingness to pay.  Thus, if the budget is not exhausted even when setting the bid equal to the value each round, this benchmark corresponds to a pacing multiplier of $1$.}}
We obtain regret $O(T^{3/4})$ relative to this benchmark.}%
\footnote{A typical goal in regret minimization is regret  $\tilde{O}(T^\gamma)$ for some constant $\gamma\in [\tfrac12,1)$. As a baseline, regret $O(\sqrt{T})$ is the best possible in the worst case, even in a stochastic environment with only two possible actions \citep{bandits-exp3}.}
%
{This regret bound applies for the objective of utility maximization, value maximization, or any convex combination thereof.}

When the underlying auction is truthful, our benchmark is known to be optimal over the class of all possible bidding strategies (i.e., mappings from value to bid, which may not be linear), for utility-maximizing agents~\cite{Conitzer-wine18,BalseiroGur19,balseiro2021budget}.  This means that, for truthful auctions such as the second-price auction, our benchmark for regret is actually the utility-optimal bidding policy in hindsight.
%
But even beyond truthful auctions, we argue that an appropriate benchmark is
the best linear policy (i.e., best pacing multiplier).
First, while we abstractly model agent values as a willingness to pay per impression, in practice the variation in values is primarily driven by click rate estimates that are internal to the platform.
In such an environment, an advertiser {whose bidding algorithm is external to the platform}
would necessarily be limited to a linear policy.
{Indeed, if the algorithm cannot access}
the platform's click rate estimates,
then a bidding ``strategy" simply reduces to a single real-valued bid that would be (linearly) multiplied by the click rate; see \Cref{app:external} for a more formal discussion.
The benchmark therefore tracks the best performance that one could achieve in hindsight with {an externally provided bid.}
Second, linear pacing is commonly used in practice as an algorithmic bidding policy even for non-truthful auctions~\citep{Conitzer-ec19}, so from a practical perspective it is useful to focus attention on linear pacing
policies.
Third, linear pacing is reasonable from the online learning perspective: it is typical to choose a subset of policies as a hypothesis class {(even if this class is not known to contain an optimal policy)}, and the set of linear policies is a common and natural class to consider.

While the discussion above is framed in a stochastic environment, we prove an even stronger individual guarantee by permitting the opposing bids to change adaptively and adversarially based on the auction history. (Indeed, realistic auction environments are not necessarily stochastic from the individual bidder's perspective, because the other agents' bidding algorithms may be revising their bids.) In such an environment, we show that gradient-based pacing achieves vanishing regret relative to the \emph{perfect pacing sequence}, which is the sequence of pacing multipliers such that the expected spend in each round is precisely the per-round budget.%
\footnote{This guarantee is parameterized by the total round-to-round change in the ``perfect" pacing multipliers.}
In a stochastic environment, this perfect pacing sequence is precisely the single best fixed pacing multiplier in hindsight. More generally, this sequence may not be uniform and is not necessarily the sequence that maximizes expected utility or value subject to the budget constraint. Achieving low regret against this stronger benchmark
in an adversarial environment is essentially hopeless (more on this in Section~\ref{sec:related}).
Therefore, we suggest the perfect pacing sequence as a reasonable and tractable benchmark for this problem, and one particularly suitable for our algorithm (see \Cref{rem:regret-benchmark}). In fact, following this sequence (i.e., matching a target spend rate as closely as possible across time) can be a natural and desirable goal for a budget management system.

\xhdr{Numerical Simulations.}
To complement our theoretical findings, we provide semi-synthetic  numerical simulations of \cref{alg:bg} based on auction and campaign data from {Microsoft's} Bing Advertising platform. {Motivated by the impossibility of {sublinear-}regret guarantees against this benchmark in adversarial environments, we simulate {\emph{self-play}}: the progression of a multi-player environment in which the competing bidding agents engage in simultaneous learning.
We consider the utility-maximization objective for second-price payment rules, and the value-maximization objective for both first- and second-price payment rules.} We focus on regret relative to the standard benchmark: the best fixed pacing multiplier in hindsight. We find numerically that simultaneous execution of \cref{alg:bg} yields vanishing regret in our simulations, with regret rate less than $T^{3/5}$.

Furthermore, we compare {\cref{alg:bg}} to other common online learning methods by evaluating both the (individual) empirical regret rate and (aggregate) liquid welfare obtained during self-play. {Our comparators are based on} Adaptive Moment Estimation (Adam)~\cite{kingma2014adam}, Multiplicative Updating (MU)~\cite{Borgs-www07}, and Optimistic Gradient Descent (OGD)~\cite{rakhlin2013online}. We find that Adam has an improved regret rate at the expense of lower aggregate liquid welfare, MU is dominated by \cref{alg:bg} on both liquid welfare and regret rate, and OGD has comparable performance to \cref{alg:bg} on both measures.

\xhdr{Map of the paper.}
We detail our model in \Cref{sec:model}. The main results --- liquid-welfare and regret guarantees for \Cref{alg:bg} --- are presented, resp., in   \Cref{sec:liquid.welfare,sec:regret}. \Cref{sec:other} extends them to several other algorithms. In \Cref{sec:expts} we complement these theoretical findings with numerical simulations. Additional discussions and some details of the proofs and the simulations are relegated to the appendices.

\subsection{Related Work}
\label{sec:related}

\vspace{-2mm}\xhdr{Pacing in Ad auctions.}
Budget-pacing is a popular approach for repeated bidding under budget {constraints}, both in practice and in theory \citep{balseiro2021budget}. 
%
%
%
%
%
%
%
%
\citet{BalseiroGur19} attain convergence guarantees in repeated second-price auctions, under strong convexity-like assumptions.
\citet{Borgs-www07} attain a similar result for first-price auctions, without convexity assumptions (via a different algorithm).
Our emphasis on welfare guarantees
\emph{without requiring convergence} appears novel,
and possibly necessary given the aforementioned $\mathsf{PPAD}$-hardness result \citep{Chen-ec21}.


\citet{BalseiroGur19} establish individual guarantees {for utility-maximization in}
for repeated second-price auctions:
$\sqrt{T}$ regret rates for the stochastic environment and approximation guarantee for the adversarial environment,
under various convexity assumptions. \citet{Balseiro-BestOfMany-Opre} extend similar guarantees to repeated \emph{truthful} auctions, without convexity assumptions.
They also obtain regret bounds for several (specific types of)  non-stationary stochastic environments.
Our individual guarantee is different from theirs in several respects: (i) it applies to a much wider family of auctions, (ii) it applies to the adversarial environment,
but (iii) the benchmark is the perfect pacing sequence rather than the best outcome in hindsight.
%




A static (single-shot) game between budget-constrained bidders who tune their pacing multipliers, a.k.a. the \emph{pacing game}, along with the appropriate equilibrium concept was studied in \citet{Conitzer-wine18,Conitzer-ec19}, in the context of first- and second-price auctions with quasilinear utilities, and then extended to more general payment constraints \citep{Gagan-wine19} and utility measures \citep{Babaioff-itcs21}.
In particular, any pure Nash equilibrium of this game achieves at least half of the optimal liquid welfare when the underlying auction is truthful.
In a related contextual auction setting and simultaneously with our work, \citet{StandardAuctions-ec22} establishes a similar bound on liquid welfare at any (possibly non-linear) Bayes-Nash equilibrium for i.i.d.\ bidders, for a class of standard auctions that includes first-price and second-price auctions.
In contrast to these results, our efficiency result does not rely on convergence to equilibrium,
and applies to all core auctions. 



A growing line of work in mechanism design targets bidding agents that maximize value or utility
under spending constraints and are assumed to reach equilibrium \citep{pai2014optimal,Balseiro-ec21,deng2021towards}.
In contrast, our emphasis is not on mechanism design: we take the auction specification as exogenous and focus on the learning dynamics.

\emph{Throttling} (a.k.a. \emph{probabilistic pacing}) is an alternative approach:
instead of pacing their bids, agents participate in only a fraction of the auctions \citep{balseiro2021budget}.
Very recently,  \citet{Chen-wine21}
proved that throttling converges to a Nash equilibrium in the first-price auctions (albeit without any stated implications on welfare or liquid welfare). On the other hand, no such convergence is possible for second-price auctions under throttling dynamics for the same reason as pacing: a Nash equilibrium is $\mathsf{PPAD}$-hard to compute \citep{Chen-ec21}.
The equilibria obtained by throttling and pacing dynamics in the first-price auction can differ in revenue by at most a factor of $2$ \citep{Chen-wine21}.

\xhdr{Learning theory.}
Repeated bidding with a budget is a special case of multi-armed bandit problems with global constraints, a.k.a. \emph{bandits with knapsacks} (BwK)
\citep{BwK-focs13,AgrawalDevanur-ec14-OpRe,AdvBwK-focs19}
(see Chapter 10 in \citep{slivkins-MABbook} for a survey).
BwK problems in adversarial environments do not admit regret bounds: instead,
one is doomed to approximation ratios, even against a time-invariant benchmark and even in relatively simple examples \citep{AdvBwK-focs19}. A similar impossibility result is derived in
\citep{BalseiroGur19} specifically for repeated budget-constrained bidding in second-price auctions. {The essential reason for this impossibility is the \emph{spend-or-save dilemma} \citep{AdvBwK-focs19}, whereby the algorithm does not know whether to spend the budget now or save it for the future.}

Several known algorithms for BwK are potentially applicable to our problem, and the corresponding individual guarantees on regret may be within reach for the stochastic environment, but haven't yet been published.%
\footnote{\label{fn:discretization}One could run a BwK algorithm on bids discretized as multiples of some $\epsilon>0$. Individual guarantees would depend on bounding the discretization error which is a known challenge in BwK \cite{BwK-focs13,cBwK-colt14}. We are only aware of one such result for BwK when one has contexts and a ``continuous" action set (in our case, resp., private values and bids). This result concerns a different problem: \emph{dynamic pricing}, where actions correspond to posted prices, and only achieves regret  $\tilde{O}(T^{4/5})$ in the stochastic environment \citep{cBwK-colt14}.}
However, it is unclear how to derive aggregate guarantees for such algorithms.  The algorithm analyzed in the present work is based upon stochastic gradient descent, which is a standard, well-understood algorithm in online convex optimization \citep{Hazan-OCO-book}.


A long line of work targets convergence of learning algorithms in repeated games (not specifically focusing on ad auctions).
When algorithms achieve low regret (in terms of cumulative payoffs), the \emph{average play} (time-averaged distribution over chosen actions) converges to a (coarse) correlated equilibrium~\citep{Aumann-74,Moulin-78,HartMasCollel-econometrica00}, and this implies welfare bounds for various auction formats in the absence of budget constraints~\cite{DBLP:journals/jair/RoughgardenST17}.  In contrast, we show in \Cref{app:regret.example} that for repeated auctions with budgets, low individual regret on its own does not imply any bounded approximation for liquid welfare.
Convergence in the last iterate is more challenging: strong negative results are known even for two-player zero-sum games \citep{Piliouras-ec18,Piliouras-soda18,Piliouras-colt19}. Yet, a recent line of work (starting from \citet{Daskalakis-iclr18}, see \citep{Daskalakis-itcs19-lastIterate,Daskalakis-neurips20-lastIterate,Haipeng-iclr21-lastIterate} and references therein) achieves last-iterate convergence under full feedback and substantial convexity-like assumptions, using two specific regret-minimizing algorithms.
To the best of our understanding, these positive results do not apply to the setting of repeated auctions with budgets.

\xhdr{Subsequent work.} Since the preliminary conference version of this work was made publicly available~\cite{Gaitonde-itcs23}, 
several relevant papers have appeared in the literature.

{The follow-up papers \citet{Fikioris-LPoA23,Autobidding-colt24} are directly related, combining liquid-welfare and individual guarantees.} \citet{Fikioris-LPoA23} focus on the special case of repeated first-price auctions and achieve liquid welfare guarantees as long as each bidding algorithm satisfies a particular individual guarantee. Specifically, they assume multiplicative $\gamma\geq 1$ approximation relative to the best fixed pacing multiplier, and obtain multiplicative approximation ratio $R = \gamma+O(1)$ on liquid welfare. They achieve $\gamma=T/B$ and $R \approx T/B+1/2$ plugging in a recent result on bandits with knapsacks \citep{Castiglioni-icml22}. Further, they achieve $R \approx 2.4$ if $\gamma=1$. However, it is currently not known how to achieve $\gamma<T/B$, let alone $\gamma=1$, with non-trivial budget-constrained bidding algorithms, e.g., such as those guaranteed to achieve vanishing regret in a stochastic environment.

\citet{Autobidding-colt24} extend our results to bidders that face return-on-investment constraints in addition to budget constraints. They achieve the same 2-approximation guarantee on liquid welfare and similar individual guarantees. Their algorithm coincides with ours when specialized to budget constraints. Their results hold for single-item allocation problems and any auction format in which the single item is sold to the highest bidder and the payment lies between the highest and second-highest bids.

Two other papers focus on individual guarantees for online bidding under budget, obtaining $\tilde{O}(\sqrt{T})$ regret under various assumptions. \citet{wang2023learning} consider utility maximization in a repeated first-price auction, provided that the maximum bid is revealed when an auction is lost. \citet{balseiro2023robust} consider repeated second-price auction, but in a non-stationary environment: they use time-varying target expenditures driven by additional data samples.



Finally, the idea to measure regret against the perfect pacing sequence, introduced in this paper to bypass the spend-or-save dilemma in the adversarial environment, has been fruitfully used (for the same purpose) in general BwK problems \citep{LagCBwK-jmlr24,BwK-braverman2025}.

\section{Our Model and Preliminaries}
\label{sec:model}


\vspace{-2mm}\xhdr{The Allocation Problem.}
Our setting is a repeated auction with budgets.
%
%
%
%
There is one seller (the platform) and $n$ bidding agents. In each time $t=1,\ldots,T$, the seller has a single unit of a good available for sale.  We will sometimes refer to the good as a (divisible) \emph{item}.  An \textit{allocation profile} is a vector $\boldsym{x} = (x_{1}, \dotsc, x_{n}) \in [0,1]^n$ where $x_{k}$ is the quantity of the good allocated to agent $k$. There is a convex and closed set $X \subseteq [0,1]^n$ of feasible allocation profiles, which is assumed to be downward closed.\footnote{The set $X \subseteq [0,1]^n$ is downward closed if, for any $\boldsym{x}, \boldsym{x}' \in [0,1]^n$ such that $\boldsym{x} \in X$ and $x'_k \leq x_k$ for all $k$, we have $\boldsym{x}' \in X$.} An allocation profile is feasible if $\boldsym{x} \in X$.  An allocation sequence is a sequence of allocation profiles $(\boldsym{x}_{1},\ldots,\boldsym{x}_{T})$ where $\boldsym{x_{t}} = (x_{1t}, \dotsc, x_{nt})$ is the allocation profile at time $t$.

At each time $t$, each agent $k$ has a value $v_{k,t} \in [0, \overline{v}]$ per unit of the item received. By scaling values, we will assume that $\overline{v}\geq 1$.  The value profile $\boldsym{v}_t = (v_{1,t}, \dotsc, v_{n,t})$ at time $t$ is drawn from a distribution function $F$ independently across different time periods.  We emphasize that $F$ is not necessarily a product distribution, so the values held by different agents can be arbitrarily correlated within each round.  


Two special cases of our model are of particular interest in the context of ad auctions. In a \emph{single-slot} ad auction, the ``item" for sale is an opportunity to display one ad to the current user. A \emph{Multi-slot} ad auction is a natural generalization in which multiple ads can be displayed in each round. The formal details are standard, we provide them in \Cref{app:single-multiple-slot}.

\xhdr{Auctions and Budgets.}
At each time $t$ the good is allocated using an auction mechanism that we now describe.  In round $t$, each agent $k \in [n]$ first observes her value $v_{k,t} \geq 0$.  After all agents have observed their values, each agent $k$ then submits a bid $b_{k,t} \geq 0$ to the auction.  All agents submit bids simultaneously.  The auction is defined by an allocation rule $\boldsym{x}$ and a payment rule $\boldsym{p}$, where $\boldsym{x}(\boldsym{b}) \in X$ is the allocation profile generated under a bid profile $\boldsym{b}$, and $p_k(\boldsym{b}) \geq 0$ is the payment made by agent $k$.
All of the auction formats we consider will have weakly monotone allocation and payment rules, so we will assume this for the remainder of the paper.  That is, for any $k$ and any bids of the other agents $\boldsym{b}_{-k}$, both $x_k(b_k, \boldsym{b}_{-k})$ and $p_k(b_k, \boldsym{b}_{-k})$ are weakly increasing in $b_k$.

Each agent $k$ has a fixed budget $B_k$ that can be spent over the $T$ time periods.  Once an agent has run out of budget she can no longer bid in future rounds.  We emphasize that this budget constraint binds ex post, and must be satisfied on every realization of value sequences.\footnote{All of the auctions we consider satisfy the property that each agent's payment will be at most her bid, so it is always possible to bid in such a way that does not overspend one's remaining budget.}  An important quantity in our analysis is the \emph{target expenditure rate} for agent $k$ which is defined by $\rho_k\triangleq B_k/T$.

The \emph{auction history} up to round $t$ consists of all realized values, bids, allocations, and payments for all agents and all rounds prior to $t$. An agent $k$ will observe some (but typically not all) of this history. All of our results hold under \emph{bandit feedback} where each agent can see her own values, bids, allocations, and payments from previous rounds, but is not required to observe any other aspect of the history including the bids of other agents.

Each agent $k$ applies a dynamic bidding strategy that, in each round $t$, maps  (the observed part of) auction history and the realized valuation $v_{k,t}$ to a bid $b_{k,t}$.  In a slight abuse of notation we'll write $\boldsym{b}(\boldsym{v})$ for the sequence of bid profiles that are generated when the sequence of value profiles is $\boldsym{v}$.  In this paper we will focus on a particular class of bidding strategies (gradient-based pacing) described
later in this section.

Given an execution of the auction over $T$ rounds, we will typically write $x_{k,t}$ for the realized allocation obtained by agent $k$ in time $t$, and $z_{k,t}$ for the realized payment of agent $k$ in time $t$.  This notation omits the dependency on the agents' bids when this dependency is clear from the context.


\xhdr{Core Auctions.}
Our results apply to \emph{Core Auctions} \citep{ausubel2002ascending}, a wide class of auction mechanisms which includes standard auctions such as first-price auctions, second-price auctions for single-slot allocations, and generalized second-price (GSP) auctions for multi-slot allocations.%
\footnote{For the sake of completeness, these special cases are defined in \Cref{app:CoreAuctions}.}
Roughly speaking, no subset of players (which may or may not include the seller) could jointly benefit by renegotiating the auction outcome among themselves.
Stated more formally in our context, a core auction satisfies the following two properties:
\begin{enumerate}
\item {The auction is \emph{individually rational} (IR):} each agent's payment does not exceed her declared welfare for the allocation received.  That is,
            $p_k(\boldsym{b}) \leq b_k\; x_k(\boldsym{b})$
    for every agent $k$ and every bidding profile $\boldsym{b}$.

    \item The seller and any subset of agents $S \subseteq [n]$ could not strictly benefit by jointly abandoning the auction and deviating to another outcome. That is, for any bidding profile $\boldsym{b}$ and any allocation profile $\boldsym{y} \in X$,
\[
    \textstyle \sum_{k}\; p_k(\boldsym{b})
    + \sum_{k \in S} \left(b_k x_k(\boldsym{b}) - p_k(\boldsym{b}) \right)
    \geq \sum_{k \in S}\; b_k y_k
\]
which simplifies to
    \begin{align}\label{eq:model-core}
    \textstyle \sum_{k \not\in S}\; p_k(\boldsym{b})
    + \sum_{k \in S}\; b_k x_k(\boldsym{b})
    \geq \sum_{k \in S}\; b_k y_k.
    \end{align}
    The left-hand (resp., right-hand) side of \refeq{eq:model-core} is the total welfare obtained by the seller and the agents in $S$ under the auction (resp., under a deviation to allocation $\boldsym{y}$). On both sides,
    the summation over agents in $S$ accounts not only for the agent utilities, but also those agents' payments to the seller.

\end{enumerate}

The second property implies that a core auction must always maximize declared welfare.  Indeed, taking $S = [n]$, \refeq{eq:model-core} states that $\sum_k b_k x_k(\boldsym{b}) \geq \sum_k b_k y_k$ for all feasible allocation profiles $\boldsym{y}$.  The first property is simply a restatement of the core condition that no subset of buyers could jointly benefit by renegotiating an auction outcome that does not include the seller, i.e. no buyer (and hence no set of buyers) would strictly prefer to switch to the null outcome in which no goods are allocated and no payments are made.



\xhdr{Monotone Bang-Per-Buck.}
Our guarantees on individual regret require an additional property of the auction mechanism called  \emph{monotone bang-per-buck} (MBB).
Roughly speaking, if increasing one's bid leads to an increased allocation, then the increase in payment per unit of new allocation is at least the minimum bid needed for the increase.
Formally, an auction mechanism satisfies MBB if for any agent $k$, any profile of bids $\boldsym{b}_{-k}$ of the other agents, and any bids $b_k \leq b'_k$ of agent $k$, we have
\begin{align}
\label{eq.mbb.1}
 p_k(b'_k, \boldsym{b}_{-k}) - p_k(b_k, \boldsym{b}_{-k}) 
    \geq b_k \cdot \rbr{x_k(b'_k, \boldsym{b}_{-k}) - x_k(b_k, \boldsym{b}_{-k})}.
\end{align}
One can readily check that the MBB property is satisfied by many common auction formats, such as first-price and second-price auctions as well as generalized second-price auctions (see \Cref{app:ExamplesofAuctions} for further discussion).
The name MBB is motivated by a simple observation that in a MBB auction each buyer's expected payment per expected unit of allocation received is weakly increasing in their bid, for any distribution of competing bids.  See, for example, inequality \eqref{eq:mbb} in Section~\ref{sec:proof.regmain}.

\xhdr{Agent Objective.}
{To state individual regret guarantees}, we must define an objective for each individual agent.  Two natural, well-motivated candidates are (constrained) value maximization and utility maximization.
We focus on utility maximization as a primary scenario in this paper, but also discuss value maximization below. Our \emph{aggregate} guarantees are independent of the agent objective.


A utility-maximizing agent aims to maximize total quasi-linear utility,
$\sum_t v_{k,t} x_{k,t} - z_{k,t}$, subject to the budget constraint $\sum_t z_{k,t} \leq B_k$. Implicitly, this formulation assumes that values are expressible in units of money. This objective is especially relevant in scenarios where the agents are the advertisers themselves.

For any auction that is truthful for quasi-linear bidders {within a single round}, such as the second-price auction, the ex-post optimal strategy for a utility-maximizing agent is to bid $v_{k,t} \times \alpha$ in each round for the largest $\alpha \in [0,1]$ such that the budget constraint is satisfied~\cite{BalseiroGur19}.  Notably, this strategy also maximizes the total value received subject to the budget constraint and a no-overbidding condition.

Motivated by this observation, we also consider value-maximizing agents whose objective
is to maximize the total value received subject to budget and maximum bid constraints.
More specifically, a value-maximizing agent in our model aims to maximize the total value of the allocation obtained over all rounds, $\sum_t v_{k,t} x_{k,t}$, subject to the following two constraints.  First, the total payment across all rounds must not exceed the budget: $\sum_t z_{k,t} \leq B_k$ for all $k$.
Second, the bid placed in round $t$ must not exceed the value of the good in round $t$: $b_{k,t} \leq v_{k,t}$ for all $k$ and $t$.
The latter can be interpreted as a no-overbidding constraint, consistent with a pacing scenario where the agent is empowered to scale bids downward but not upward.  Alternatively, for all the auction formats we consider, the second constraint could equivalently be interpreted as a bound on marginal ROI: i.e., that $z_{k,t} \leq v_{k,t} x_{k,t}$ for all $k$ and $t$, in which case $v_{k,t}$ is viewed as an upper bound on the maximum amount the agent is allowed to pay per unit in any round.\footnote{We emphasize that a bound on marginal ROI is not equivalent to an average ROI constraint that binds over multiple rounds.  Maximizing value subject to an average ROI constraint is a common but different model of algorithmic bidding.  See~\citet{aggarwal2024auto} for a survey and~\citet{Autobidding-colt24} for follow-up work extending our analysis to scenarios with average ROI constraints.}

\xhdr{Liquid Welfare.}
Liquid welfare is a measure of welfare that accounts for non-quasi-linear agent utilities.
Intuitively, an agent's liquid welfare for an allocation sequence is the agent's \emph{maximum willingness to pay for the allocation}.  This generalizes the common notion of welfare in quasi-linear environments, and motivates the choice to measure welfare in transferable units of money.  In our setting with a budget constraint that binds across rounds, liquid welfare is defined as follows.

\begin{definition}
 Given a sequence of value profiles $\boldsym{v} = (v_{k,t}) \in [0,\overline{v}]^{nT}$ and any sequence of feasible allocations $\boldsym{x}=(x_{k,t}) \in X^T$, the \emph{liquid value} obtained by agent $k$ is
     $W_{k}(\boldsym{x})
     =\min\cbr{B_k,\,\sum_{t=1}^T\; x_{k,t}\,v_{k,t}}.$
 The \emph{liquid welfare} of allocation sequence $\boldsym{x} \in X^T$ is $W(\boldsym{x})=\sum_{k=1}^n W_{k}(\boldsym{x}).$
 \end{definition}

We emphasize that liquid welfare
depends on the allocations, but not on the agents' payments.


Our objective of interest is the \emph{expected} liquid welfare obtained by the platform over any randomness in the valuation sequence and the agents' bidding strategies (which induces randomness in the allocation sequence). Since the bid placed in one round can depend on allocations obtained in the previous rounds, we define
a mapping from the entire sequence of $T$ value profiles to an allocation sequence.  An \emph{allocation sequence rule} is a function $\boldsym{x} \colon [0,\overline{v}]^{nT} \to X^T$, where $x_{k,t}(\boldsym{v}_1, \dotsc, \boldsym{v}_T)$ is the allocation obtained by agent $k$ in round $t$.  Then the expected liquid welfare of allocation sequence rule $\boldsym{x}$ is
\begin{align}\label{eq:LW-bench}
W(\boldsym{x},F) :=
\E_{\boldsym{v}_1\LDOTS\,\boldsym{v}_T \sim F}
    \sbr{W(\boldsym{x}(\boldsym{v}_1\LDOTS\boldsym{v}_T)}.
 \end{align}

\xhdr{Pacing Algorithm.}
We use a budget-pacing algorithm motivated by stochastic gradient descent that was introduced and analyzed in \citep{BalseiroGur19}
in the context of utility maximization in second-price auctions. See~\Cref{alg:bg}.  Each bidder $k$ maintains a \emph{pacing multiplier} $\mu_{k,t} \in [0, \overline{\mu}]$ for each round $t$.  Multiplier $\mu_{k,t}$ is determined by the algorithm before the value $v_{k,t}$ is revealed. The bid is set to $v_{k,t} / (1+\mu_{k,t})$, or the remaining budget $B_{k,t}$ if the latter is smaller. Once the round's outcome is revealed, the multiplier is updated as per \refeq{eq:alg-update}, where $P_{[a,b]}$ denotes the projection onto the interval $[a,b]$. Intuitively, the agent's goal is to keep expenditures near the expenditure rate $\rho_{k}$. Hence, if $\rho_{k}$ is above (resp., below) the current expenditure $z_{k,t}$, the agent decreases (resp., increases) her multiplier, by the amount proportional to the current ``deviation" from the target expenditure rate $\rho_k$.

\begin{algorithm}
\caption{Gradient-based pacing algorithm for agent $k$}\label{alg:bg}
\KwIn{Budget $B_k$, time horizon $T$, {step-size} $\epsilon_k>0$ ,
pacing upper bound $\overline{\mu}$}

Initialize\newline $\mu_{k,1} = 0$ (pacing multiplier),
    $B_{k,1} = B_k$ (remaining budget),
    $\rho_k = B_k/T$ (target spend rate).

\For{round $t=1,\ldots,T$}{

Observe value $v_{k,t}$,
submit bid
    $b_{k,t} = \min\cbr{ v_{k,t}/(1+\mu_{k,t}),\, B_{k,t} }$;

Observe expenditure $z_{k,t}$;

Update $B_{k,t+1}\leftarrow B_{k,t}-z_{k,t}$ and
\begin{align}\label{eq:alg-update}
\mu_{k,t+1} \leftarrow P_{[0,\overline{\mu}]}\rbr{\mu_{k,t}-\epsilon_k(\rho_k-z_{k,t})}.
\end{align}
\label{alg:bg:line-update}

\vspace{-5mm}{STOP if $B_{k,t}\leq 0$.}
}
\end{algorithm}

Parameters $\overline{\mu}$ and $\epsilon_k$ will be specified later. While the upper bound $\overline{\mu}$ could in general depend on the agent~$k$, as in
    $\overline{\mu}=\overline{\mu}_k$,
we suppress this dependence for the sake of clarity. Our bounds depend on
    $\max_k \overline{\mu}_k$.

A convenient property of \Cref{alg:bg} is that it does not run out of budget too early.%
\footnote{This result is proven in \citep{BalseiroGur19} (Eq. (A-4) of Theorem 3.3) for second-price auctions, but an inspection of their proof shows this holds surely for any sequence of valuations and competing prices so long as the price is at most the bid if the agent wins.}

\begin{lemma}[\citep{BalseiroGur19}]
\label[lemma]{lem:bg1}
Fix agent $k$ in a core auction. Let $\tau_k$ be the stopping time of \Cref{alg:bg} for some (possibly adaptive, randomized, adversarially generated) set of valuations and competing bids. Assume all valuations are at most $\overline{v}$, and the parameters satisfy
    $\overline{\mu}\geq \overline{v}/\rho_k-1$
and
    $\epsilon_k\, \overline{v}\leq 1$.
Then $
    T-\tau_k\leq \frac{\overline{\mu}}{\epsilon_k \rho_k}+\frac{\overline{v}}{\rho_k}$
almost surely. 
\end{lemma}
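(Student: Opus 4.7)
The plan is to show that under the stated parameter choices the upper clipping of the projection in Line~\ref{alg:bg:line-update} is never active, and then telescope and combine with a simple bound on the total spend at the stopping time.

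First I will argue that whenever $\mu_{k,t}\in[0,\overline{\mu}]$, the unprojected update $\tilde{\mu}_{k,t+1}:=\mu_{k,t}+\epsilon_k(z_{k,t}-\rho_k)$ already satisfies $\tilde{\mu}_{k,t+1}\leq \overline{\mu}$. Because a core auction is IR and allocations lie in $[0,1]$, we have $z_{k,t}\leq b_{k,t}\leq v_{k,t}/(1+\mu_{k,t})\leq \overline{v}/(1+\mu_{k,t})$, so it suffices to prove that $f(\mu):=\mu+\epsilon_k\bigl(\overline{v}/(1+\mu)-\rho_k\bigr)\leq\overline{\mu}$ on $[0,\overline{\mu}]$. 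The derivative $f'(\mu)=1-\epsilon_k\overline{v}/(1+\mu)^2$ is nonnegative on $[0,\infty)$ precisely because $\epsilon_k\overline{v}\leq 1$, so $f$ is nondecreasing and attains its maximum on $[0,\overline{\mu}]$ at $\mu=\overline{\mu}$. The other hypothesis $\overline{\mu}\geq \overline{v}/\rho_k-1$ is exactly what is needed to make $\overline{v}/(1+\overline{\mu})\leq \rho_k$, hence $f(\overline{\mu})\leq\overline{\mu}$ and the claim follows. Consequently the projection can only clip at $0$, which gives $\mu_{k,t+1}\geq \mu_{k,t}+\epsilon_k(z_{k,t}-\rho_k)$ almost surely for every $t\leq\tau_k$.

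Summing this inequality from $t=1$ to $\tau_k$, using $\mu_{k,1}=0$ and $\mu_{k,\tau_k+1}\leq\overline{\mu}$, yields $\overline{\mu}\geq \epsilon_k\bigl(\sum_{t=1}^{\tau_k}z_{k,t}-\tau_k\rho_k\bigr)$. Next I invoke the budget-accounting inequality $\sum_{t=1}^{\tau_k}z_{k,t}\geq B_k-\overline{v}$: each per-round payment is at most the bid and hence at most $\overline{v}$, so the residual budget right after the stopping round is at most $\overline{v}$. Substituting $B_k=T\rho_k$ and rearranging gives $\overline{\mu}+\epsilon_k\overline{v}\geq \epsilon_k\rho_k(T-\tau_k)$, i.e.\ $T-\tau_k\leq \overline{\mu}/(\epsilon_k\rho_k)+\overline{v}/\rho_k$, which is the claimed bound.

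The main obstacle is the first step, as it is the only nontrivial piece and it is precisely where both parameter hypotheses enter: $\epsilon_k\overline{v}\leq 1$ forces monotonicity of $f$, while $\overline{\mu}\geq \overline{v}/\rho_k-1$ ensures that the resulting maximum of $f$ over $[0,\overline{\mu}]$ still lies below $\overline{\mu}$. Everything after that is standard telescoping and budget bookkeeping, and the argument is deterministic along every sample path, giving the almost-sure conclusion regardless of how values and competing bids are generated.
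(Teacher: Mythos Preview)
Your argument is correct and is exactly the standard telescoping argument from \cite{BalseiroGur19} that the paper defers to (the paper does not give its own proof; it cites Eq.~(A-4) of Theorem~3.3 there and remarks in a footnote that the argument goes through verbatim for any auction where the price never exceeds the bid). The only cosmetic omission is that the budget-accounting step $\sum_{t\leq\tau_k}z_{k,t}\geq B_k-\overline{v}$ tacitly assumes $\tau_k<T$, but when $\tau_k=T$ the conclusion $T-\tau_k\leq \overline{\mu}/(\epsilon_k\rho_k)+\overline{v}/\rho_k$ is trivial, so nothing is lost.
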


\begin{remark}
\Cref{alg:bg} never \emph{overbids}, in the sense that the bids $b_{k,t}$ are always upper-bounded by the respective values $v_{k,t}$.  For utility-maximizing agents, this behavior is provably optimal for second-price auctions \citep{BalseiroGur19}.  More generally, the choice to never overbid may reduce the agent's objective value.  In such cases, we interpret no-overbidding as an additional optimization constraint motivated by the budget pacing context and/or an imposed bound on marginal ROI, as discussed in the Agent Objective paragraph above.
\end{remark}

\section{Aggregate Guarantee: Approximate Liquid Welfare}
\label{sec:liquid.welfare}

{Our main result is a liquid welfare guarantee without convergence.} We show that, when all agents use \Cref{alg:bg}, the expected liquid welfare (as defined in \refeq{eq:LW-bench}) is at least half of the optimal minus a regret term.

\begin{theorem}\label{thm:main.simple}
Fix any core auction and any distribution $F$ over agent value profiles. Suppose that each agent $k$ employs {\Cref{alg:bg}} to bid, possibly with a different step-size $\eps_k$. Write
$\boldsym{x}$ for the corresponding allocation sequence rule.
Then for any {other allocation sequence rule $\boldsym{y}$} we have
\begin{equation}\label{eq:wel bound}
 {W(\boldsym{x},F)}
    \geq \frac{W(\boldsym{y},F)}{2} - O\rbr{n\overline{v}\sqrt{T\log(\overline{v}nT)}}.
\end{equation}
\end{theorem}

{In fact, our analysis for \Cref{thm:main.simple} yields a stronger result. First, it does not invoke a full specification of the bidding rule in \Cref{alg:bg}. Instead,} it applies to a wider class of algorithms that do not overbid ($b_{k,t} \leq v_{k,t}$), bid their full value when $\mu_{k,t}=0$, and are not constrained otherwise.

\begin{definition}\label[definition]{def:gen-pacing}
Consider a measurable bidding algorithm that inputs the same parameters as \Cref{alg:bg}, internally updates  multipliers $\mu_{k,t}$ in the same way, and never overbids: $b_{k,t} \leq v_{k,t}$ for all rounds $t$. Call it a \textbf{generalized pacing algorithm} if
    $(\mu_{k,t} = 0) \Rightarrow (b_{k,t} = v_{k,t})$
for all rounds $t$ (\emph{``no unnecessary pacing"}).
\end{definition}

Second, our analysis competes against a stronger benchmark: optimal \emph{ex ante} liquid welfare, which is each agent's willingness to pay for her \emph{expected} allocation sequence. It upper-bounds the expected (ex post) liquid welfare by Jensen's inequality.

\begin{definition}
Fix distribution $F$ over valuation profiles and allocation rule
    $\boldsym{y} \colon [0,\overline{v}]^{n}\to X$.
The \emph{ex ante liquid value} of each agent $k$ is
    $\overline{W}_k(\boldsym{y},F)=T \times
    \min\sbr{\rho_k, \mathbb{E}_{\boldsym{v}\sim F}\sbr{y_{k}(\boldsym{v})\, v_{k}}}$
%
and the ex ante liquid welfare is
$\overline{W}(\boldsym{y},F)
    =\textstyle \sum_{k=1}^n\; \overline{W}_k(\boldsym{y},F)$.
\end{definition}

In this definition, the restriction that the same allocation rule $y$ is used in every round is without loss of generality.
Indeed, given any allocation sequence rule
there is a single-round allocation rule with the same ex ante liquid welfare. This property is formally stated and proved as \Cref{lem:single.round} in \Cref{app:proofs}.

{To sum up, the following is the generalization of \Cref{thm:main.simple} that we actually prove:}




\begin{theorem}\label{thm:main.new}
Fix any core auction and any distribution $F$ over agent value profiles. Suppose that each agent $k$ employs a generalized pacing algorithm to bid, possibly with a different step-size $\eps_k$. Write
$\boldsym{x} \colon [0,\overline{v}]^{nT} \to X^T$ for the corresponding allocation sequence rule.
Then for any allocation rule $\boldsym{y} \colon [0,\overline{v}]^{n} \to X$ we have
\begin{equation}\label{eq:wel bound}
 {W(\boldsym{x},F)}
    \geq \frac{\overline{W}(\boldsym{y},F)}{2} - O\rbr{n\overline{v}\sqrt{T\log(\overline{v}nT)}}.
\end{equation}
\end{theorem}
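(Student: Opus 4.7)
The plan is to adapt the $\tfrac12$-liquid-welfare bound known for static pacing equilibria \cite{DBLP:conf/wine/AggarwalBM19,DBLP:conf/innovations/Babaioff0HIL21} to the dynamic learning setting, absorbing the deviation from equilibrium into an additive $O(\sqrt{T})$-type error via the SGD structure of \Cref{alg:bg}. The central structural tool remains the core condition \eqref{eq:model-core}, applied pointwise at each round $t$ against the benchmark single-round allocation rule $\mathbf{y}$.

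I would start by fixing a round $t$ and invoking \eqref{eq:model-core} with the subset $S_t = \{k : \mu_{k,t} = 0\}$ of currently unpaced bidders; by the \emph{no unnecessary pacing} clause of \Cref{def:gen-pacing}, $b_{k,t} = v_{k,t}$ for each $k \in S_t$, so the inequality reduces to
\[
    \sum_{k \notin S_t} p_{k,t} \;+\; \sum_{k \in S_t} v_{k,t}\, x_{k,t}
    \;\geq\; \sum_{k \in S_t} v_{k,t}\, y_k(\mathbf{v}_t).
\]
Summing over $t$ and swapping summation orders yields, per agent, $A_k + C_k \geq D_k$, where $A_k = \sum_{t : \mu_{k,t}>0} p_{k,t}$ (total paced-round payments), $C_k = \sum_{t : \mu_{k,t}=0} v_{k,t}\, x_{k,t}$ (total unpaced-round value received), and $D_k = \sum_{t : \mu_{k,t}=0} v_{k,t}\, y_k(\mathbf{v}_t)$ (total unpaced-round benchmark value). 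By individual rationality in the core auction, $A_k \leq \sum_{t : \mu_{k,t}>0} v_{k,t}\, x_{k,t}$, so $A_k + C_k \leq \sum_t v_{k,t}\, x_{k,t}$; combining with the budget bound $A_k \leq B_k$ and a case split on whether $\sum_t v_{k,t}\, x_{k,t}$ exceeds $B_k$ (in which case $W_k(\mathbf{x}) = B_k \geq \overline{W}_k(\mathbf{y},F)$ trivially dominates) bounds the left-hand side by $O(1) \cdot W_k(\mathbf{x})$, mirroring the static ``factor $2$'' analysis and producing the claimed $\tfrac12$ prefactor.

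The main obstacle is to lower-bound the right-hand side $\sum_k D_k$ by the ex ante benchmark $\overline{W}(\mathbf{y},F)$ up to the stated regret, since the set $S_t$ is history-adapted and the expectation does not pass through trivially. I would handle this per agent via the decomposition
\[
    D_k \;=\; \sum_{t=1}^T v_{k,t}\, y_k(\mathbf{v}_t) \;-\; \sum_{t : \mu_{k,t}>0} v_{k,t}\, y_k(\mathbf{v}_t),
\]
applying Azuma--Hoeffding to the first sum (a martingale with increments bounded by $\bar v$ whose expectation is $T\,\mathbb{E}_F[v_k y_k(\mathbf{v})] \geq \overline{W}_k(\mathbf{y},F)$), and bounding the ``excess'' second term by $\bar v \cdot |\{t : \mu_{k,t}>0\}|$. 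The key use of the SGD dynamics is to show, via telescoping the multiplier update in Line~\ref{alg:bg:line-update} exactly as in the proof of \Cref{lem:bg1}, that whenever $|\{t : \mu_{k,t}>0\}|$ is a constant fraction of $T$, the total spending $\sum_t z_{k,t}$ lies within $O(\bar\mu/\epsilon_k)$ of $B_k$, so that agent $k$'s liquid welfare $W_k(\mathbf{x})$ is itself $\approx B_k = \rho_k T \geq \overline{W}_k(\mathbf{y},F)$ and the excess is absorbed into the already-strong contribution to $W(\mathbf{x})$.

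Finally, union-bounding the Azuma concentration over the $n$ agents and tuning $\epsilon_k \sim 1/\sqrt{T}$ together with $\bar\mu \sim \bar v/\rho_k$ at the thresholds of \Cref{lem:bg1} balances the projection slack against the concentration gap and yields the claimed $O(n\bar v\sqrt{T\log(\bar v nT)})$ remainder.
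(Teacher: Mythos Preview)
Your high-level plan---apply the core inequality at each round to the set $S_t$ of unpaced bidders, then control the ``benchmark value on paced rounds'' via concentration and the multiplier dynamics---is broadly right, but the key step fails. The telescoping claim is false: $|\{t:\mu_{k,t}>0\}|=\Theta(T)$ does \emph{not} force $\sum_t z_{k,t}$ to lie within $O(\overline{\mu}/\epsilon_k)$ of $B_k$. For a counterexample take $\rho_k=1$ and let $z_{k,t}=1+\delta$ on odd $t$ and $z_{k,t}=0$ on even $t$ (small $\delta>0$); then $\mu_{k,t}$ alternates $0,\epsilon_k\delta,0,\ldots$, so half the rounds are paced yet total spend is only $\approx B_k/2$. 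What telescoping actually yields (after accounting for projections at $0$) is the much weaker $\sum_t z_{k,t}\ge\rho_k\cdot|\{t:\mu_{k,t}>0\}|$. Your crude excess bound $\overline{v}\cdot|\{t:\mu_{k,t}>0\}|$ therefore cannot be absorbed into $W_k\le B_k=\rho_k T$, since $\overline{v}/\rho_k$ may be arbitrarily large. (A secondary issue: summing the core inequality over $t$ gives only the aggregate $\sum_k(A_k+C_k)\ge\sum_k D_k$, not the per-agent inequality you state.)

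The paper closes this gap with two ingredients you are missing. First, a WLOG reduction (via downward-closedness of $X$) to benchmarks with $\mathbb{E}_{\mathbf v\sim F}[y_k(\mathbf v)v_k]\le\rho_k$, so that Azuma bounds the benchmark value on paced rounds by $\rho_k$ per round rather than $\overline{v}$. Second, an \emph{epoch} decomposition: on any maximal interval where $\mu_{k,t}>0$ (preceded by a round with $\mu=0$), telescoping shows spending---and hence, by IR and no-overbidding, the value received---is at least $\rho_k$ per paced round. This delivers exactly a $\rho_k$-per-paced-round contribution to $W_k$ for each agent that does not exhaust its budget, matching the excess term. Note that $\epsilon_k$ cancels in this epoch telescoping, so no tuning of $\epsilon_k$ or $\overline{\mu}$ is needed; the theorem holds for \emph{any} generalized pacing algorithm, which the parameter-balancing in your final paragraph would contradict.
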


\begin{remark}
In the proof, it is not necessary for each advertiser $k$ to employ a fixed step-size $\eps_k$ throughout the entire time horizon. {Instead, $\eps_k$ can change whenever the corresponding pacing multiplier becomes $0$, and needs to stay fixed till the next time this happens.}
\end{remark}

{The rest of this section proves Theorem~\ref{thm:main.new}. We start with a proof sketch to provide intuition, then continue with a formal proof.}

\subsection{Proof Sketch and Intuition}
\label{sec:lw.intuition}


One easy observation is that since $B_k$ is an upper bound on the liquid welfare obtainable by agent $k$, any agents who are (approximately) exhausting their budgets in our dynamics are achieving optimal liquid welfare.  We therefore focus on agents who do not exhaust their budgets.  We'd like to argue that such agents are often bidding very high, frequently choosing pacing multipliers equal to $0$ (i.e., bidding their values).  This would be helpful because our auction is assumed to be a core auction, which implies that either the high-bidding agents are winning (and generating high liquid welfare) or other, budget-exhausting agents are generating high revenue for the seller (which likewise implies high liquid welfare).

Why should agents who are underspending their budget be placing high bids in many rounds?  While it's true that the pacing dynamics increases the next-round bid whenever spend is below the per-round target, this is only a local adjustment and does not depend on total spend.  We make no convergence assumptions about the dynamics and, as we show in \Cref{app:regret.example}, regret bounds do not directly imply a bound on liquid welfare.  So how do we analyze bidding patterns in aggregate across rounds?

It is here where we use the fact that agents use generalized pacing algorithms.  For each $k$, the evolution of the multipliers $\mu_{k,t}$ has the following convenient property: over any contiguous range of time steps $[t_1, t_2]$ such that the multiplier is never $0$ or $\overline{\mu}$, the total amount spent from time $t_1$ to time $t_2$ is determined by $\mu_{k,t_2} - \mu_{k,t_1}$.  More precisely,
\begin{equation}
    \label{eq:payment.dynamics}
    \sum_{t=t_1}^{t_2-1}z_{k,t} = (t_2-t_1-1)\rho_k + \frac{1}{\eps_k}(\mu_{k,t_2} - \mu_{k,t_1})
\end{equation}
which follows immediately from the update rule of $\mu_{k,t}$ on line 5 of Algorithm~\ref{alg:bg}: if the multipliers are never $0$ or $\overline{\mu}$ then the projection operator is never invoked, so we have $\mu_{k,t+1} = \mu_{k,t} - \epsilon_k(\rho_k - z_{k,t})$ and \eqref{eq:payment.dynamics} follows.

Motivated by this observation, we introduce the notion of an \emph{epoch}: essentially,
 a maximal contiguous sequence of rounds in which an agent's pacing multiplier is strictly greater than $0$.
%
%
{In Lemma~\ref{lem:stronger} we show that an agent $k$'s total spend over an epoch of length $t$ must be (approximately) $t \rho_k$.  In other words, the average spend over an epoch approximately matches the target per-round spend.  This is a direct implication of the update rule for the pacing multiplier: since multiplier increases and decreases balance out (approximately) over the course of an epoch, the budget deficits and surpluses must balance out as well.}
An immediate implication is that an agent whose total spend is much less than $T \rho_k$ must often have her pacing multiplier set equal to $0$.

To summarize, each agent either spends most of her budget by time $T$ or spends many rounds bidding her value.  It might seem that by combining these two cases we should obtain a constant approximation factor, and not just in expectation but for every realized value sequence $\boldsym{v}$.  But this is too good to be true.  Indeed, this proof sketch misses an important subtlety: whether an agent exhausts her budget or not depends on the realization of the value sequence, which is also correlated with the benchmark allocation $\boldsym{y}$.  For example, what if an agent under-spends her budget precisely on those value sequences where it would have been optimal (according to the benchmark $\boldsym{y}$) for her liquid welfare to equal her budget?  This could result in a situation where, conditional on $\mu_{k,t} = 0$, the value obtained by the benchmark is much higher than expected and cannot be approximated.  To compare against the liquid welfare of $\boldsym{y}$ we must control the extent of this correlation.  To this end we employ a variation of the Azuma-Hoeffding inequality (Lemma~\ref{lem:concentration}) to argue that since value realizations are independent across time, the pacing sequence and the benchmark allocation cannot be too heavily correlated.



\subsection{Proof Preliminaries}

 Before getting into the details of the proof of \Cref{thm:main.new}, we first introduce some notation and define an epoch more formally.
 Write $\mu_{k,t}=\flat$ if at time $t$, the agent's algorithm has stopped, i.e., if the agent is out of money or if $t=T+1$. For $t_1\leq t_2\in \mathbb{N}$, we slightly abuse notation and write $[t_1,t_2]\triangleq \{t_1,\ldots,t_2\}$ to be the set of integers between them, inclusive, when the meaning is clear. 
 Similarly, we write $[t_1,t_2)\triangleq \{t_1,\ldots,t_{2}-1\}$ for the half-open set of integers between them and analogously for $(t_1,t_2]$ and $(t_1,t_2)$.

\begin{definition}
\label[definition]{defn:epoch}
Fix an agent $k$, time horizon $T$, and a sequence of pacing multipliers $\mu_{k,1},\ldots,\mu_{k,T}$. A half-open interval $[t_1,t_2)$ is an \textbf{epoch} with respect to these multipliers if it holds that $\mu_{t_1}=0$ and $\mu_{t}>0$ for each $t_1<t<t_2$ and $t_2$ is maximal with this property.
\end{definition}

\begin{remark}
Because we initialized any generalized pacing algorithm at $\mu_{k,1}=0$ for all $k\in [n]$, the epochs completely partition the set of times the agent is bidding. Moreover, if the pacing multipliers at times $t_1$ and $t_1+1$ are both zero, then $[t_1, t_1+1)$ is an epoch; we refer to this as a trivial epoch.
\end{remark}

The following lemma shows that an agent's total spend over a maximal epoch can be bounded from below by an amount roughly equal to the target spend rate $\rho_k$ times the epoch length, plus an adjustment for the first round of the epoch. This lemma is crucial in proving \Cref{thm:main.new} and is proved in \Cref{app:proofs}.



\begin{lemma}
\label[lemma]{lem:stronger}
Fix an agent $k$, fix any choice of core auction and any sequence of bids of other agents $(\boldsym{b}_{-k,1}, \dotsc, \boldsym{b}_{-k,T})$.  Fix any realization of values $(v_{k,1}, \dotsc, v_{k,T})$ for agent $k$ and suppose that $\mu_{k,1},\ldots,\mu_{k,T}$ is the sequence of multipliers generated by the generalized pacing algorithm given the auction format and the other bids.  Then for any epoch $[t_1,t_2)$ where $\mu_{k,t_2} \neq \flat$, we have
\begin{equation}
\label{eq:val}
    \sum_{t=t_1}^{t_2-1} x_{k,t}v_{k,t}\geq x_{k, t_1}v_{k,t_1} - z_{k,t_1}+\rho_k \cdot (t_2-t_1-1).
\end{equation}
\end{lemma}

Next, motivated by the intuition in Section~\ref{sec:lw.intuition}, we introduce a concentration inequality that will be helpful for our analysis.  Roughly speaking, we will use this lemma to show that the sequence of values obtained by agent $k$ in the benchmark on rounds in which $\mu_{k,t} = 0$ are not ``far from expectation,'' in the sense that the total expected value obtained over such rounds is not much greater than $\rho_k$ per round.

\begin{lemma}
\label[lemma]{lem:concentration}
Let $Y_1,\ldots,Y_T$ be random variables and $\mathcal{F}_0\subseteq\ldots\subseteq\mathcal{F}_T$ be a filtration such that:
\begin{enumerate}
    \item $0\leq Y_t\leq \overline{v}$ with probability $1$ for some parameter $\overline{v} \geq 0$ for all $t$.
    \item $\mathbb{E}[Y_t]\leq \rho$ for some parameter $\rho\geq 0$ for all $t$.
    \item For all $t$, $Y_t$ is $\mathcal{F}_t$-measurable but is independent of $\mathcal{F}_{t-1}$.
\end{enumerate}
Suppose that $X_1,\ldots,X_n\in [0,1]$ are random variables such that $X_t$ is $\mathcal{F}_{t-1}$-measurable. Then
\begin{equation}
\label{eq:concentration}
    \textstyle \Pr\rbr{\sum_{t=1}^T X_tY_t+(1-X_t)\rho
    \geq \rho\cdot T + \theta}\leq \exp\left(\frac{-2\theta^2}{T\overline{v}^2}\right).
    \end{equation}
\end{lemma}

The proof of Lemma~\ref{lem:concentration} appears in \Cref{app:proofs}.  We are now ready to prove Theorem~\ref{thm:main.new}.


\subsection{Proof of \Cref{thm:main.new}}

We first introduce some notations.  Fix the generalized pacing dynamics algorithm used by each agent. We will then write $\boldsym{x} = \{ x_{k,t} \}_{k\in [n],t\in [T]}$ for the random variable corresponding to the allocation obtained under these bidding dynamics, given the values $\{ v_{k,t} \}$.  We also write $\mu_{k,t}$ for the pacing multiplier of agent $k$ in round $t$, and $z_{k,t}$ for the realized spend of agent $k$ in round $t$, which are likewise random variables. For notational convenience we will write $\mathsf{WEL}_{\GPD}(\boldsym{v})$ for the liquid welfare (where \GPD stands for ``Generalized Pacing Dynamics") given valuation sequence $\boldsym{v}$.  That is,
\begin{equation}
\textstyle
     \mathsf{WEL}_{\GPD}(\boldsym{v})
     \triangleq \sum_{k=1}^n\;
     \min\cbr{B_k,\sum_{t=1}^T\; x_{k,t}v_{k,t}}.
 \end{equation}
We also write $\mathsf{WEL}_{k,\GPD}(\boldsym{v})$ for the liquid welfare obtained by agent $k$, and $\mathsf{WEL}_{\GPD}(F)$ for the total expected liquid welfare $\mathbb{E}_{\boldsym{v} \sim F^T}[\mathsf{WEL}_{\GPD}(\boldsym{v})]$.


We claim that to prove \cref{thm:main.new}, it is sufficient to show that inequality \eqref{eq:wel bound} holds for any allocation rule $\boldsym{y}$
such that
\begin{align}\label{eq:y constraint}
\mathbb{E}_{\boldsym{v}\sim F}[y_k(\boldsym{v})v_{k}]\leq \rho_k
\quad\text{ and }\quad
\overline{W}_k(\boldsym{y},F) = T\cdot \mathbb{E}_{\boldsym{v}\sim F}[y_k(\boldsym{v})v_{k}]\leq \rho_k \cdot T
\qquad \forall k\in [n].
\end{align}
This is sufficient because {if} one of the above conditions is violated, one can always decrease the allocation for agent $k$, which maintains the feasibility (since we assume that the set of feasible allocations $X$ is downward closed)
without affecting the ex ante liquid welfare $\overline{W}(\boldsym{y},F)$.
We will therefore assume without loss that $\boldsym{y}$ satisfies \eqref{eq:y constraint}.

Preliminaries completed, we now prove Theorem~\ref{thm:main.new} in three steps.  First, we will define a ``good'' event in which the benchmark allocations are not too heavily correlated with the pacing multipliers of the generalized pacing dynamics, and show that this good event happens with high probability.  Second, for all valuation sequences $\boldsym{v}$ that satisfy the good event, we bound the liquid welfare obtained by the pacing dynamics in terms of the benchmark allocation and the payments collected by the auctioneer.  In the third and final step we take expectations over all valuation profiles to bound the expected liquid welfare.

\medskip \noindent
\textbf{Step 1: A Good Event.}
For each agent $k\in [n]$, we define the following quantity, whose significance will become apparent shortly:
\begin{equation*}
    R_k(\boldsym{v}) \triangleq
    \textstyle \sum_{t=1}^T\; \left[\boldsym{1}\{\mu_{k,t} = 0\}y_{k}(\boldsym{v})v_{k,t} + \boldsym{1}\{\mu_{k,t}\neq 0\}\rho_k\right].
\end{equation*}
$R_k(\boldsym{v})$ is the total value obtained by agent $k$ under allocation rule $\boldsym{y}$, except that in any round in which $\mu_{k,t} \neq 0$ this value is replaced by the target spend $\rho_k$.

We would like to apply \Cref{lem:concentration} to bound $R_k(\boldsym{v})$.  Some notation: we will write $Y_t = y_{k}(\boldsym{v}_t)v_{k,t}$ and $X_t = \boldsym{1}\{\mu_{k,t}=0\}$ with $\mathcal{F}_t = \sigma(\boldsym{v}_1,\ldots,\boldsym{v}_t)$.
Then because $\mu_{k,t}$ is $\mathcal{F}_{t-1}$ measurable and the sequence $\{ \boldsym{v}_j \}$ is a sequence of independent random variables, \Cref{lem:concentration} implies that with probability at least $1-1/(\overline{v}nT)^2$, we have
\begin{equation*}
    \textstyle \sum_{t=1}^T\; \left[\boldsym{1}\{\mu_{k,t} = 0\}y_{k}(\boldsym{v})v_{k,t} + \boldsym{1}\{\mu_{k,t}\neq 0\}\rho_k\right] \leq \rho_k \cdot T + \overline{v}\sqrt{T\log(\overline{v}nT)}.
\end{equation*}
Taking a union bound over $k\in [n]$,
with probability at least $1-1/(\overline{v}nT)^2$ over the randomness in the sequence $\boldsym{v}=(\boldsym{v}_1,\ldots,\boldsym{v}_T)$,
we have that
\begin{equation}
\label{eq:Rk}
    \quad R_k(\boldsym{v}) \leq \rho_k\cdot T +\overline{v}\sqrt{T\log(\overline{v}nT)},
    \qquad \forall k\in [n].
\end{equation}

We will write $E_\mathsf{GOOD}$ for the event in which \eqref{eq:Rk} holds.  Going back to the intuition provided before the statement of Lemma~\ref{lem:concentration}, $E_\mathsf{GOOD}$ is the event that the value each agent obtains in the benchmark allocation $\boldsym{y}$ is not ``too high'' on rounds in which their pacing multipliers are $0$.

\medskip \noindent
\textbf{Step 2: A Bound on Liquid Welfare For ``Good'' Value Realizations.}
Fix any realized sequence $\boldsym{v}_1,\ldots,\boldsym{v}_T$ such that \Cref{eq:Rk} holds.
We will now proceed to derive a lower bound on the liquid welfare of the agents (under allocation $\boldsym{x}$) by considering the two different possible cases for $\mathsf{WEL}_{k,\GPD}(\boldsym{v})$.
Recall that the liquid welfare $\mathsf{WEL}_{k,\GPD}(\boldsym{v})$ of any agent $k$ is either $B_k = \rho_k\cdot T$ or is $\sum_{t=1}^T x_{k,t} v_{k,t}$.
For any $k$ such that $\mathsf{WEL}_{k,\GPD}(\boldsym{v}) = B_k$, we obtain via \Cref{eq:Rk} that
\begin{equation}
\label{eq:highagents}
    \mathsf{WEL}_{k,\GPD}(\boldsym{v}) = B_k \geq R_k(\boldsym{v}) - \overline{v}\sqrt{T\log(\overline{v}nT)}.
\end{equation}

We now consider all of the remaining agents.  Let $A \subseteq [n]$ be the set of agents $k$ such that $\sum_{t=1}^Tx_{k,t}v_{k,t}<B_k$ on this realized sequence $\boldsym{v}$, and hence $\mathsf{WEL}_{k,\GPD}(\boldsym{v}) = \sum_{t=1}^Tx_{k,t}v_{k,t}$.
That is,
each of their contributions to the liquid welfare on this realized sequence is uniquely determined by their true value for winning the items in the auction.
Note that this further implies that no agent in $A$ runs out of budget early because the generalized pacing algorithm does not allow overbidding (i.e., bidding above the value).
Thus for all $k\in A$,
$\mu_{k,T}\neq \flat$
and $\mu_{k,t}\geq 0$ surely for all $t$.
We claim that
\begin{equation}
\label{eq:stronger.new}
    \sum_{k\in A} \mathsf{WEL}_{k,\GPD}(\boldsym{v})=\sum_{k \in A}\sum_{t=1}^{T} x_{k,t}v_{k,t}\geq \sum_{k \in A}R_k(\boldsym{v}) - \sum_{k \in [n]}\sum_{t=1}^{T}z_{k,t} .
\end{equation}
To show that the inequality holds, we partition the interval $[1,T]$ into maximal epochs for each agent $k$ and bound the value obtained by agent $k$ on each maximal epoch separately.
Fix any agent $k \in A$ and suppose that $[t_1,t_2)$ is a maximal epoch inside $[1,T]$.
Since agent $k$ does not run out of budget on the entire interval $[1,T]$, she does not run out of budget on this epoch in particular.
We can therefore apply Lemma \ref{lem:stronger} to obtain
\begin{equation*}
    \sum_{t=t_1}^{t_2-1} x_{k,t}v_{k,t}\geq x_{k, t_1}v_{k,t_1} - z_{k,t_1}+\rho_k \cdot (t_2-t_1-1).
\end{equation*}
Since $[1,T]$ can be partitioned into maximal epochs for each agent $k\in A$, we can sum over all time periods and apply the definition of a maximal epoch to conclude that
\begin{equation*}
\textstyle    \sum_{t=1}^{T} x_{k,t}v_{k,t} \geq \sum_{t=1}^{T}[\boldsym{1}\{\mu_{k,t}=0\}(x_{k,t}v_{k,t} - z_{k,t}) + \boldsym{1}\{\mu_{k,t}\neq 0\}\cdot \rho_k].
\end{equation*}
Summing over all $k \in A$ and changing the order of the summations implies that
\begin{equation}
\label{eq:group.sum.new}
        \sum_{k \in A}\sum_{t=1}^{T} x_{k,t}v_{k,t} \geq \sum_{t=1}^{T}\sum_{k \in A}[\boldsym{1}\{\mu_{k,t}=0\}(x_{k,t}v_{k,t} - z_{k,t})] + \sum_{k \in A}\sum_{t = 1}^{T}\boldsym{1}\{\mu_{k,t}\neq 0\}\cdot \rho_k.
\end{equation}
We will now use the assumption that the auction is a core auction.  For any $t \in [1,T]$, let $S \subseteq A$ be the set of agents in $A$ for which $\mu_{k,t} = 0$.  We have that
\begin{align*}
&\textstyle
\sum_{k \in A}[\boldsym{1}\{\mu_{k,t}=0\}(x_{k,t}v_{k,t} - z_{k,t})]
= \sum_{k \in S}(x_{k,t}v_{k,t} - z_{k,t}) \\
&\qquad \geq \sum_{k \in S}y_k(\boldsym{v}_t)v_{k,t} - \sum_{k =1}^{n}z_{k,t}
= \sum_{k \in A}\boldsym{1}\{\mu_{k,t}=0\}y_k(\boldsym{v}_t)v_{k,t} - \sum_{k=1}^n z_{k,t}.
\end{align*}
The inequality follows from the definition of a core auction and the no unnecessary pacing condition (which implies $b_{k,t}=v_{k,t}$ for all $k\in S$), by considering the deviation in which the agents in $S$ jointly switch to allocation $\{ y_k(\boldsym{v}_t) \}$.
Substituting the above inequality into \Cref{eq:group.sum.new} and rearranging yields \Cref{eq:stronger.new}.

Summing over \Cref{eq:highagents} for each $k\not\in A$ and combining it with \Cref{eq:stronger.new}, we obtain that as long as inequality \Cref{eq:Rk} holds (i.e., event $E_\mathsf{GOOD}$ occurs), we have
\begin{equation}
\label{eq:RkLB}
    \sum_{k\in [n]} \mathsf{WEL}_{k,\GPD}(\boldsym{v})\geq \sum_{k\in [n]} R_k(\boldsym{v}) - \sum_{k\in [n]}\sum_{t\in[T]} z_{k,t} - n\overline{v}\sqrt{T\log(\overline{v}nT)}.
\end{equation}

\medskip \noindent
\textbf{Step 3: A Bound on Expected Liquid Welfare.}
Since the liquid welfare is nonnegative, we can take expectations over $\boldsym{v}_1,\ldots,\boldsym{v}_T$ to conclude from \eqref{eq:RkLB} that
\begin{align}
\label{eq:group.sum2.new}
    &\textstyle\E\left[\sum_{k=1}^n \mathsf{WEL}_{k,\GPD}(\boldsym{v})\right]
    \textstyle \geq \E\left[\boldsym{1}\{E_\mathsf{GOOD}\}\cdot\sum_{k=1}^n \mathsf{WEL}_{k,\GPD}(\boldsym{v})\right] \nonumber\\
    &\qquad\qquad\geq \textstyle  \E\left[\boldsym{1}\{E_\mathsf{GOOD}\}\cdot\sum_{k\in [n]} R_k(\boldsym{v})\right]
    - \E\left[ \sum_{k = 1}^n\sum_{t = 1}^{T} z_{k,t} \right]
    -n\overline{v}\sqrt{T\log(\overline{v}nT)},
\end{align}
where the last inequality holds via \Cref{eq:RkLB}.

It remains to analyze the expectations on the right side of the inequality. First, note that
\begin{align*}
\textstyle
    \E\left[\boldsym{1}\{E_\mathsf{GOOD}\}\sum_{k\in [n]} R_k(\boldsym{v})\right]
    &=\textstyle
    \E\left[\sum_{k\in [n]} R_k(\boldsym{v})\right] - \E\left[\left(1-\boldsym{1}\{E_\mathsf{GOOD}\}\right)\sum_{k\in [n]} R_k(\boldsym{v})\right]\\
    &\geq \textstyle
    \E\left[\sum_{k\in [n]} R_k(\boldsym{v})\right] - \frac{n\overline{v}T}{n\overline{v}T^2}
    =\E\left[\sum_{k\in [n]} R_k(\boldsym{v})\right] - 1/T.
\end{align*}
The inequality holds due to the fact that $R_k(\boldsym{v})\leq \overline{v}T$ as well as our bound on the probability that event \Cref{eq:Rk} does not hold.
Let $q_{k,t}$ be the unconditional probability that $\mu_{k,t}=0$. Then
\begin{align*}
    \E\sbr{R_k(\boldsym{v})}
    &=\textstyle \sum_{t=1}^T \E\sbr{\boldsym{1}\{\mu_{k,t}=0\}
    \E[y_{k}(v_{k,t})v_{k,t}\vert \mathcal{F}_{t-1}]+\boldsym{1}\{\mu_{k,t}\neq 0\}\rho_k} \\
    &= \textstyle\sum_{t=1}^T[q_{k,t}\E[y_k(v_k)v_k]+(1-q_{k,t})\rho_k]
    \geq \sum_{t=1}^T \E[y_k(v_k)v_k] = \overline{W}_k(\boldsym{y},F).
\end{align*}
The first inequality uses the conditional independence of $y_{k}(v_{k,t})v_{k,t}$ on $\mu_{k,t}$, as this is already determined by time $t-1$.
The inequality holds since $\mathbb{E}[y_k(v_k)v_k]\leq \rho_k$ according to our assumption on~$\boldsym{y}$.
Substituting the inequalities into \Cref{eq:group.sum2.new}, we obtain that
\begin{equation}
\label{eq:wel.gpd}
\mathsf{WEL}_{\GPD}(F)
\geq
\sum_{k=1}^n \overline{W}_k(\boldsym{y},F)- \mathbb{E}\left[ \sum_{k = 1}^n\sum_{t = 1}^{T} z_{k,t} \right]-n\overline{v}\sqrt{T\log(\overline{v}nT)}-1/T.
\end{equation}
Recall that
\begin{equation}
    \mathbb{E}\left[ \sum_{k = 1}^n\sum_{t = 1}^{T} z_{k,t} \right] = \sum_{k = 1}^n\sum_{t=1}^{T}\mathbb{E}\left[z_{k,t}\right]=\sum_{k=1}^n \mathbb{E}[P_k]\leq \mathsf{WEL}_{\GPD}(F),
\end{equation}
where $P_k$ is the total expenditure of agent $k$, and this is upper bounded by the liquid value they obtain. Substituting into \eqref{eq:wel.gpd} and rearranging the terms, then noting that $\mathsf{WEL}_{\GPD}(F)$ is precisely the left-hand side of \eqref{eq:wel bound}, we conclude that inequality \eqref{eq:wel bound} holds.



\section{Individual Guarantee: Vanishing Regret for MBB Auctions}
\label{sec:regret}

In this section, we supplement our aggregate welfare guarantees with bounds on the individual performance of \Cref{alg:bg} when used in any auction that satisfies the monotone bang-for-buck (MBB) condition. We focus on a particular single agent $k$, henceforth called simply \emph{the agent}. The agent faces an online bidding problem subject to the budget constraint. The action set consists of pacing multipliers $\mu\geq 0$ like in \Cref{alg:bg}.  That is, the agents are restricted to linear values-to-bids policies without overbidding. We show that under mild conditions on the environment, the agent achieves regret bounds for  utility maximization; in fact, we show that the algorithm achieves regret bounds for any combination of utility and value maximization.


\subsection{Environment and Notation}

We make no explicit assumptions on the individual behavior of the other agents. Their bidding strategies $\boldsym{b}_{-k}$ can depend arbitrarily on the realized values and the observed history.  This is desirable since the agents may be reluctant to follow a particular bidding algorithm, and even when they do, the aggregate behavior is not well-understood.

In each round $t$, let $G_t$ be the joint distribution of value $v_{k,t}$ and other agents' bids $\boldsym{b}_{-k,t}$ given the history in the previous rounds. Thus, one can think of pair
    $\rbr{v_{k,t},\, \boldsym{b}_{-k,t}}$
as being drawn from a distribution $G_t$ that could depend on the history up to time $t$. In particular, $G_t$ could depend on the previous distributions $G_1 \LDOTS G_{t-1}$ and the agent's past bids.

We can think of the sequence $G_1 \LDOTS G_T$ as specifying the environment that the agent operates in. Our main guarantee in \Cref{thm:regmain} considers an \emph{adversarial environment}, \ie an arbitrary sequence. We also consider important special cases, particularly the \emph{stochastic environment} where $G_t$ is the same in all rounds.

The dependence of the agent's problem on $G_t$ is captured via the following notation.
Let
    $V_t(\mu_{k,t})$ and $Z_t(\mu_{k,t})$
be, respectively, the agent's expected value and expected payment in round $t$ for a particular multiplier $\mu_{k,t}$, where the expectation is taken over the distribution $G_t$. More formally, for a given auction defined by an allocation rule $\boldsym{x}$ and a payment rule $\boldsym{p}$ (see Section \ref{sec:model})
and a given distribution $G_t$
we have
\begin{align}
Z_{t} (\mu_{k,t} )
    = \E\sbr{p_k \rbr{b_{k,t}, \boldsym{b}_{-k,t}}}
    \text{ and }
V_{t}(\mu_{k,t})
    = \E\sbr{ v_{k,t}\cdot x_k \rbr{ b_{k,t}, \boldsym{b}_{-k,t}}},
\end{align}
where the bid is
    $b_{k,t} = v_{k,t}/(1+\mu_{k,t})$
and the expectations are over $\rbr{v_{k,t}, \boldsym{b}_{-k,t}} \sim G_t$. Under the same conventions, agent's expected (quasilinear) utility $U_t(\cdot)$ is then
\begin{equation*}
    U_t(\mu_{k,t}) = \E\sbr{ v_{k,t}\cdot x_k \rbr{ b_{k,t}, \boldsym{b}_{-k,t}}-p_k \rbr{b_{k,t}, \boldsym{b}_{-k,t}}}=V_t(\mu_{k,t})-Z_t(\mu_{k,t}).
\end{equation*}
Note that, like $G_t$, the functions $V_t(\cdot)$, $Z_t(\cdot)$ and $U_t(\cdot)$ can depend on the observed history up to round $t$. Monotonicity of the auction allocation and payment rules implies that, for any fixed history, $Z_t$ and $V_t$ are monotonically non-increasing.

In what follows, we drop the dependence on the agent's index $k$ from our notation.
For each round $t$, we write $v_t = v_{k,t}$, $\mu_t=\mu_{k,t}$ for the agent's value and multiplier, and $b_t = v_t/(1+\mu_t)$ for its bid. Also, we write $\rho = \rho_k$ and $\eps = \eps_k$.

\subsection{Unified Objective and Pacing Regret}

For a unified presentation of both utility- and value-optimization, we employ a unified objective: an arbitrary convex combination of the two. Specifically, the unified objective is defined as
\begin{align}\label{eq:uniObj}
\uniObj(\mu_1 \LDOTS \mu_T)
    := \sum_{t\in[\tau_k]} \gamma\cdot U_t(\mu_t)
        + (1-\gamma)\cdot V_t(\mu_t),
\end{align}
where $\gamma\in[0,1]$ is a fixed parameter and $\tau_k$ is the stopping time (\ie the first time when the budget is not strictly positive). The $t$-th summand on the right-hand side of \eqref{eq:uniObj} is denoted $\uniObj[t](\mu_t)$.



Our regret bounds are relative to a non-standard benchmark: essentially, a sequence of pacing multipliers
    $\mu^*_1 \LDOTS \mu^*_T\in [0,\overmu]$
with expected spend $Z_t(\mu^*_t)=\rho$ in each round $t$. The formal definition needs to account for the possibility that $Z_t(0)<\rho$:

\begin{definition}
For each round $t$ and any auction history up to round $t$, a \emph{perfect pacing multiplier} is any $\mu^*_t\in [0,\overmu]$ such that $Z_t(\mu^*_t)=\rho$, or $\mu^*_t = 0$ if $Z_t(0) < \rho$. A \emph{perfect pacing sequence} (for a given auction history up to round $T$) is a sequence $\mu^* = (\mu^*_1 \LDOTS \mu^*_T)$, where each $\mu^*_t$ is a perfect pacing multiplier.
\end{definition}

Our assumptions (stated below) ensure that such $\mu^*$ exists, although in general it need not be unique. However, all perfect pacing sequences $\mu^*$ have the same
unified objective $\uniObj(\mu^*)$.
\footnote{This follows by MBB. Fix the history up to some round $t$. Taking expectations on both sides of \refeq{eq.mbb.1}, one obtains
    $Z_t(\nu_t')-Z_t(\nu_t) \geq \tfrac{1}{1+\nu_t} \rbr{V_t(\nu_t') - V_t(\nu_t)}$
for any two pacing multipliers $\nu_t>\nu_t'$. If they are \emph{perfect} pacing multipliers for round $t$, then $Z_t(\nu_t)=Z_t(\nu_t') = \rho$ and consequently $V_t(\nu_t) = V_t(\nu_t')$. It follows that $\uniObj[t](\nu_t)= \uniObj[t](\nu_t')$.
}

Thus, we treat $\uniObj(\mu^*)$ as a benchmark, and define \emph{pacing regret} relative to this benchmark:
\begin{align}\label{eq:def-pacing-regret}
    \PaceReg(T)
        :=  \E\sbr{\uniObj(\mu^*) - \uniObj(\mu_1 \LDOTS \mu_T)},
\end{align}
where $\mu^*$ is (any) perfect pacing sequence for a given auction history up to round $T$.

\begin{remark}
We emphasize that the benchmark $\uniObj(\mu^*)$ is defined with respect to the  realized history of bids.
As with most regret guarantees in the literature, the benchmark does not include the counterfactual impact of a change of bid by agent $k$ in round $t$ on future bids by other agents. Rather, our pacing regret guarantees apply to the benchmark of perfect pacing in hindsight assuming that other bidders do not change their bidding behavior in response.
\end{remark}

\begin{remark}
A perfect pacing sequence $\mu^*$ is not necessarily best-in-hindsight. That is, fixing the entire history, $\mu^*$ does not necessarily optimize $\uniObj(\cdot)$ among all sequences $\nu\in[0,\overmu]^T$. In fact, $\mu^*$ may even be worse in terms of $\uniObj(\cdot)$ than the best fixed pacing multiplier. This is by design.
Recall from \cref{sec:related} that one cannot obtain sublinear regret in an adversarial environment, even against the best fixed pacing multiplier, because of the spend-or-save dilemma \citep{AdvBwK-focs19,BalseiroGur19}.
While inevitable, this situation may feel  unsatisfying. One interpretation is that the standard benchmark of the best-in-hindsight multiplier is ``too hard" in the adversarial environment, and a more ``fair" alternative benchmark is needed to make progress and properly express the benefits of algorithms such as ours. The reason our benchmark admits vanishing regret is precisely that it gives up on solving the spend-or-save dilemma, and instead optimizes for each round separately.
\end{remark}

\subsection{Smoothness Assumptions}

We make the following smoothness assumptions on the expenditure function $Z_t$:

\begin{assumption}
\label[assumption]{assn:regularity2}
There exists $\lambda \geq 0$ such that $Z_t$ is $\lambda$-Lipschitz for each round $t$.
\end{assumption}


\begin{assumption}
\label[assumption]{assn:boundedMBB}
There exists $\delta > 0$ such that $Z_t(0)\geq \delta$ for each round $t$. As our bounds will depend inversely on $\delta$, we assume without loss of generality that $\delta\leq \rho$.%
\footnote{We only require the weaker condition: $Z_t(0)\geq \delta \times \tfrac{V_t(0)}{\overline{v}}$ for all $t$.  In particular, this allows $Z_t(0)=0$ if $V_t(0) = 0$.}
\end{assumption}

\noindent Note that these assumptions hold for any history up to round $t$.


\begin{remark}\label{rem:assns}
We argue that these assumptions are mild in the context of advertising auctions.
Indeed, the variation in impression types, click rate estimates, etc., and possibly also the randomness in the auction and/or the bidding algorithms
would likely introduce some smoothness into the expected payment $Z_t(\cdot)$, and ensure that the maximum allowable bid would result in a non-trivial payment. In fact, one could eliminate the need for these assumptions by adding a small amount of noise to the auction allocation rule, such as by perturbing bids slightly, and/or a small but positive reserve price, at the cost of a similarly small loss of welfare.
Alternatively, the assumptions hold when all agents use bounded pacing multipliers, the joint value profile distribution is sufficiently smooth (for \cref{assn:regularity2}),%
\footnote{If the pricing rule is nondecreasing in the agent's bid given the other bids (which is the typical case), \cref{assn:regularity2} holds (for some $\lambda>0$) the agent's conditional valuation given the other bids almost surely admits a density that is bounded pointwise by some absolute constant $\sigma>0$. More generally, \cref{assn:regularity2} holds for any bounded pricing rule if the joint valuation distribution has a Lipschitz density, since then the induced density on joint bids is nicely behaved as a function of the agent's multiplier.}
and with probability at least $\eps'>0$, agent $k$ has the highest value while some other agent bids at least $\delta'>0$ (for \cref{assn:boundedMBB}).
\end{remark}

We also assume that
    $\overmu \geq \overline{v}/\rho$,
where $\overmu,\overline{v}$ are, resp., upper bounds on the largest feasible pacing multipliers and values.
Given these assumptions, we can now prove (as promised above) that a perfect pacing sequence is well-defined.



\begin{claim}\label[claim]{cl:perfect-exists}
A perfect pacing multiplier $\mu^*_t$ exists, for any given round $t$ and history up to this round.
\end{claim}

\begin{proof}
Note that
$Z_t(\overmu) \leq \overline{v}/(1+\overmu) \leq \rho$.  The Lipschitzness and monotonicity of $Z_t$ therefore imply that if there does not exist any $\mu$ such that $Z_t(\mu) = \rho$,  then it must be that $Z_t(0) < \rho$ (and hence $Z_t(\mu) < \rho$ for all $\mu \in [0, \overmu]$).  This latter case occurs if even bidding the true value $v_t$ generates expected spend less than $\rho$.  In this case we take $\mu^*_t = 0$, which corresponds to the most that the agent can bid without violating the mandate to not bid more than the value $v_t$.
\end{proof}



\subsection{Main Guarantees: Adversarial Environment}

With all definitions in place, we are ready to state our main result of this section: an upper bound on pacing regret, which holds against an arbitrary auction environment and an arbitrary perfect pacing sequence. The guarantee applies uniformly to any mixture of value- and utility-maximization objective, without any dependence on the mixing parameter $\gamma$.



\begin{theorem}\label{thm:regmain}
Consider a repeated auction that is individually rational (IR) and satisfies monotone bang-per-buck (MBB) property. Posit Assumptions \ref{assn:regularity2} and \ref{assn:boundedMBB}. Suppose the path-length
 $P^* = \sum_{t=1}^{T-1} |\mu^*_{t+1}-\mu^*_t|$
is upper-bounded by some number $P$ with probability~$1$. Fix parameter $\gamma\in [0,1]$ in the unified objective. Consider \cref{alg:bg} with step-size 
{$\eps\in (0,\overline{v}]$}.
Its pacing regret is
\begin{equation}\label{eq:thm-regmain}
\PaceReg(T) < O\rbr{\frac{\overline{v}}{{\delta}}\sqrt{2\lambda T\cdot\mathsf{REG}_1}+\frac{\overline{v}\overmu}{\eps\rho}},
\quad\text{where }
\mathsf{REG}_1
    \triangleq \frac{P+1}{\eps}\cdot \overmu^2 +
         \eps T\cdot (\rho+\overline{v})^2.
\end{equation}
\end{theorem}

\begin{corollary}\label[corollary]{cor:regmain}
In the setting of \cref{thm:regmain}, assume that parameters
    $(\overline{v},\overmu,\lambda,\delta)$
are absolute constants. The regret bounds simplify as follows:
\begin{itemize}
\item[(a)]
$\PaceReg(T) < O\rbr{\sqrt{ T \rbr{ \frac{P+1}{\eps} + \eps T }}}$.

\item[(b)]
Taking step-size $\eps=1/\sqrt{T}$ (not knowing $P$), we obtain
    $\PaceReg(T) < \sqrt{P+1}\cdot O\rbr{T^{3/4}}$.

\item[(c)] If $P$ is known, taking $\eps=\sqrt{\frac{P+1}{T}}$ yields an improved bound,
    $\PaceReg(T) < (P+1)^{1/4}\cdot O\rbr{T^{3/4}}$.
\end{itemize}
\end{corollary}



We emphasize that the environment, as expressed by distributions $G_1 \LDOTS G_T$, can change over time. The dependence on this change is summarized with a uniform upper bound $P$ on path-length
 $P^* = \sum_{t=1}^{T-1} |\mu^*_{t+1}-\mu^*_t|$.
We obtain a non-trivial guarantee for an arbitrary non-stationary environment with $P=o(T)$ if $P$ is known, and $P = o(\sqrt{T})$ otherwise.

Path-length $P^*$ is a rather \emph{weak} notion of change. Indeed, distribution $G_t$ can change a lot from one round to another while $\mu^*_t$ stays the same. Moreover, if distribution $G_t$ changes arbitrarily from one round to another and stays the same afterwards, the path-length only increases once, by at most $\overmu$.
\footnote{For concreteness: if the perfect pacing sequence has at most $N$ changes from one round to another (\eg because do does the sequence of distributions $G_t$), then \Cref{cor:regmain}(bc) holds with $P = \overmu\,N$.}
Therefore, our guarantees go far beyond slight perturbations of the same stochastic environment.

%



\begin{remark}
This is the first non-trivial regret bound for budget-constrained online bidding in the adversarial environment.
\end{remark}

Our guarantees extend beyond the original model: the marginal distribution of the agent's value and the auction itself can change over time, as long as the assumptions hold.

\begin{extension}
\Cref{thm:regmain} and \Cref{cor:regmain} hold as written if
the distribution from which the value profile $\boldsym{v}_t$ is sampled and the auction itself can arbitrarily change from one round to another, possibly depending on the previous rounds.
\end{extension}

\subsection{Regret Bounds for Stochastic Environment}
\label{sec:regret-stochastic}

Now let us consider the stochastic environment, \ie assume that the distribution $G_t = G$ does not change over time almost surely. Then neither do the functions $V_t, U_t,Z_t,\uniObj[t]$, and there exists some $\nu^*\in[0,\overmu]$ (determined by $G$) that is a perfect pacing multiplier for all rounds $t$.

We immediately obtain a regret bound against $\nu^*$ (we state it formally for completeness). Let $\uniObjFixed(\mu) := \uniObj(\mu \LDOTS \mu)$ be the unified objective when a  fixed pacing multiplier $\mu$ is used in all rounds.

\begin{corollary}\label[corollary]{cor:regret-stochastic-gen}
In the setting of \Cref{thm:regmain}, suppose distributions $G_t$ do not change over time almost surely and assume that parameters
    $(\overline{v},\overmu,\lambda,\delta)$
are absolute constants. Letting $\eps=1/\sqrt{T}$ be the step-size,
\begin{align}
\uniObjFixed(\nu^*) - \E\sbr{\uniObj(\mu_1 \LDOTS \mu_T)}
\leq O(T^{3/4}).
\end{align}
\end{corollary}

It is desirable to have a guarantee against the ``best" pacing multiplier --- one that maximizes $\uniObjFixed(\cdot)$ --- and we prove that $\nu^*$ comes close for value-maximization. It is crucial for this result that we disallow overbidding (as discussed in \Cref{sec:model}), \ie only consider $\mu\geq 0$.

\begin{corollary}\label[corollary]{cor:regret-stochastic}
Consider the setting of \Cref{cor:regret-stochastic-gen} under value-maximization ($\uniObj[t]=V_t$). Then
\begin{align}\label{eq:cor:regret-stochastic}
\sup_{\mu\in[0,\overmu]} \uniObjFixed(\mu) - \E\sbr{\uniObj(\mu_1 \LDOTS \mu_T)}
\leq O(T^{3/4}).
\end{align}
\end{corollary}

\begin{remark}
This is the first non-trivial regret bound for value-maximization in online bidding under budget, in any auction format. While the $T^{3/4}$ regret rate may be suboptimal, we emphasize that the goal of this paper is not (necessarily) to obtain optimal regret rates for a specific environment, but rather to provide a combination of non-trivial aggregate and individual guarantees.
\end{remark}


Moreover, $\nu^*$ (nearly) maximizes $\uniObjFixed(\cdot)$ for utility-maximization in repeated second-price auctions \citep{BalseiroGur19}, and \Cref{alg:bg} is known to achieve $\sqrt{T}$ regret rate in this setting \citep{BalseiroGur19,Balseiro-BestOfMany-Opre}.

\begin{corollary}[\citep{BalseiroGur19,Balseiro-BestOfMany-Opre}]
\label[corollary]{cor:regret-stochastic-BOMW}
Consider the setting of \Cref{cor:regret-stochastic-gen} for utility-maximization ($\uniObj[t]=U_t$) in repeated second-price auctions. Then
\begin{align}
\sup_{\mu\in[0,\overmu]} \uniObjFixed(\mu) - \E\sbr{\uniObj(\mu_1 \LDOTS \mu_T)}
\leq O(\sqrt{T}).
\end{align}
\end{corollary}

\begin{remark}
\citet{BalseiroGur19} prove this regret bound under additional convexity assumptions. \citet{Balseiro-BestOfMany-Opre} prove it without convexity assumptions, for a class of algorithms which (implicitly) includes \Cref{alg:bg} as a special case with Euclidean regularizer. The benchmark $\uniObjFixed(\nu^*)$ is even stronger in this setting: it matches the best bidding policy (\ie mapping from value to bid), which may not be linear \citep{BalseiroGur19}.
\end{remark}

Staying on repeated second-price auctions, we saw that $\nu^*$ maximizes $\uniObjFixed(\cdot)$ for both utility- and value-maximization. Consequently, it does so for any mixture in \eqref{eq:uniObj}. So, we again obtain \refeq{eq:cor:regret-stochastic}.

\begin{corollary}\label[corollary]{cor:regret-stochastic-second-price}
In the setting of \Cref{cor:regret-stochastic-gen} for repeated second-price auctions, \refeq{eq:cor:regret-stochastic} holds.
\end{corollary}

\medskip

\subsection{Discussion: Desiderata for Individual Guarantees}

Let's take a step back and discuss what types of results are desirable for individual guarantees. The stochastic environment is commonly understood as a ``minimal" desiderata, as a relatively simple special case for which vanishing regret is feasible. A common motivation is that a ``small" bidder enters a ``large" market in which the bidding dynamics has already converged to a (possibly randomized) stationary state. Of course, the downside is that realistic environments are neither stationary nor guaranteed to converge.

The other extreme is to analyze an adversarial environment,
like we do in \cref{thm:regmain}, possibly with improved guarantees for some ``nice" special cases. In particular, path-length $P^*$ can quantify convergence to a ``well-behaved" environment in which the perfect pacing multiplier does not change over time, so that our guarantee is strong when the convergence is sufficiently fast. However, such ``black-box" guarantees for budget-constrained bidders tend to be weak or vacuous in the worst case.

A \emph{third} approach would be to explicitly take advantage of the fact that all agents are controlled by bidding algorithms with particular properties, \eg by instances of the same bidding algorithm. In particular, if all \emph{other} agents use our algorithm with step size $\eps'$ (and parameters
    $(\overline{v},\overmu,\lambda,\delta)$
are absolute constants), one can show that the perfect pacing multiplier changes by at most $O(\eps')$ in each round, so that the path-length $P^*$ can be upper-bounded as $O(\eps' T)$. Plugging this back into \cref{cor:regmain}(b), we obtain vanishing regret with step-size $\eps' = o(1/\sqrt{T})$.

\begin{corollary}\label[corollary]{cor:regret-stochastic-2}
Consider a repeated auction that is individually rational (IR) and satisfies monotone bang-per-buck (MBB) property. Assume that parameters
 $\overline{v},\overmu$
are absolute constants.
Suppose all agents use \cref{alg:bg}, so that agent $k$ has step-size $\eps_k =1/\sqrt{T}$ and all other agents have step-size $\eps' = f(T)/\sqrt{T}$, where $f(T) \to 0$.
Posit that Assumptions \ref{assn:regularity2} and \ref{assn:boundedMBB} are satisfied (see \cref{rem:assns}) for some absolute-constant $\lambda$ and $\delta$.
Then for agent $k$, the path-length $P^*$ is at most $O(\eps' T)$ with probability $1$, so the right-hand side of \eqref{eq:thm-regmain} is at most
    $O\rbr{T\cdot \sqrt{f(T)}}$.
\end{corollary}

\noindent Note that this analysis yields
vanishing regret only for a particular agent $k$, whereas for all other agents the guarantee that comes out of our analysis is vacuous. And if all agents use the same parameter $\eps = 1/\sqrt{T}$, then we can only guarantee path-length $P^* = O(\sqrt{T})$, which does not suffice to guarantee vanishing regret.
We leave open the question of whether
it is
possible to obtain vanishing regret simultaneously for all agents (with any bidding algorithms).

\subsection{Stochastic Gradient Descent (SGD) interpretation}
\label{sec:regret-SGD}

To prove \Cref{thm:regmain} it will be helpful to interpret \Cref{alg:bg} as using \emph{stochastic gradient descent} (SGD), a standard algorithm in online convex optimization (see \Cref{app:OCO}). Here, SGD uses pacing multipliers $\mu$ as actions, and optimizes an appropriately-defined artificial objective $H_t(\cdot)$ in each round $t$, where
\begin{equation}\label{eq:def-H}
\textstyle     H_t(\mu) = \rho\cdot \mu - \int_0^{\mu} Z_t(x)\mathrm{d}x.
\end{equation}
The per-round spend $z_t$ provides a stochastic signal that in expectation equals the gradient of $H_t$:
\begin{align}
H_t'(\mu_{t})
    = \rho-Z_t(\mu_{t}) = \rho - \mathbb{E}_{G_{t}} [z_t(\mu_{t})].
\end{align}

\noindent The SGD machinery applies because the function $H_t$ is convex and $(\overline{v}+\rho)$-Lipschitz (see \Cref{app:H_t}).

This interpretation provides a concrete intuition for what \Cref{alg:bg} actually does: for better or worse, it optimizes the aggregate artificial objective
    $\sum_{t\in[T]} H_t(\mu_t)$,
whose optimum is precisely $\sum_t H_t(\mu^*_t)$.
The analysis simplifies accordingly, as we can directly invoke the known guarantees for SGD and conclude that the artificial objective comes close to this target optimum. However, outside the well-studied special case of second-price auctions, this objective does not appear to have \emph{a priori} meaning. Much of the proof of \Cref{thm:regmain} thus involves relating this artificial objective to the objectives of interest, \emph{namely utility and value maximization}, in rather general settings.

\begin{remark}\label{rem:regret-benchmark}
The fact that \Cref{alg:bg} optimizes an artificial objective whose optimum is attained by the perfect pacing sequence suggests the latter as an appropriate benchmark for analyzing this algorithm.
\end{remark}

A similar (but technically different) interpretation appeared in \citep{BalseiroGur19} in the context of second-price auctions. Their interpretation relies on Lagrangian duality and does not appear to extend beyond second-price auctions.

\begin{remark}
Our analysis only uses two properties of the dynamics. First, that the dynamics obtains low regret
with respect to an adversarially generated sequence of convex functions $H_t$ (we inherit this from SGD).
Second, since the regret analysis does not account for early stopping from exhausting the budget, we require that the dynamics do not terminate too early (as given by \Cref{lem:bg1}) so as to bound any possible loss on such periods. The proof of \Cref{thm:regmain} generalizes almost immediately to any algorithm satisfying these properties.
\end{remark}

\subsection{Proof of \Cref{thm:regmain}}
\label{sec:proof.regmain}

Our analysis proceeds in three steps. First, we invoke the SGD machinery to show that the sequence of chosen multipliers $(\mu_t)$ approximates the target optimum $\sum_t H_t(\mu^*_t)$. Second, we relate the $H_t$ functions to the per-round payments to show that the expected payment is not too far from $\rho$ in each round. Finally, we use the MBB property of the auctions
to bound the total regret in either objective.

The following technical lemma will enable us to meaningfully relate the $Z_t$ and $H_t$ functions whose proof is deferred to the Appendix:
\begin{lemma}\label[lemma]{lem:lipint}
Let $f:\mathbb{R}\to \mathbb{R}$ be an increasing, $\lambda$-Lipschitz function such that $f(0)=0$. Let $R = \int_0^x f(y)\mathrm{d}y$ for some $x\in \mathbb{R}$. Then
    $\vert f(x)\vert\leq \sqrt{2\lambda R}$.
\end{lemma}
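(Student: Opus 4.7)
The plan is to split on the sign of $x$ and exploit a simple geometric lower bound on the area under $f$ near $x$ coming from Lipschitzness. Since $f$ is increasing with $f(0)=0$, we have $f(y)\geq 0$ for $y\geq 0$ and $f(y)\leq 0$ for $y\leq 0$, so in either case $R\geq 0$ and the stated square root is meaningful.

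First consider the case $x\geq 0$. The $\lambda$-Lipschitz property together with $f(0)=0$ gives $0\leq f(x)\leq \lambda x$, so the point $x_0 \triangleq x - f(x)/\lambda$ lies in $[0,x]$. For every $y\in[x_0,x]$, Lipschitzness yields $f(y)\geq f(x)-\lambda(x-y)$, and this lower bound is non-negative precisely on $[x_0,x]$. Therefore
\begin{equation*}
R \;=\; \int_0^x f(y)\,\mathrm{d}y \;\geq\; \int_{x_0}^x \bigl(f(x)-\lambda(x-y)\bigr)\,\mathrm{d}y \;=\; \int_0^{f(x)/\lambda}\!\bigl(f(x)-\lambda u\bigr)\,\mathrm{d}u \;=\; \frac{f(x)^2}{2\lambda},
\end{equation*}
after the substitution $u=x-y$ (geometrically, this is just the area of the right triangle with legs $f(x)/\lambda$ and $f(x)$). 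Rearranging gives $|f(x)|=f(x)\leq \sqrt{2\lambda R}$, as desired.

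The case $x<0$ is entirely symmetric: now $f(y)\leq 0$ on $[x,0]$, so $R=\int_0^x f(y)\,\mathrm{d}y = \int_x^0 (-f(y))\,\mathrm{d}y \geq 0$. Lipschitzness and $f(0)=0$ give $|f(x)|\leq \lambda|x|$, and on $[x,x+|f(x)|/\lambda]\subseteq [x,0]$ we have $-f(y)\geq |f(x)|-\lambda(y-x)\geq 0$. Integrating this triangular lower bound again produces $R\geq |f(x)|^2/(2\lambda)$, and the claim follows.

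There is no real obstacle in the argument — the lemma is essentially the statement that ``a Lipschitz function's value at a point is controlled by its integral from zero,'' realized geometrically through a triangular under-approximation. The only thing to be slightly careful about is ensuring the interval on which the linear lower bound is non-negative actually lies inside $[0,x]$ (respectively $[x,0]$), which is exactly guaranteed by the bound $|f(x)|\leq \lambda|x|$ coming from Lipschitzness and $f(0)=0$.
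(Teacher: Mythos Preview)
Your proposal is correct and takes essentially the same approach as the paper: both use the Lipschitz bound $f(y)\geq f(x)-\lambda(x-y)$ to extract a triangular lower bound on the integral near $x$, with the paper phrasing this as a contradiction and reducing to $x\geq 0$ via the substitution $-f(-y)$ rather than writing out the symmetric case. One minor omission: you implicitly assume $\lambda>0$ when forming $x_0=x-f(x)/\lambda$; the paper handles $\lambda=0$ separately (there $f\equiv 0$ and the claim is trivial).
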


To begin, observe that \Cref{alg:bg}
is equivalent to running (projected) stochastic gradient descent on the sequence of $H_t$ functions from \eqref{eq:def-H},
where the maximum norm of the gradient is at most $\overline{v}$ using the assumption that the payment is at most the bid,
which in turn is at most the maximum possible value. Note that these functions are indeed convex and Lipschitz (see \Cref{app:H_t} for a proof)
and indeed, $H_t'(\mu) = \rho-Z_t(\mu) = \rho - \mathbb{E}_{G_{t}} [z_t(\mu)]$ by construction of $H_t$.  We now wish to bound the sequence regret, $\sum_t \mathbb{E}[H_t(\mu_t) - H_t(\mu^*)]$. While the dynamics may end before time $T$, we may upper bound the regret by continuing to time $T$, as each term is nonnegative.
Standard analysis of SGD (see \Cref{thm:sgd} and \Cref{app:H_t} for details) then implies
\begin{equation}
\label{eq:reginit}
\textstyle
    \mathsf{REG}_1\triangleq
    \sum_{t=1}^T\; \E\sbr{H_t(\mu_t)-H_t(\mu^*)}
    \leq O\rbr{\frac{P+1}{\eps}\cdot \overmu^2 +
         \eps T\cdot (\rho+\overline{v})^2}.
\end{equation}


In the remainder of the proof, we will translate \eqref{eq:reginit} into a bound on pacing regret. It suffices to show the regret bound in \eqref{eq:thm-regmain} for pacing regret with respect to values and utility individually; the regret guarantee then translates to the unified objective for any choice of $\gamma\in [0,1]$. It will be technically convenient to perform the analysis for values first and then derive the same bound for utility as a simple variation.

From \Cref{lem:bg1}, we may assume that the dynamics continue until time $T$ without stopping at the cost of $O(\overline{v}\overmu\sqrt{T}/\rho)$ loss in value which we will account for at the end. Define the function $W_t(\mu)$ by
\begin{equation*}
    W_t(\mu) = \begin{cases}
    V_t(\mu) & \text{if $Z_t(\mu)<\rho$}\\
    V_t(\mu)\times \frac{\rho}{Z_t(\mu)} & \text{if $Z_t(\mu)\geq \rho$}.
    \end{cases}
\end{equation*}
Observe that $W_t(\mu)\leq V_t(\mu)$ by construction; it can be viewed as (approximately) the per-round value obtained by bidding with pacing multiplier $\mu$ --- if the environment at time $t$ persisted for all rounds --- until the budget is expected to terminate due to overspending. We further observe that $V_t(\mu_t)-V(\mu^*_t)\geq W_t(\mu_t)-W_t(\mu^*_t)$ for all $t$.
To see this, note that $V(\mu^*_t) = W_t(\mu^*_t)$ since $Z_t(\mu^*_t) \leq \rho$,
and moreover,
the first term on the right side equals the first term of the left side, which is non-negative, with some scaling factor at most $1$.
Therefore,
\begin{equation}
\label{eq:VtLB}
\textstyle    \sum_{t=1}^T \left(V_t(\mu^*_t)-V(\mu_t)\right)\leq \sum_{t=1}^T \left(W_t(\mu^*_t)-W_t(\mu_t)\right),
\end{equation}
so it will suffice to show that the right side is not too large.

For any round $t$, we now consider two cases depending on the value of $\mu_t$ and derive a lower bound on the difference in $W_t$ values:
\begin{description}
    \item[\textbf{Case 1: $\mu_t<\mu^*_t$.}] In this case we must have $\mu^*_t > 0$ and hence $Z_t(\mu_t)\geq Z_t(\mu^*_t) = \rho$, and therefore
    \begin{equation*}
        W_t(\mu_t)=V_t(\mu_t)\times \frac{\rho}{Z_t(\mu_t)}\geq W_t(\mu^*_t)\times \frac{\rho}{Z_t(\mu_t)}.
    \end{equation*}
    By rearranging and adding $W_t(\mu^*_t)$ to each side of the last inequality, it follows that
    \begin{equation}
    \label{eq:regcase1}
        W_t(\mu^*_t)-W_t(\mu_t)\leq \frac{Z_t(\mu_t)-\rho}{Z_t(\mu_t)}\cdot W_t(\mu^*_t)\leq (Z_t(\mu_t)-\rho)\cdot \frac{\overline{v}}{\rho},
    \end{equation}
    where we use the fact that $Z_t(\mu_t)\geq \rho$ as well as $W_t(\mu^*_t)\leq \overline{v}$.
    \item[\textbf{Case 2: $\mu_t\geq \mu^*_t$}.] In this case, we have $Z_t(\mu_t)\leq Z_t(\mu^*_t) \leq \rho$ by assumption, and hence $W_t(\mu^*_t) = V_t(\mu^*_t)$ and $W_t(\mu_t) = V_t(\mu_t)$.
    {
    We now claim that the MBB property of the auction implies
        \begin{equation}
        \label{eq:mbb}
        \frac{W_t(\mu^*_t)}{W_t(\mu_t)}=\frac{V_t(\mu^*_t)}{V_t(\mu_t)}\leq \frac{Z_t(\mu^*_t)}{Z_t(\mu_t)}. 
    \end{equation}

    To prove \eqref{eq:mbb}, define $\gamma = Z_t(\mu_t)/V_t(\mu_t)$ and note that since the auction is individually rational, it must be that an agent bidding $v_t/(1+\mu_t)$ pays at most $v_t/(1+\mu_t)$ per unit allocated, and hence $\gamma \leq \frac{1}{1+\mu_t}$.  If we write $Z_t(\mu_t\ |\ v_t)$ for the expected value of $Z_t(\mu_t)$ conditional on the realization of $v_t$ (recalling that $v_t$ can be correlated with the competing bids), and similarly for $V_t(\mu_t\ |\ v_t)$, then by MBB and linearity of expectation
\[Z_t(\mu^*_t\ |\ v_t) - Z_t(\mu_t\ |\ v_t) \geq \frac{1}{1+\mu_t}\left( V_t(\mu^*_t\ |\ v_t) - V_t(\mu_t\ |\ v_t)\right)\]
where we used the fact that the bid given $v_t$ and $\mu_t$ is $v_t/(1+\mu_t)$, and $V_t(\cdot\ |\ v_t)$ is precisely $v_t$ times the agent's expected allocation.  Taking expectations over $v_t$ then implies $Z_t(\mu^*_t) - Z_t(\mu_t) \geq \frac{1}{1+\mu_t}\left( V_t(\mu^*_t) - V_t(\mu_t)\right)$.  Rearranging and recalling that $Z_t(\mu_t) = \gamma V_t(\mu_t)$, we conclude
\[ Z_t(\mu^*_t) \geq \gamma V_t(\mu_t) + \frac{1}{1+\mu_t}\left(V_t(\mu^*_t) - V_t(\mu_t)\right) \geq \gamma V_t(\mu^*_t)\]
which implies \eqref{eq:mbb}.}
    A rearrangement of \eqref{eq:mbb} then implies
    \begin{equation}
        \label{eq:regcase2}
        W_t(\mu^*_t)-W_t(\mu_t)\leq \frac{V_t(\mu^*_t)}{Z_t(\mu^*_t)}\times (Z_t(\mu^*_t) - Z_t(\mu_t))\leq \frac{\overline{v}}{{\min\{\rho,\delta\}}}(Z_t(\mu^*_t) - Z_t(\mu_t)).
    \end{equation}
    The second inequality comes from considering the cases that $\mu^*_t=0$ (and applying \Cref{assn:boundedMBB}) and $\mu^*_t>0$.
\end{description}
In either case, \Cref{eq:regcase1} and \Cref{eq:regcase2} implies that {(recalling the assumption $\delta\leq \rho$)}
\begin{equation}
    \label{eq:W_tLB}
    W_t(\mu^*_t)-W_t(\mu_t)\leq \frac{\overline{v}}{{{\delta}}} \vert Z_t(\mu^*_t)-Z_t(\mu_t)\vert.
    %
\end{equation}
We now relate this bound to the $H_t$ functions. Observe that
\begin{equation*}
    H_t(\mu_t)-H_t(\mu^*_t) = \int_{\mu^*_t}^{\mu_t} \left[\rho-Z_t(x)\right]\mathrm{d}x=\int_{0}^{\mu_t-\mu^*_t} \left[\rho-Z_t(\mu^*_t+x)\right]\mathrm{d}x,
\end{equation*}
and moreover that the integrand
$f(x) = \rho-Z_t(\mu^*_t+x)$
is increasing and $\lambda$-Lipschitz by the assumption on $Z_t$.
In general, $f(0)\geq 0$ (with equality if $Z_t(\mu^*_t)=\rho$). If we consider $g(x)=f(x)-f(0)$ and apply \Cref{lem:lipint}, we obtain
\begin{equation*}
    \vert Z_t(\mu_t) - Z_t(\mu^*_t)\vert =\vert f(\mu_t-\mu^*_t)-f(0)\vert\leq \sqrt{2\lambda (H_t(\mu_t)-H_t(\mu^*_t)-(\mu_t-\mu^*_t)(\rho-Z_t(\mu^*_t)))}.
\end{equation*}
Note that if $\mu_t<\mu^*_t$, then $\rho=Z_t(\mu^*_t)$ so the last subtracted term is zero, while if $\mu_t\geq \mu^*_t$, then the last subtracted term is nonnegative. In either case, we deduce that
\begin{equation}
\label{eq:ZtUB}
    \vert Z_t(\mu_t) - Z_t(\mu^*_t)\vert\leq \sqrt{2\lambda (H_t(\mu_t)-H_t(\mu^*_t))}
\end{equation}

By combining these inequalities, we obtain:
\begin{align*}
    \textstyle
    \E\left[\sum_{t=1}^T V_t(\mu^*_t)-V(\mu_t)\right]
    &\leq \textstyle
    \E\left[\sum_{t=1}^T W_t(\mu^*_t)-W_t(\mu_t)\right] &&\text{(by \Cref{eq:VtLB})}\\
    &\leq \frac{\overline{v}}{{\delta}}
    \E\left[\textstyle\sum_{t=1}^T \left\vert Z_t(\mu^*_t)-Z_t(\mu_t)\right\vert\right] &&\text{(by \Cref{eq:W_tLB})}\\
    &\leq \sqrt{2\lambda}\frac{\overline{v}}{{\delta}}\E\left[\textstyle\sum_{t=1}^T \sqrt{ H_t(\mu_t)-H_t(\mu^*_t)}\right] &&\text{(by \Cref{eq:ZtUB})}\\
    &\leq \sqrt{2\lambda T}\frac{\overline{v}}{{\delta}}\E\left[\sqrt{\textstyle\sum_{t=1}^T  H_t(\mu_t)-H_t(\mu^*_t)}\right] &&\text{(by Cauchy-Schwarz)}\\
    &\leq \sqrt{2\lambda T}\frac{\overline{v}}{{\delta}}\sqrt{\E\left[\textstyle\sum_{t=1}^T  H_t(\mu_t)-H_t(\mu^*_t)\right]} &&\text{(by Jensen's inequality)}\\
    &\leq \frac{\overline{v}}{{\delta}}\sqrt{2\lambda T\cdot\mathsf{REG}_1} &&\text{(by \Cref{eq:reginit})}.
\end{align*}
Accounting for the $O(\overmu\overline{v}/\eps\rho)$ loss from possibly terminating early using \Cref{lem:bg1} establishes the claimed regret bound of \eqref{eq:thm-regmain} for regret with respect to value maximization.

We now show how to adapt the preceding analysis to bound the pacing regret for utility-maximization as a nearly direct consequence. By inspection of the proof above,
 it suffices to show that the following direct analogue of \Cref{eq:W_tLB} holds for utilities:
    \begin{equation}\label{eq:U_tLB}
        U_t(\mu^*_t)-U_t(\mu_t)\leq \frac{\overline{v}}{{{\delta}}} \vert Z_t(\mu^*_t)-Z_t(\mu_t)\vert.
    \end{equation}
    Given \Cref{eq:U_tLB}, the remainder of the argument is exactly the same. To see that \Cref{eq:U_tLB} holds, we again consider cases based on the value of $\mu_t$:

    \begin{description}
    \item[\textbf{Case 1: $\mu_t<\mu^*_t$.}] In this case, we immediately have
    \begin{align*}
        U_t(\mu^*_t)-U_t(\mu_t)&=\underbrace{(V_t(\mu_t^*)-V_t(\mu_t))}_{\leq 0}+(Z_t(\mu_t)-Z_t(\mu_t^*))\\
        &\leq \underbrace{Z_t(\mu_t)-Z_t(\mu_t^*)}_{\geq 0}
        =\vert Z_t(\mu_t)-Z_t(\mu_t^*)\vert.
    \end{align*}
    The first inequality follows from the fact obtained valuations are non-increasing in $\mu$, while the last equality holds because expenditures are also non-increasing in $\mu$.

    \item[\textbf{Case 2: $\mu_t\geq\mu^*_t$.}] In this case, we can instead bound
    \begin{align*}
        U_t(\mu^*_t)-U_t(\mu_t)&= (V_t(\mu_t^*)-V_t(\mu_t))+\underbrace{(Z_t(\mu_t)-Z_t(\mu_t^*))}_{\leq 0}
        \leq V_t(\mu_t^*)-V_t(\mu_t).
    \end{align*}
    This holds again because $Z_t(\mu_t)\leq Z_t(\mu_t^*)$ in this case by monotonicity of expenditure.
    But in this case, we recall that by construction in the proof of \Cref{thm:regmain} that $V_t(\mu_t^*)-V_t(\mu_t)\leq W_t(\mu_t^*)-W_t(\mu_t)$. So, the previous display combined with the bound for $W_t$ in \Cref{eq:regcase2} immediately implies
    \begin{equation*}
        U_t(\mu^*_t)-U_t(\mu_t)\leq \frac{\overline{v}}{{{\delta}}} \vert Z_t(\mu^*_t)-Z_t(\mu_t)\vert.
    \end{equation*}
    \end{description}

    Putting these two cases establishes \Cref{eq:U_tLB} (recall the assumption that $\overline{v}\geq 1$) and thus completes the proof for pacing regret for utilities, and therefore the proof of \Cref{thm:regmain}.

\section{Aggregate and Individual Guarantees for Other Algorithms}
\label{sec:other}

It is natural to ask whether other online bidding algorithms achieve similar liquid-welfare guarantees, or in fact a similar \emph{combination} of aggregate and individual guarantees. A natural no-regret condition alone cannot suffice for liquid welfare: we provide a simple example in  \Cref{app:regret.example}. In this section, we extend our analyses and guarantees to several online bidding algorithms that replace the \emph{stochastic gradient descent (SGD)} update step in \Cref{alg:bg} step with an update step from another well-known algorithm for online convex optimization (OCO). Specifically, we invoke \emph{optimistic gradient descent (OGD)}
\citep{Rakhlin-colt13,Rakhlin-nips13,mokhtari-aistats20,Jiang-siam25}, \emph{optimistic mirror descent} (OMD) \citep{Chiang-colt12,Rakhlin-colt13,Rakhlin-nips13}, and \emph{optimistic follow-the-regularized-leader} (OFTRL) \cite{Rakhlin-colt13}.

The new algorithms are defined as follows. Throughout, recall that $\epsilon_k> 0$ is the step-size, and let $\myGrad = \rho_k-z_{k,t}$ denote the estimated gradient in round $t$. In \PacingOMD and \PacingOFTRL, in each round $t$, an algorithm is given an estimate $M_{k,t}$ of the \emph{next} gradient. Thus:
\begin{description}
\item[\PacingOGD] Replace \refeq{eq:alg-update} in \Cref{alg:bg} with an OGD step: for some fixed $\eps'_k \in [-\eps_k/2,0]$,
\begin{align*}
 \mu_{k,t+1} \leftarrow P_{[0,\overline{\mu}]}
    \rbr{\mu_{k,t}-\eps_k\cdot\myGrad - \eps'_k\cdot\myGrad[k,t-1]}.
\end{align*}

\item[\PacingOMD] Replace \refeq{eq:alg-update} in \Cref{alg:bg} with an OMD step: initializing $\widehat{\mu}_{k,1}=0$,
\begin{align*}
 \widehat{\mu}_{k,t+1} &\leftarrow
    P_{[0,\overline{\mu}]}
        \rbr{\widehat{\mu}_{k,t}-\eps_k\cdot\myGrad},\\
 \mu_{k,t+1} &\leftarrow
    P_{[0,\overline{\mu}]}
        \rbr{\widehat{\mu}_{k,t+1}-\eps_k \cdot M_{k,t}}.
\end{align*}

\item[\PacingOFTRL] Replace \refeq{eq:alg-update} in \Cref{alg:bg} with an OFTRL step with Euclidean regularizer:
\begin{align*}
 \mu_{k,t+1} \leftarrow
    P_{[0,\overline{\mu}]}
        \rbr{-\eps_k\rbr{M_{k,t}+{\textstyle \sum_{s\in[t]}}\; \myGrad[k,s]}}.
\end{align*}
\end{description}

\noindent In line with this notation, one can refer to \Cref{alg:bg} as \PacingSGD.

\begin{remark}
In \PacingOGD, the case $\eps'=0$ corresponds to the original algorithm (\Cref{alg:bg}). The case  $\eps'_k = -\eps_k/2$ corresponds to OGD \citep{Daskalakis-iclr18,Daskalakis-neurips18-limitPoints}, another OCO algorithm that obtains faster convergence in certain learning-in-games settings. Further work \citep{mokhtari-aistats20,Jiang-siam25} extends OGD to an arbitrary $\eps'_k\in [-\eps/2,0]$.
\end{remark}

\begin{remark}
While the original versions of OMD and OFTRL with Euclidean regularizer are defined for $\mK\subseteq \R^d$, $d\in\N$ and use Bregman divergences, we invoke a simpler  equivalent formulation for $d=1$.
\end{remark}


\begin{remark}\label[remark]{rem:predictable}
OMD and OFTRL achieve the same regret rates as (we invoke for) SGD for the default choice of the next-gradient predictors, $M_{k,t} = \myGrad$. They achieve \emph{better} regret rates when the gradient sequences are predictable, in some formal sense, and this is the key advantage of these algorithms \citep{Chiang-colt12,Rakhlin-colt13}.

Our analysis is consistent with this intuition. Our liquid-welfare guarantees for \PacingOMD and \PacingOFTRL apply to arbitrary $M_{k,t}$'s, whereas our regret bounds for these algorithms are for $M_{k,t} = \myGrad$. Moreover, our regret analysis extends to any other choice of $M_{k,t}$'s that yields same or better regret bound for the underlying OCO algorithms. Better OCO regret bounds would translate into improved regret bounds for online bidding (but we do not spell this out explicitly).
\end{remark}

\xhdr{Regret bounds.}
All results in \Cref{sec:regret} carry over to \PacingOGD and \PacingOMD (with $M_{k,t} = \myGrad$). This is because OGD and OMD with $M_{k,t} = \myGrad$ satisfy the same OCO regret bound as SGD (\Cref{thm:sgd}), and this regret bound plugs into \Cref{sec:regret} same way as it does for \Cref{alg:bg}. Moreover, we obtain an analog of \Cref{lem:bg1}, showing that the algorithms do not run out of budget too early (see \Cref{lem:other-runout}). These are the only facts about an algorithm invoked by our analysis in \Cref{sec:regret}.

For \PacingOFTRL with $M_{k,t} = \myGrad$, we only obtain regret bounds for the stochastic environment (\Cref{cor:regret-stochastic-gen,cor:regret-stochastic} in \Cref{sec:regret-stochastic}). We obtain an analog of \Cref{lem:bg1} (see \Cref{lem:other-runout}). Then we invoke the regret bound for OFTRL (\Cref{thm:sgd}, in the special case when all $f_t$'s are the same), and observe that it plugs into the analysis in \Cref{sec:regret} for the stochastic environment and implies \Cref{cor:regret-stochastic-gen}.

Since the results from \Cref{sec:regret} carry over as stated, we do not restate them here. The only change is a slightly stronger condition on the problem parameters:
    $\overline{\mu}\geq2\overline{v}/\rho_k+1$,
which stems from \Cref{lem:other-runout}. The relevant background on OCO algorithms is summarized in \Cref{app:OCO}.

\xhdr{Liquid welfare.} We obtain essentially the same liquid-welfare guarantee as in \Cref{sec:liquid.welfare}, under a mild additional assumption that
$\max_k (\eps_k/\rho_k) = o(1)$.

\begin{theorem}\label{thm:other}
Fix any core auction and any distribution $F$ over agent value profiles. Suppose that each agent $k$ uses \PacingOGD, \PacingOMD, or \PacingOFTRL. (Different agents can use different algorithms and/or different step-sizes $\eps_k$.) The next-gradient predictors $M_{k,t}$ can be arbitrary, as long as $|M_{k,t}|\leq \bar{v}$.  Write $\boldsym{x}$ for the corresponding allocation sequence rule. Then for any allocation rule $\boldsym{y} \colon [0,\overline{v}]^{n} \to X$ and $c = 4\,\overline{v}\cdot \max_k \sqrt{\eps_k/\rho_k}$
we have
\begin{align}\label{eq:thm:other}
W(\boldsym{x},F)
\geq (1-c)\cdot \frac{\overline{W}(\boldsym{y},F)}{2} - O\rbr{n\overline{v}\sqrt{T\log(\overline{v}nT)}}.
\end{align}
\end{theorem}

\subsection{The General Algorithm Property: Proof Sketch of \Cref{thm:other}}

We obtain \Cref{thm:other} by extending the analysis in \Cref{sec:liquid.welfare}. In this analysis, the dependence on the algorithm's details is essentially captured by the following property: the amount spent over a sequence of rounds with non-zero pacing is lower-bounded by a quantity that depends only on the starting and ending bids, but otherwise not the learning path in between (Lemma~\ref{lem:stronger}, which is itself a corollary of \refeq{eq:payment.dynamics}). In what follows, we define a more general property, $c$-event-feasibility, that captures all three algorithms in \Cref{thm:other}, and prove \refeq{eq:thm:other} when all agents use algorithms that are $c$-event-feasible.


\begin{definition}[$c$-event-feasible]\label[definition]{def:other-general}
Consider a variant of \Cref{alg:bg} in which the multiplier update in \refeq{eq:alg-update} is replaced with some other update rule which computes $\mu_{k,t}\in [0,\overline{\mu}]$. The algorithm is called \emph{$c$-event-feasible}, $c\in(0,1)$ if for each round $t$ before it runs out of budget, there exists an event $\mE_{k,t}$ (determined by the auction history up to this round) such that the following two properties hold:
\begin{align}
&\sum_{t\in[T]} x_{k,t}v_{k,t}\geq \sum_{t\in[T]} (x_{k,t}v_{k,t}-z_{k,t})\cdot \mathbf{1}(\mE_{k,t})+(1-c)\sum_{t\in[T]}{\rho_k} \cdot \mathbf{1}(\mE_{t,k}^c).
    \\
&\text{Event $\mE_{k,t}$ implies that }
        b_{k,t}=\frac{v_{k,t}}{1+\mu_{k,t}}\geq (1-c)\cdot v_{k,t}.
\end{align}
\end{definition}

\Cref{alg:bg} is event-feasible by \Cref{lem:stronger}, with
    $\mE_{k,t}=\cbr{\mu_{k,t}=0}$ and $c=0$.
The definition instantiates to the other algorithms as follows:

\begin{lemma}\label[lemma]{lem:other-prop}
\PacingOGD, \PacingOMD, and \PacingOFTRL are $c$-event feasible with
    $c = 4\,\overline{v}\sqrt{\eps_k/\rho_k}$.
The events $\mE_{k,t}$ are
    $\cbr{\mu_{k,t}\leq c}$
for \PacingOGD and \PacingOFTRL, and
    $\cbr{\widehat{\mu}_{k,t}\leq c}$
for \PacingOMD.
\end{lemma}

When each agents uses an algorithm from \Cref{def:other-general}, liquid welfare is as follows.

\begin{theorem}\label{thm:other-general}
Fix any core auction and any distribution $F$ over agent value profiles. Suppose each agent uses a $c$-event-feasible algorithm, for some $c\in(0,1)$ (possibly a different algorithm for different agents). Let $\boldsym{x}$ be the corresponding allocation sequence rule. Then for any allocation rule $\boldsym{y} \colon [0,\overline{v}]^{n} \to X$,
\begin{align}\label{eq:thm:other-general}
W(\boldsym{x},F)
\geq (1-c)\cdot \frac{\overline{W}(\boldsym{y},F)}{2} - O\rbr{n\overline{v}\sqrt{T\log(\overline{v}nT)}}.
\end{align}
\end{theorem}

The proof follows that of \Cref{thm:main.new} with relatively minor modifications. We lower bound the liquid welfare using a similar charging argument, but tailored to these new events. The agents satisfying $\mathcal{E}_{k,t}$ at any time $t$ are approximately bidding their value, and so the core auction properties again ensure that any valuation loss can be charged to prices. \Cref{lem:other-prop} and \Cref{thm:other-general} are proved in \Cref{app:other}.

\section{Numerical Evaluation}
\label{sec:expts}



In this Section we complement our theoretical findings with a numerical simulation study of \cref{alg:bg}.  We consider the \emph{multi-player} environment with simultaneous learning by competing bidders. Our simulations are semi-synthetic, based on campaign and bidding data collected from the Bing Advertising platform.

We focus on regret relative to the standard benchmark: the best fixed pacing multiplier in hindsight. Recall that, relative to this benchmark, vanishing regret is achievable for a stochastic environment but is provably impossible for an adversarial environment.  The achievability of vanishing regret in a multi-player environment remains unknown and is currently an open question.

This gap motivates our simulation study, which focuses on this multi-player environment with simultaneous learning.  Our simulations suggest that simultaneous execution of \cref{alg:bg} yields vanishing regret, with regret rate $O(T^{\alpha})$ for $\alpha \leq 3/5$.
We also compare the performance of \cref{alg:bg} with some other approaches to online bid optimization in the literature, analyzing both empirical regret rate and liquid welfare outcomes on our semi-synthetic dataset.

\subsection{Data and Simulation Description}

Our data consists of campaign and auction data collected over a 7-day period in April 2022.  The dataset contains daily budget targets and other campaign parameters (such as maximum bid, when specified) for campaigns from a North American advertising segment.  It also contains, for a random subsample of {$N\approx 2.4M$}
auction instances over this period, the list of participating campaigns, click probability predictions for each participant, realized bids from each participant, and auction outcomes (including auction winner and payment).  All monetary measurements are expressed in normalized units.

Given this dataset, we simulate a joint execution of \Cref{alg:bg} as follows. For any campaign that participates in fewer than $\theta = 1000$ auction instances, we maintain the bid that they originally placed in the dataset.  For all other campaigns, we generate new bids by simulating the execution of an online bidding algorithm for each participant of each auction instance.  The target spend rate for campaign $k$ is calculated by taking $T_k$ to be the average number of per-day auction instances for each campaign, and per-impression values are taken to be proportional to platform estimated click rates.\footnote{In all visualizations and reported welfare metrics, exact values and counts are renormalized to prevent data leakage.}

Given the bids for all auction participants, we simulate the auction outcome using estimated click rates and a simplified auction rule.
We take there to be only a single winner in each auction instance, corresponding to the highest total bid.  Payments are calculated according to a specified payment rule; we consider both first-price and second-price payment rules in our experiments.

\subsection{Bidding Algorithms}


Here we describe the online bidding algorithms evaluated in our simulations.%
\footnote{In addition to \Cref{alg:bg}, we have chosen as comparators two ``advanced" versions of SGD (incorporating, resp., the ``optimistic correction" $G_{k,t} - G_{k,t-1}$ and weighted gradient-averaging) and a multiplicative approach from an important early work on simultaneous learning in auctions~\cite{Borgs-www07}.
A more detailed empirical investigation of other variations is left for future studies.}  All algorithms adjust a bid multiplier that is constrained to lie in  $[0,\overline{\mu}]$. {All algorithms invoke some gradient-based technique from the literature, treating $G_{k,t} = \rho_k - z_{k,t}$ as a gradient, for each agent $k$ and round $t$. (For simplicity, we use the technique's name to label the algorithm.)} All meta-parameters are tuned via grid search.  We provide a warm start by initializing each campaign's multiplier to be proportional to $1/\rho_k$.

\begin{itemize}
    \item \textbf{Stochastic Gradient Descent (SGD):} This is \Cref{alg:bg}, the online bidding method described in Section~\ref{sec:model} and analyzed theoretically in this paper.
        The multiplier is updated according to the rule $\mu_{k,t+1} = \mu_t - \epsilon_k G_{k,t}$. The learning rate $\epsilon_k$ for each agent $k$ is chosen to be proportional to $1/\sqrt{T_k}$ with a tuneable coefficient as meta-parameter.  
    \item \textbf{Optimistic Gradient Descent (OGD):} an instantiation of \PacingOGD where the multiplier update is $\mu_{k,t+1} = \mu_t - \epsilon_k (2G_{k,t} - G_{k,t-1})$.  The learning rate is again chosen to be proportional to $1/\sqrt{T_k}$, with coefficient treated as a tunable meta-parameter.
    \item \textbf{Adaptive Moment Estimation (Adam):} a self-tuning gradient descent method in which the update in each round is a weighted average of prior gradients, scaled by a weighted $\ell_2$-norm of prior gradients.  Learning rate is treated as a tunable meta-parameter.
    \item \textbf{Multiplicative Updating (MU):} a bid update method proposed in~\cite{Borgs-www07}, in which $\mu_{k,t}$ is scaled up or down by a fixed multiplicative factor $(1+\epsilon)$ depending on whether the previous-round's spend was above or below the target spend $\rho_k$.  The factor $\epsilon$ is a tunable meta-parameter.
\end{itemize}



\subsection{Regret Analysis}

\begin{figure}[t]
    \centering
    \begin{tabular}{ccc}
        \includegraphics[width=0.3\textwidth]{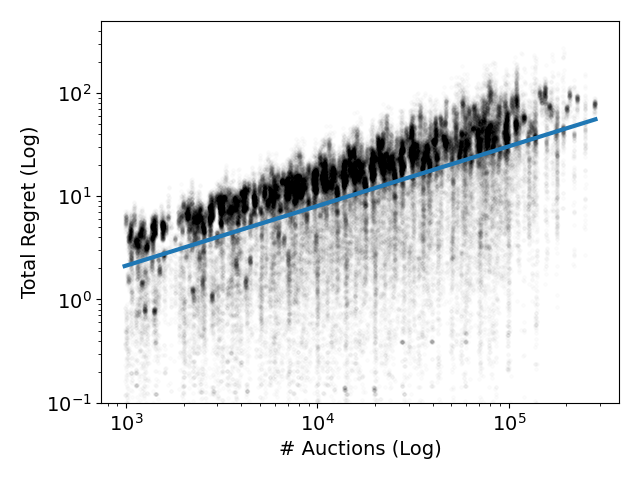}
        &
        \includegraphics[width=0.3\textwidth]{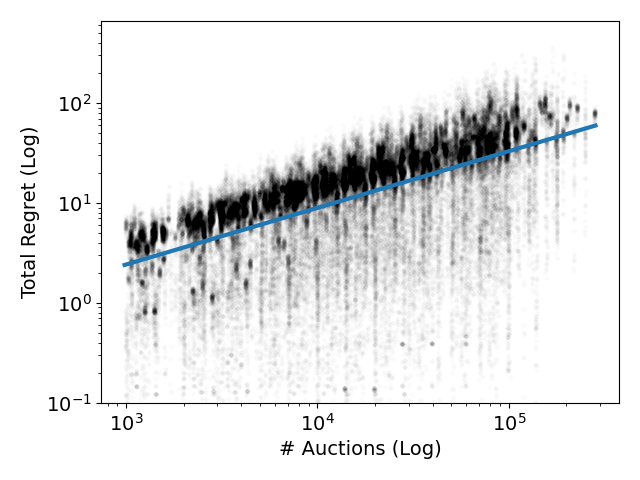}
        &
        \includegraphics[width=0.3\textwidth]{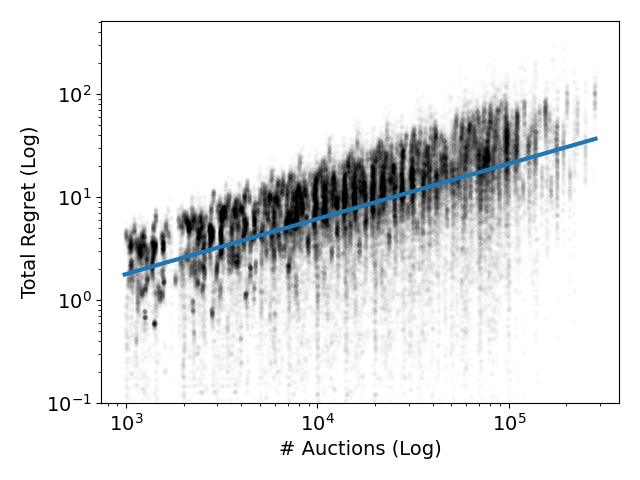} \\
        (a) & (b) & (c) \\
    \end{tabular}
    \caption{Regressing on observed regret in log-log scale to estimate regret rate for simultaneous multi-agent use of Algorithm~\ref{alg:bg}. Each dot plots the total regret of a simulated campaign against the number of auction instances in which it participates. Standard errors in the linear regression are included but not visible.  Plots are for (a) the utility objective in second-price auctions, (b) the value objective in second-price auctions, and (c) the value objective in first-price auctions.}
    \label{fig:regret.regression}
\end{figure}


Following each execution of the simulation, we calculate the total regret for each bidding agent by determining the single best linear policy multiplier in hindsight.  This is done via binary search, up to an error of $10^{-9}$.  We run $n=100$ simulations for each scenario, each using a random subsample of $95\%$ of auction instances.  This provides an estimate of the regret of each agent $k$.

For second-price auctions, we consider both utility maximization and value maximization as objectives for the purpose of calculating regret, recalling that our benchmark of the best fixed pacing multiplier in hindsight is optimal for both objectives. For first-price auctions we consider the value maximization objective, since the best fixed pacing multiplier in hindsight is optimal for value maximization but not for utility maximization.

To estimate the rate of regret growth for each agent over time, we simulate longer time periods by iterating over the collection of auction instances in our dataset multiple times.  We simulated $K$ iterations over the dataset, where $K \in \{1, 2, 5, 10, 20, 50, 100\}$.  To reduce the potential impact of cyclic data patterns, each of the $K$ iterations includes only a subsample of the impression auctions drawn independently at random, where each impression is included in each iteration with probability $\beta = 1/2$.
We then calculate
regret for each agent, as described above, as the number of auction instances grows.

\subsection{Results}


Figure~\ref{fig:regret.regression} illustrates the total regret obtained in our simulation of Algorithm~\ref{alg:bg} over all simulated campaigns.  Each point on the plot corresponds to a combination of campaign (i.e., bidding agent representing an advertiser) and choice of $K$, and plots normalized auction count against total regret metric.  Further visualizations of per-campaign regret evolution are provided in \Cref{app:expts}.  Hypothesizing a regret rate of the form $O( T^{\alpha})$, we estimate $\alpha$ using log-log regression. Figure~\ref{fig:regret.regression} illustrates the outcome of log-log regression for all campaigns and simulation treatments, separately for first-price and second-price auction formats. We obtain an estimated slope of $\alpha = 0.573$ for second-price auctions (standard error 0.0026) and $\alpha = 0.540$ for first-price auctions (standard error 0.0028).

\begin{figure}[t]
    \centering
    \begin{tabular}{ccc}
        \includegraphics[width=0.3\textwidth]{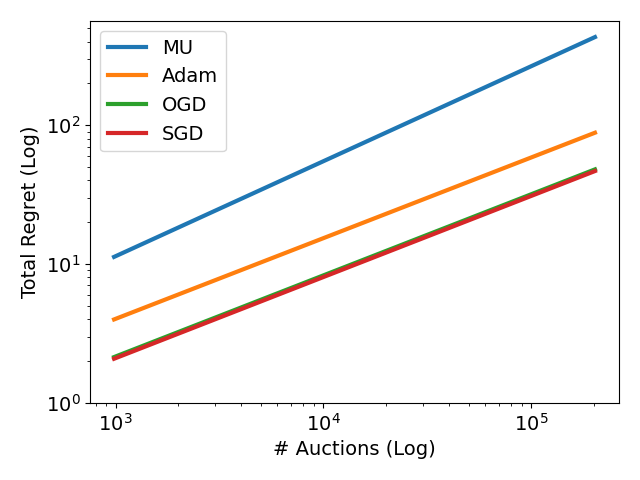}
        &
        \includegraphics[width=0.3\textwidth]{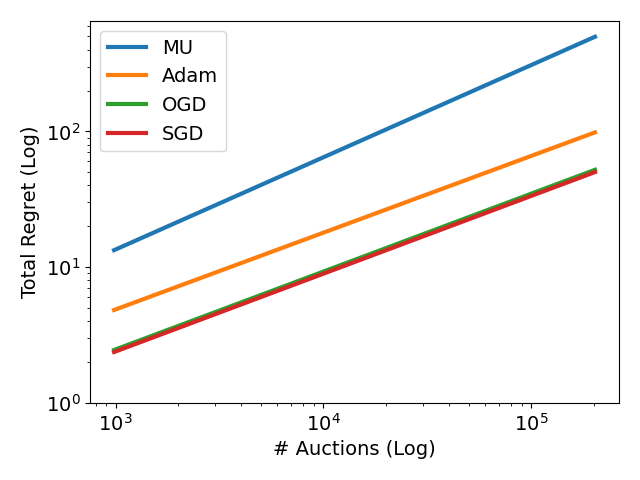}
        &
        \includegraphics[width=0.3\textwidth]{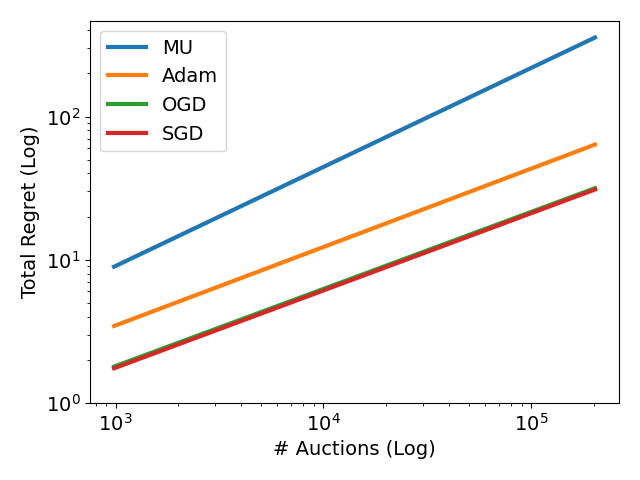} \\
        (a) & (b) & (c) \\
    \end{tabular}
    \caption{Visual comparison of regressed regret, in log-log scale, as a function of the number of auctions, for different learning algorithms.  Plots are for (a) the utility objective in second-price auctions, (b) the value objective in second-price auctions, and (c) the value objective in first-price auctions.  Note that OGD is partially obscured due to proximity to SGD.}
    \label{fig:regret.comparison}
\end{figure}

We repeat this estimation exercise for all algorithms in our comparison set.  Figure~\ref{fig:regret.comparison} plots the resulting regression for first-price and second-price auctions, with parameter estimates listed in Table~\ref{table:regret.comparison}.  Table~\ref{table:regret.comparison} also lists the total liquid welfare generated over the simulated campaigns for each algorithm and auction format.  For liquid welfare comparisons, we ran at $n=1000$ simulations to test for statistical significance in comparisons between algorithms. 
The top entries in each category, at a statistical significance of at least $p=0.05$, are displayed in bold.

We find that SGD and OGD have comparable performance with no statistically significant differences in terms of regret or liquid welfare. Relative to them, Adam has an improved regret rate for second-price auctions and a slightly worse regret rate for first-price auctions, though with higher absolute regret and lower liquid welfare for the auction counts covered in our dataset.  This finding aligns with the intuition that advanced self-tuning methods like Adam are most effective when the number of iterations is very large, whereas the generally slower convergence time that comes with self-tuning may come at a loss for campaigns that activate less frequently.  Multiplicative updating noticeably under-performs the other methods in both aggregate liquid welfare and regret.  We emphasize that, in all cases, hyperparameters for each algorithm were tuned individually and separately for both first-price and second-price auctions.

\begin{table}[h!]
\centering
\begin{tabular}{l|ccc|cc}
\toprule
 & \multicolumn{3}{c|}{SPA} & \multicolumn{2}{c}{FPA} \\
\cmidrule(lr){2-4} \cmidrule(lr){5-6} Algorithm & \shortstack{Regret Rate\\(Utility)} & \shortstack{Regret Rate\\(Value)} & \shortstack{Liquid Welfare\\\, } & \shortstack{Regret Rate\\(Value)} & \shortstack{Liquid Welfare\\\, } \\
\midrule
SGD & \shortstack{\textbf{0.584}\\(0.003)} & \shortstack{0.573\\(0.003)} &
\shortstack{\textbf{22.84}\\(2.25)} & \shortstack{\textbf{0.540}\\(0.003)} & \shortstack{\textbf{23.42}\\(2.24)} \\
\midrule
OGD & \shortstack{\textbf{0.584}\\(0.003)} & \shortstack{0.574\\(0.003)} &
\shortstack{\textbf{22.84}\\(2.25)} & \shortstack{\textbf{0.538}\\(0.003)} & \shortstack{\textbf{23.41}\\(2.24)} \\
\midrule
Adam & \shortstack{\textbf{0.581}\\(0.003)} &
\shortstack{\textbf{0.565}\\(0.003)} &
\shortstack{22.65\\(2.22)} & \shortstack{0.547\\(0.004)} & \shortstack{23.15\\(2.19)} \\
\midrule
MU & \shortstack{0.684\\(0.003)} &
\shortstack{0.679\\(0.003)} & \shortstack{22.60\\(2.22)} & \shortstack{0.692\\(0.003)} & \shortstack{23.14\\(2.22)} \\
\bottomrule
\end{tabular}
\caption{Comparison of algorithms under first- and second-price auction formats (resp., FPA and SPA). Estimated (regressed) regret rates are provided (i.e., the estimated $\alpha$ in regret rate of the form $O(T^\alpha)$) as well as the average liquid welfare (in normalized units) over all simulated campaigns.  Standard errors provided in parentheses.
Top algorithms in each category (up to statistical significance at $p=0.05$) are represented in bold.}
\label{table:regret.comparison}
\end{table}




%

\newpage
\section{Conclusions and Open Questions}
\label{sec:conclusions}


The purpose of this paper is to simultaneously guarantee high aggregate welfare and low individual regret for budget-constrained online bidding in a repeated auction, without relying on convergence to an equilibrium. We establish these guarantees for a wide class of auction rules, arbitrarily correlated private values, bandit feedback, and several natural budget-pacing algorithms. Our individual guarantees hold for both utility- and value-maximization.

On a more technical level, the main result is a guarantee on the expected total liquid welfare achieved over multiple auction rounds. This approximation guarantee matches the best possible for a pure Nash equilibrium in a static truthful auction, but is the first to hold without requiring convergence to equilibrium. Our individual guarantees are of independent interest: we obtain the first non-trivial regret bounds for budget-constrained online bidding for (i) the adversarial environment and (ii) value-maximization in the stochastic environment. The result for the adversarial environment holds against non-standard benchmark (perfect pacing sequence) which side-steps impossibility results from prior work and has been fruitfully applied to general bandits-with-knapsacks problems in subsequent work \citep{LagCBwK-jmlr24,BwK-braverman2025}. The modularity of our techniques enables extending both aggregate and individual guarantees from \Cref{alg:bg} to several other algorithms.\footnote{This addresses an open question from the preliminary conference version of this paper \citep{Gaitonde-itcs23}.}



Let us emphasize several open questions:

\xhdr{(1)}.
Improving the approximation factor against our liquid-welfare benchmark appears unlikely (given the negative result for static truthful auctions). Could one identify a different liquid-welfare benchmark --- possibly weaker but hopefully more fair --- which would allow for a better approximation factor or, ideally, a regret bound with no approximation factor at all?

\xhdr{(2).}
The key goal for individual guarantees is vanishing regret for all agents at once, particularly under \emph{self-play} (when all agents use the same algorithm).  Previously this goal seemed hopeless for budget-constrained bidders, given the strong impossibility results known for adversarial environments. We provide a plausible framework in which this goal might be achievable: regret bounds relative to the perfect pacing sequence. Recall that our specific guarantee falls just short: \Cref{cor:regret-stochastic-2} requires path-length $P^* = o(\sqrt{T})$, but we can only guarantee $P^* = O(\sqrt{T})$ when all agents run our algorithm. Bridging this gap may be within reach, possibly by leveraging the ``optimism" in \PacingOMD (see \Cref{sec:other}). Vanishing regret under self-play may also be achievable against the \emph{standard} benchmark: best pacing multiplier in hindsight. Indeed, our semi-synthetic numerical simulations exhibit this for \Cref{alg:bg} and several other bidding algorithms.



\xhdr{(3).} While optimizing the guarantees on individual regret is not our goal per se, it would be desirable to improve upon the $T^{3/4}$ regret rates in our theoretical results and/or compete against non-linear bidding policies for value-maximization. Improved results for the stochastic environment are especially interesting when the same algorithm enjoys non-trivial guarantees for the adversarial environment.




\begin{small}
\bibliographystyle{plainnat}
\bibliography{learning,bib-abbrv,bib-slivkins,bib-bandits,bib-AGT}
\end{small}

\newpage
\addtocontents{toc}{\protect\setcounter{tocdepth}{2}}
\renewcommand{\contentsname}{APPENDICES}
\tableofcontents
\newpage

\appendix

\crefalias{section}{appendix}
\crefalias{subsection}{appendix}

\section{Examples of Auctions}\label{app:ExamplesofAuctions}
\subsection{Single-slot and Multiple-slot Ad Auctions}
\label{app:single-multiple-slot}
\fakeItem
{\em Single-Slot Ad Auctions:} A round corresponds to an ad impression, with a single ad slot available.
An impression has a \emph{type} $\theta_t \in \Theta$; this type might describe, for example, a keyword being searched for, user demographics, intent prediction, etc.  In each round $t$ the impression type $\theta_t$ is drawn independently from a distribution over types.  Each agent $k$ has a fixed value function $v_k \colon \Theta \to \reals_{\geq 0}$ that maps each impression type to a value for being displayed.  The value profile in round $t$ is then $\mathbf{v}_t = (v_1(\theta_t), \dotsc, v_n(\theta_t))$.  A single ad can be displayed each round.  We then interpret $x_{k,t} \in [0,1]$ as the probability that advertiser $k$ is allocated the ad slot, and an allocation profile $\mathbf{x}_t$ is feasible if and only if $\sum_k x_{k,t} \leq 1$.

\fakeItem
{\em Multiple-Slot Pay-per-click Ad Auctions:} We can generalize the previous example to allow multiple ad slots, using a polymatroid formulation due to \citep{goel2015polyhedral}.  Impression types and agent values are as before, but we now think of there as being $m \geq 1$ slots available in each round, with click rates $1 \geq \alpha_1 \geq \dotsc \geq \alpha_{m} \geq 0$. If ad $k$ is placed in slot $i$, the value to agent $k$ is $v_k(\theta_t) \times \alpha_i$.  That is, we think of $v_k(\theta_t)$ as representing both the advertiser-specific click rate (which can depend on the impression type) as well as the advertiser's value for a click. In this case we would take $v_{k,t} = v_k(\theta_t)$ and $x_{k,t} = \alpha_i$.  The set of feasible allocation profiles $X \subseteq [0,1]^n$ is then a polymatroid: $\mathbf{x}_t \in X$ if and only if, for each $\ell \leq m$ and each $S \subseteq [n]$ with $|S| = \ell$, $\sum_{k \in S}x_{k,t} \leq \sum_{i = 1}^k \alpha_i$.

\subsection{Core and MBB Auctions} \label{app:CoreAuctions}

Let us list three notable examples of core auctions. They also satisfy the monotone bang-per-buck (MBB) property, as defined in \refeq{eq.mbb.1}.
\begin{itemize}
    \item {\em First-Price Auction} chooses a welfare-maximizing allocation $\mathbf{x}(\mathbf{b}) \in \arg\max_{\mathbf{x}}\{\sum_k b_k x_k\}$.  Each agent $k$ then pays her bid for the allocation obtained: $p_k(\mathbf{b}) = b_k x_k(\mathbf{b})$. 
        
        To see this auction is MBB, first note that $x_k(\mathbf{b})$ is nondecreasing in $b_k$. Then observe that if $\mathbf{b}_{-k}$ is fixed, and $b_k< b_k'$, then setting $\mathbf{b}'=(b_k',\mathbf{b}_{-k})$, we clearly have
    \begin{equation*}
        p_k(\mathbf{b}')-p_k(\mathbf{b})
            =b_k'\, x_k(\mathbf{b}')- b_k\, x_k(\mathbf{b})\geq b_k\rbr{x_k(\mathbf{b}')-x_k(\mathbf{b})},
    \end{equation*}
    as needed by the definition of MBB.
    
    \item {\em Second-Price Auction for Single-Slot Ad Auctions.}  
        The second-price auction chooses a welfare-maximizing allocation $\mathbf{x}(\mathbf{b}) \in \arg\max_{\mathbf{x}}\{\sum_k b_k x_k\}$, 
        where feasible allocations $\mathbf{x}$ satisfy $\sum_k x_k \leq 1$.
        Then each agent $k$ pays $x_k$ times the second-highest bid.

        To see this auction is MBB, fix $\mathbf{b}_{-k}$ and let $b^*$ denote the largest single bid in $\mathbf{b}_{-k}$. Note that if $b_k \geq b^*$ then the second-highest bid is $b^*$ so $p_k(\mathbf{b}) = b^* x_k(\mathbf{b})$, and if $b_k < b^*$ then $x_k(\mathbf{b}) = p_k(\mathbf{b}) = 0$ so $p_k(\mathbf{b}) = b^* x_k(\mathbf{b})$.  Thus, for any $b_k < b'_k$, we have
        \[
        p_k(\mathbf{b}') - p_k(\mathbf{b}) = b^*( x_k(\mathbf{b}') - x_k(\mathbf{b}) \geq b_k( x_k(\mathbf{b}') - x_k(\mathbf{b})
        \]
        as needed by the definition of MBB, where the inequality follows because if $b_k < b^*$ then the left- and right-hand sides are both zero.
    \item {\em Generalized Second-Price (GSP) Auction for Multi-Slot Ad Auctions.} 
        In the GSP auction, slots are allocated greedily by the respective bid, and each agent pays a price per unit equal to the next-highest bid.  Formally, given the bids $\mathbf{b}$, we let $\pi$ be a permutation of the agents so that $b_{\pi(1)} \geq b_{\pi(2)} \geq \dotsc \geq b_{\pi(n)}$.  That is, $\pi(1)$ is the highest-bidding agent, then $\pi(2)$, etc.  Agent $\pi(k)$ is then allocated to slot $k$ for each $k \leq m$.  That is, $x_{\pi(k)} = \alpha_k$ for each $k \leq m$ and $x_{\pi(k)} = 0$ for each $k > m$.  The payments are set so that $p_{\pi(k)} = x_{\pi(k)} b_{\pi(k+1)}$ for all $k < n$, and $p_{\pi(n)} = 0$.
        \footnote{Note that $p_{\pi(k)} = 0$ whenever $x_{\pi(k)} = 0$.}
        
        We note that GSP is a core auction; see \Cref{app:gsp.core} for a proof. 
        
        To see that the MBB property holds, let $b_k<b_k'$ and suppose that with bid profiles $\mathbf{b}$, agent $k$ is assigned the $j$th slot, while under $\mathbf{b}'$, the agent is assigned the $\ell\leq j$th slot (where we allow the value of a slot to be $0$ if the agent is not in the top $m$ bids). If $j=\ell$, then it is easy to see that the price and allocation of agent $k$ is unchanged, trivially confirming that the MBB condition holds. If instead $j>\ell$, then it is easy to see that $b_{\pi'(\ell+1)}\geq b_k$, with $\pi'$ the permutation under $\mathbf{b}'$, and hence
        \begin{equation*}
           p_k(\mathbf{b}')-p_k(\mathbf{b}')\geq \alpha_{\ell}b_k - \alpha_{\ell}b_{\pi(j+1)}\geq b_k(\alpha_{\ell}-\alpha_{j})=b_k(x_k(\mathbf{b}')-x_k(\mathbf{b})).
        \end{equation*}
        This implies that the MBB property holds for single-slot auctions.
\end{itemize}

\subsection{Proof: Generalized Second Price is a Core Auction}
\label{app:gsp.core}

In this section we show that the GSP auction for sponsored search allocation with separable click rates is a core auction.  Recall the definition of a GSP auction.  There are $m \geq 1$ slots with click rates $1 \geq \alpha_1 \geq \dotsc \geq \alpha_m \geq 0$.  There are $n$ bidders, each bidder placing a bid $b_i \geq 0$.  We will reindex agents in order of bid, so that $b_1 \geq b_2 \geq \dotsc \geq b_n$.  Without loss of generality we will assume $m = n$ (by adding extra bidders with bid $0$ or extra slots with click rate $0$), and we will define $b_{n+1} = \alpha_{m+1} = 0$ for convenience.

In the GSP auction, slots are allocated greedily by bid, and each agent pays a price per unit equal to the next-highest bid.  That is, agent $i$ receives slot $i$ for a declared value of $b_i \alpha_i$, and pays $b_{i+1} \alpha_i$.

We claim that the GSP auction is a core auction.  First, since $b_{i+1} \leq b_i$ for all $i$, we have that $b_{i+1} \alpha_i \leq b_i \alpha_i$, and hence each bidder pays at most her declared welfare for the allocation received.

It remains to show the second property of a core auction.  Choose any subset of bidders $S \subseteq [n]$.  The allocation $\mathbf{y}$ to agents in $S$ that maximizes declared welfare is the one that allocates greedily in index order.  More formally, for each $i \in S$, let $\sigma(i)$ be 1 plus the number of elements of $S$ with index less than $i$.  For example, if $S = \{2, 6, 7\}$, then $\sigma(2) = 1$, $\sigma(6) = 2$, and $\sigma(7) = 3$.  Then the declared-welfare-maximizing allocation $\mathbf{y}$ to agents in $S$ is such that $y_i = \alpha_{\sigma(i)}$ for each $i$, for a total declared welfare of $\sum_{i \in S} b_i \alpha_{\sigma(i)}$.  The core auction property on subset of bidders $S$ therefore reduces to showing that
\begin{equation}
\label{eq:gsp.core}
    \sum_{i \not\in S} b_{i+1} \alpha_i + \sum_{i \in S} b_i \alpha_i \geq \sum_{i \in S} b_i \alpha_{\sigma(i)}.
\end{equation}
To establish inequality \eqref{eq:gsp.core}, we first note that
\begin{align*}
    \sum_{i \in S} b_i \alpha_{\sigma(i)} - \sum_{i \in S} b_i \alpha_i
    & = \sum_{i \in S} b_i (\alpha_{\sigma(i)} - \alpha_i) \\
    & = \sum_{i \in S} \sum_{j = \sigma(i)}^{i-1} b_i (\alpha_j - \alpha_{j+1}) \\
    & \leq \sum_{i \in S} \sum_{j = \sigma(i)}^{i-1} b_{j+1}(\alpha_j - \alpha_{j+1}).
\end{align*}
This final double summation contains an instance of the term $b_{j+1}(\alpha_j - \alpha_{j+1})$ for each $i \in S$ such that $\sigma(i) \leq j < i$.  But for each $j$ and each $i$ such that $\sigma(i) \leq j < i$ (i.e., such that the ``new'' allocation to agent $i$ under $\mathbf{y}$ is slot $j$ or better, and the ``old'' allocation is worse than slot $j$), there must be some $k \leq j$ such that $k \not\in S$.  The number of such $i$ is therefore at most the number of agents at index $j$ or less that are not in $S$.  More precisely, observe that $
   j\geq \vert \{i\in S: i\leq j\}\vert+\vert \{i\in S: \sigma(i)\leq j<i\}\vert$. This holds as $\sigma(i)\leq i$ for all $i$ and is injective, so the sets on the right hand side are evidently disjoint and $\sigma$ maps each such element to a unique index in $[j]$.
But $j = \vert \{i\in S: i\leq j\}\vert+\vert \{k\not\in S: k\leq j\}\vert$, so cancelling terms shows that $\vert \{k\not\in S: k\leq j\}\vert \geq \vert \{i\in S: \sigma(i)\leq j<i\}\vert$.
Thus, by rearranging the order of summation, we have

\begin{align*}
\sum_{i \in S} \sum_{j = \sigma(i)}^{i-1} b_{j+1}(\alpha_j - \alpha_{j+1})
& \leq \sum_{j = 1}^{m} b_{j+1}(\alpha_j - \alpha_{j+1}) \times |\{ k \leq j \colon k \not\in S \}| \\
    & = \sum_{k \not\in S} \sum_{j \geq k} b_{j+1}(\alpha_j - \alpha_{j+1}) \\
    & \leq \sum_{k \not\in S} \sum_{j \geq k} b_{k+1}(\alpha_j - \alpha_{j+1}) \\
    & \leq \sum_{k \not\in S} b_{k+1} \alpha_k
\end{align*}
where the final inequality is a telescoping sum.  We therefore conclude that
\[ \sum_{i \in S} b_i \alpha_{\sigma(i)} - \sum_{i \in S} b_i \alpha_i \leq \sum_{k \not\in S} b_{k+1} \alpha_k\]
and rearranging yields the desired inequality \eqref{eq:gsp.core}.




\section{Additional Discussions}

\subsection{Interpreting Liquid Welfare as Compensating Variation}
\label{app:cv}

Compensating variation (CV) is a measure of welfare change.  It refers to an amount of money an agent would need to be given in order to reach their original utility after some change in a market, typically a change in prices. For a textbook discussion, see, for example, Chapter 5 of~\cite{perloff2009microeconomics}.

For example, suppose that an agent's utility can be written as $U(w,x)$, where $w$ represents the agent's wealth (in money), $x$ is some allocation of non-monetary goods, and $U$ is  non-decreasing in $w$.  Suppose there is a default outcome $(w_0,x_0)$ for the agent, denoting their wealth and allocation prior to some proposed change.  Following some change in the marketplace, the agent's outcome shifts to $(w_1,x_1)$.  The CV of this change, for this agent, is the minimum monetary transfer $p$ such that
\[ U(w_1+p,x_1) = U(w_0,x_0), \]
if such a $p$ exists.  More generally, to allow for discontinuities in the utility function, we define the CV as the supremum of transfers for which the agent would have lower utility:
\[ \text{CV} := \sup_p \cbr{ U(w_1+p,x_1) < U(w_0,x_0) }. \]

We claim that liquid welfare can be interpreted as (the negation of the) compensating variation in our setting of a budget-constrained agent with budget $B$ and valuation function $v$ over a space $X$ of goods that includes a null outcome $\emptyset$ with $v(\emptyset) = 0$.

To see this, suppose the agent's utility function is $U(w,x) = w + v(x)$ if $w \geq 0$, and $U(w,x) = -\infty$ if $w < 0$. We will take the default outcome to be $(w_0,x_0) = (B,\emptyset)$: no allocation and a wealth of $B$.  The alternative outcome is $(w_1,x_1) = (B,x)$, an allocation $x \in X$ and no change in wealth.  If $v(x) \leq B$, then 
\[ U(B-v(x),x) = B - v(x) + v(x) = B = U(B,0) \]
so the compensating variation is $-v(x)$. If $v(x) > B$ we have
\[ U(B+p,x) = -\infty < U(B,0) \]
for all $p < -B$ and
\[ U(B+p,x) = B + p + v(x) > B = U(B,0) \]
for all $p \geq -B > -v(x)$, and hence $\text{CV} = -B$.  We conclude that the compensating variation is precisely $-\min\{B, v(x)\}$, the negation of the liquid welfare of allocation $x$.

\subsection{Vanishing Regret Does Not Imply Approximately Optimal Liquid Welfare}
\label{app:regret.example}

In this work we construct bidding algorithms that simultaneously achieve low individual regret and an approximation to the optimal aggregate liquid welfare.  This combination is reminiscent of similar results from the analysis of smooth games, which include many auction games.  No-regret learning algorithms converge (in distribution) to coarse correlated equilibria (CCE), and for smooth games it is known that the allocations obtained at CCE approximately optimize the aggregate welfare~\cite{TotalAnarchy-stoc08,Roughgarden-PoA-stoc09,roughgarden2017price}.  Thus, for auction games that satisfy smoothness conditions, achieving low regret directly implies (on a per-instance basis) an aggregate welfare approximation.

These results do not directly apply in our case: our class of games includes non-smooth games, and our welfare metric is different.  But one might still wonder whether a similar direct implication applies. In this section we present an example showing that this direct implication does not hold in our setting with budgets and liquid welfare, even for single-item second-price auctions.  We provide an auction environment and construct a coarse correlated equilibrium for the agents.  The fact that agents employ bidding strategies that forms a coarse correlated equilibrium means, in particular, that all agents achieve zero regret.  Nevertheless, in our example, the allocation that results from this bidding equilibrium has an unbounded approximation factor with respect to expected liquid welfare.

\begin{proposition}\label[proposition]{prop:no-regret low liquid welfare}
There exists an instance of a second-price auction for a single good and two agents,
described by a distribution over valuations,
and a coarse correlated equilibrium of the auction (i.e., a pair of bidding strategies under which each agent has regret zero) such that
the resulting expected liquid welfare is arbitrarily small compared to the optimal liquid welfare.
\end{proposition}
\begin{proof}
%
The per-round auction in our example is a second-price auction for a single good.  There are two agents.  The distribution $F$ over value profiles is such that $(v_1,v_2) = (2,1)$ with probability $1$.
The target per-round spend rates for the agents are $(\rho_1, \rho_2) = (1/(1+{\overline{\mu}}), 1)$, where $\overline{\mu}$ is some arbitrarily large constant independent of $T$.


Consider the following bidding strategies for the agents. Agent 1 bids value $2$ for all periods and agent $2$ bids $0$ for all periods.
Under this strategy profile,
agent $1$ receives all the items over the $T$ rounds, and both agents pay nothing.

Under these strategies, agent $1$ has zero regret as she obtains the maximum possible value.  Agent $2$ likewise has zero regret, since no choice of bid less than $v_2$ can cause her to win in any round.  Note that the agents would still have zero regret if their objective were changed to maximizing value minus (any scalar multiple $\lambda \in [0,1]$ times) payments.

The liquid welfare of this equilibrium is $T/(\overline{\mu}+1)$, the total budget of agent $1$.  However, allocating all goods to agent $2$ achieves a liquid welfare of $T$.  Since $\overline{\mu}$ is an arbitrarily large constant, this approximation factor is unbounded.

\end{proof}

\subsection{Linear Pacing and Advertiser-Feasible Bidding Strategies}
\label{app:external}

In our analysis we restricted our attention to bidding strategies that are \emph{linear} in the following sense.  In each round $t$ an agent $k$ first chooses a pacing factor $\mu_{k,t} \geq 0$, \emph{then} $v_{k,t}$ is revealed and the agent bids $b_{k,t} = v_{k,t} / (\mu_{k,t} + 1)$.  The important restriction is that $\mu_{k,t}$ is chosen independently of $v_{k,t}$.  This choice of $\mu_{k,t}$ can therefore be interpreted as a linear mapping from $v_{k,t}$ to $b_{k,t}$.

When the underlying auction is truthful, this restriction to linear policies is known to be without loss.  However, for non-truthful auctions (such as a first-price auction) a value-maximizing or utility-maximizing agent might be able to strictly improve their outcome with a non-linear mapping from value to bid.  For example, in a first-price auction where the highest competing bid is known to be exactly $1$, an agent with value $2$ and an agent with value $3$ would both optimize their quasi-linear utility by bidding (slightly more than) $1$ even though this is not implementable with a linear bidding strategy.

One motivation for our restriction to linear policies is that they capture bidding strategies that are implementable by an \emph{external} agent (i.e., an advertiser or third-party bid optimizer) who does not have visibility into the precise value estimates of the platform, such as click-rate estimates.

We now make this intuition more precise by adding click rates to our auction model.  We will associate each round $t$ with a potential ad impression to be shown to a user.  The auction in round $t$ determines which ad will be shown; the user may or may not click on the ad.  Each agent $k$ has a value $v_k$ that is obtained only if their advertisement is clicked.  The advertising platform has access to an estimated click rate $c_{k,t} \in [0,1]$ that describes the likelihood that the user will click on agent $k$'s advertisement if shown.  Thus the expected value to agent $k$ of winning the auction in round $t$ is $v_k c_{k,t}$.  We will write $v_{k,t} = v_k c_{k,t}$.

In each round, the agent can place a bid $\beta_{k,t}$ that is interpreted as a willingness to pay per click.  Once all agents have placed bids, the mechanism multiplies these by the corresponding click rate estimates to determine the effective bids for winning the auction.  These are denoted $b_{k,t} = \beta_{k,t} c_{k,t}$.  The auction mechanism in round $t$ is then resolved using the effective bids $b_{k,t}$.

We note that if the click rates $c_{k,t}$ are visible to the agents then this formulation is equivalent to our model from Section~\ref{sec:model} as it simply expresses the values $v_{k,t}$ and the bids $b_{k,t}$ in a different way.  However, for an advertiser that is external to the platform, the bid $\beta_{k,t}$ must be placed without observing the realization of the estimated click rate $c_{k,t}$.  Equivalently, the estimated click rate $c_{k,t}$ is realized after the bid $\beta_{k,t}$ is fixed.  We can therefore write $\mu_{k,t} = \tfrac{v_k}{\beta_{k,t}} - 1$ which is independent of $c_{k,t}$.  We then have that for every possible realization of $v_{k,t}$,
\[ b_{k,t} = \beta_{k,t} c_{k,t} = \frac{1}{\mu_{k,t} + 1} v_k c_{k,t} = \frac{1}{\mu_{k,t} + 1} v_{k,t}  \]
and hence the advertiser's effective bidding strategy will be linear.

\newpage
\section{Omitted Proofs from \Cref{sec:liquid.welfare}: Aggregate Guarantees}
\label{app:proofs}

\subsection{Motivating ex ante liquid welfare}

Our definition of ex ante liquid welfare assumes that
the same allocation rule $y$ is used in every round.  We now show that this is without loss of generality.
The following lemma shows that given any allocation sequence rule
there is a single-round allocation rule with the same ex ante liquid welfare.

\begin{lemma}
\label[lemma]{lem:single.round}
Let $\tilde{\mathbf{y}}:[0,\overline{v}]^{nT}\to X^{T}$ be an allocation sequence rule that takes in the entire sequence $\mathbf{v}_1,\ldots,\mathbf{v}_T$ and allocates $\tilde{y}_{k,t}(\mathbf{v}_1,\ldots,\mathbf{v}_{T})$ units to agent $k$ at time $t$. Then there exists a (single-round) allocation rule ${y}:[0,\overline{v}]^n\to X$ such that
\begin{align*}
\tilde{W}(\tilde{\mathbf{y}},F)
&\triangleq \textstyle
\sum_{k=1}^n\; \min\cbr{B_k,\;\E_{\mathbf{v}_1,\ldots,\mathbf{v}_T\sim F}\left[\sum_{t=1}^T\; \tilde{y}_{k,t}(\mathbf{v}_1,\ldots,\mathbf{v}_T) v_{k,t}\right]}\\
    &= \textstyle \sum_{k=1}^n\; T\cdot \min\left\{\rho_k,\mathbb{E}_{\mathbf{v}\sim F}\left[ y_{k}(\mathbf{v}) v_{k}\right]\right\} = \overline{W}(\mathbf{y},F).
\end{align*}
\end{lemma}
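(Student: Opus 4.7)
\medskip
\noindent\textbf{Proof plan for \Cref{lem:single.round}.}

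The natural candidate is to define $y$ by averaging the conditional expectations of the per-round rules $\tilde y_t$. Specifically, for each agent $k$ and each realization $\mathbf{v}\in [0,\overline{v}]^n$, set
\[
    y_k(\mathbf{v}) \;\triangleq\; \frac{1}{T}\sum_{t=1}^{T}\; \E_{\mathbf{v}_1,\ldots,\mathbf{v}_T\sim F}\!\bigl[\tilde y_{k,t}(\mathbf{v}_1,\ldots,\mathbf{v}_T)\,\bigm|\,\mathbf{v}_t=\mathbf{v}\bigr].
\]
The plan is then to verify (i) feasibility ($y(\mathbf{v})\in X$ for a.e.\ $\mathbf{v}$), and (ii) the welfare identity $T\cdot \E[y_k(\mathbf{v})v_k]=\E[\sum_t \tilde y_{k,t}\,v_{k,t}]$, from which the lemma follows.

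For (ii), which is the cleaner half, I would use the tower property and the fact that $\mathbf{v}_t\sim F$ marginally for every $t$. Writing $g_t(\mathbf{v}) \triangleq \E[\tilde y_{k,t}\mid \mathbf{v}_t=\mathbf{v}]$, the tower property gives
\[
    \E_{\mathbf{v}\sim F}\!\bigl[g_t(\mathbf{v})\,v_k\bigr] \;=\; \E\!\bigl[\tilde y_{k,t}(\mathbf{v}_1,\ldots,\mathbf{v}_T)\,v_{k,t}\bigr],
\]
since $v_k$ (as a function of $\mathbf{v}$) is itself $\sigma(\mathbf{v}_t)$-measurable. Summing over $t$ and dividing by $T$ yields $T\cdot\E[y_k(\mathbf{v})v_k] = \E[\sum_t \tilde y_{k,t}\,v_{k,t}]$. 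Multiplying this identity by $T$ inside the $\min$ on the right-hand side and using $B_k=T\rho_k$ gives exactly $\tilde W(\tilde{\mathbf{y}},F)=\overline W(\mathbf{y},F)$.

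For (i), which is the main technical point to watch, I would argue as follows. Since $\tilde{\mathbf{y}}_t(\mathbf{v}_1,\ldots,\mathbf{v}_T)\in X$ with probability one and $X$ is convex and closed, any conditional expectation of $\tilde{\mathbf{y}}_t$ given a sub-$\sigma$-algebra lies in $X$ almost surely (this is the standard fact that convex closed sets are preserved under conditional expectations of vector-valued random variables; it follows from Jensen's inequality applied to the convex indicator of $X^c$, or equivalently by noting that any supporting hyperplane of $X$ gives an inequality preserved by conditional expectation). Hence $(g_{1,t}(\mathbf{v}),\ldots,g_{n,t}(\mathbf{v}))\in X$ for a.e.\ $\mathbf{v}$, and the average over $t$ is again in $X$ by convexity; taking a version of the conditional expectation that is measurable in $\mathbf{v}$ makes $y(\mathbf{v})$ a well-defined allocation rule.

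The only subtlety I foresee is this feasibility/measurability step: one must choose measurable versions of the conditional expectations and handle the null set on which $y(\mathbf{v})$ might fall outside $X$. Since the bound $y(\mathbf{v})\in [0,1]^n$ is automatic and $X$ is downward closed, one can simply redefine $y$ on the exceptional null set to any fixed feasible point (e.g.\ the zero allocation) without affecting the expectation computations in (ii). With these details in place, combining (i) and (ii) delivers the equality in the lemma statement.
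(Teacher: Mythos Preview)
Your proposal is correct and follows essentially the same route as the paper: define the single-round rule as the time-average of the conditional expectations $\E[\tilde y_{k,t}\mid \mathbf{v}_t=\mathbf{v}]$, use convexity and closedness of $X$ for feasibility, and use the tower property plus $\mathbf{v}_t\sim F$ to get the per-agent welfare identity. If anything, you are slightly more careful than the paper about the measure-theoretic details (measurable versions, null sets), which the paper handles informally.
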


\begin{proof}
For each $t$, by slightly abusing notation,
we define an allocation rule
$\mathbf{\hat{y}}_{t}:[0,\overline{v}]^n\to [0,1]^n$ by
\begin{equation*}
    \hat{y}_{k,t}(\mathbf{v}_t)\triangleq \mathbb{E}_{\mathbf{v}_{-t}\in F^{T-1}}\left[y_{k,t}(\mathbf{v}_1,\ldots,\mathbf{v}_{T})\vert \mathbf{v}_t\right],
\end{equation*}
Note that this is a feasible allocation rule as
{the set of feasible allocations is convex and closed.}  
We have
\begin{align*}
    \mathbb{E}_{\mathbf{v}_1,\ldots,\mathbf{v}_T\sim F}\left[ y_{k,t}(\mathbf{v}_1,\ldots,\mathbf{v}_T)\cdot v_{k,t}\right]&=\mathbb{E}_{\mathbf{v}_t}\left[\mathbb{E}_{\mathbf{v}_{-t}}\left[ y_{k,t}(\mathbf{v}_1,\ldots,\mathbf{v}_T)\cdot v_{k,t}\vert \mathbf{v}_t\right]\right]\\
    &=\mathbb{E}_{\mathbf{v}_t}\left[\hat{y}_{k,t}(\mathbf{v}_t)\cdot v_{k,t}\right]
    =\mathbb{E}_{\mathbf{v}\sim F}\left[\hat{y}_{k,t}(\mathbf{v})\cdot v_{k}\right].
\end{align*}
Now we define the allocation rule $\tilde{\mathbf{y}}$ by setting $\tilde{\mathbf{y}}_k=\frac{1}{T}\sum_{t=1}^T \hat{y}_{k,t}$
for each $k\in[n]$, which is again feasible because the set of feasible allocations is convex. By the linearity of the expectations operator, we have
\begin{align*}
    &\mathbb{E}_{\mathbf{v}_1,\ldots,\mathbf{v}_T\sim F}\left[\sum_{t=1}^T y_{k,t}(\mathbf{v}_1,\ldots,\mathbf{v}_T)\cdot v_{k,t}\right]
    =\sum_{t=1}^T \mathbb{E}_{\mathbf{v}_1,\ldots,\mathbf{v}_T\sim F}\left[y_{k,t}(\mathbf{v}_1,\ldots,\mathbf{v}_T)\cdot v_{k,t}\right]\\
    &=\sum_{t=1}^T \mathbb{E}_{\mathbf{v}\sim F}\left[\hat{y}_{k,t}(\mathbf{v})\cdot v_{k}\right]
    =\mathbb{E}_{\mathbf{v}\sim F}\left[\sum_{t=1}^T \hat{y}_{k,t}(\mathbf{v})\cdot v_{k}\right]
    =T\cdot \mathbb{E}_{\mathbf{v}\sim F}\left[\tilde{y}_k(\mathbf{v})\cdot v_k\right]. \qedhere
\end{align*}
\end{proof}

\subsection{Proof of Lemma~\ref{lem:stronger}}
The assumption that
$\mu_{k,t_2}\neq \flat$
means that the agent participates in all periods of $[t_1, t_2)$.  Moreover, if $t_2=t_1+1$, then \Cref{eq:val} is trivial as the {third} term on the right hand of side of inequality (\ref{eq:val}) is zero and $z_{k,t_1}\geq 0$. Therefore, we may assume $t_2\geq t_1+2$.

By the definition of an epoch, there is no negative projection in the dynamics on the epoch until possibly time $t_2$.  I.e., $\mu_{k,t} > 0$ for all $t_1 \leq t < t_2$.  The pacing recurrence condition implies that
\begin{align*}
0   &<\mu_{k,t_2-1}
    = P_{[0,\overline{\mu}]}(\mu_{k,t_{2}-2}+\epsilon(z_{k,t_{2}-2}-\rho_{k}))
    \leq
     \mu_{k,t_{2}-2}+\epsilon(z_{k,t_{2}-2}-\rho_{k}) \\
    & \leq \epsilon \sum_{t=t_1}^{t_2-2} (z_{k,t}-\rho_k)
     = \epsilon \left(\sum_{t=t_1}^{t_2-2}
            z_{k,t}\right)-\epsilon(t_{2}-t_1-1)\rho_k.
\end{align*}
{The first and second inequalities follow because the multipliers are positive during an epoch. The third inequality follows from applying the second inequality repeatedly. }  Note that the inequality holds even if there is a positive projection during the epoch (i.e., if $\mu_{k,t} = \overline{\mu}$ for some $t \in [t_1, t_2)$).

Thus, the expenditure of agent $k$ on this epoch is at least $\sum_{t=t_1}^{t_2-2} z_{k,t}\geq (t_2-t_1-1)\rho_k$. Let us now consider the \emph{value} obtained by the agent on this epoch. Because agents never overbid in the pacing algorithm and because payments are always lower than the value in a core auction, the value obtained by the agent on the epoch is at least the expenditure, which we just lower bounded. To get a slight sharpening of this, note that on the first period of the epoch, the agent actually receives $x_{k,t_1}v_{k,t_1}$ value and pays $z_{k,t_1}$, which we know is at most $x_{k,t_1}v_{k,t_1}$ from the first property of core auctions and the no overbidding condition. Therefore, we can trade $z_{k,t_1}$ expenditure for $x_{k,t_1}v_{k,t_1}$ value. It follows from the above bound and this observation that
\begin{equation*}
\sum_{t=t_1}^{t_2-1} x_{k,t}v_{k,t}
\geq x_{k, t_1}v_{k,t_1}
    -z_{k,t_1}+\sum_{t=t_1}^{t_2-2} z_{k,t}
\geq x_{k,t_1}v_{k,t_1}-z_{k,t_1}+ (t_2-t_1-1)\rho_k.
\qedhere
\end{equation*}

\subsection{Proof of Lemma~\ref{lem:concentration}}

Let $\Delta_t = X_tY_t + (1-X_t)\rho- \left(\mathbb{E}[X_tY_t + (1-X_t)\rho\vert \mathcal{F}_{t-1}]\right)$. Clearly, the sequence $\Delta_t$ forms a $\mathcal{F}_t$-martingale difference sequence by construction. Moreover, we observe that
\begin{equation*}
    \mathbb{E}[X_tY_t + (1-X_t)\rho\vert \mathcal{F}_{t-1}]  = X_t\mathbb{E}[Y_t\vert \mathcal{F}_{t-1}] + (1-X_t)\rho = X_t\mathbb{E}[Y_t] + (1-X_t)\rho,
\end{equation*}
where we use the facts that  $X_t$ is $\mathcal{F}_{t-1}$-measurable and t, $Y_t$ is independent of $\mathcal{F}_{t-1}$. It follows that $\Delta_t\in [-\mathbb{E}[Y_t],\overline{v}-\mathbb{E}[Y_t]]$. As an immediate consequence of the Azuma-Hoeffding inequality, we obtain
\begin{equation}
\label{eq:ah}
    \Pr\left(\sum_{t=1}^T \Delta_t \geq \theta\right)\leq \exp\left(\frac{-2\theta^2}{T\overline{v}^2}\right).
\end{equation}
But observe that
\begin{align*}
    \left\{\sum_{t=1}^T \Delta_t \geq  \theta\right\} &= \left\{\sum_{t=1}^T [X_tY_t+(1-X_t)\rho] \geq \theta + \sum_{t=1}^T [X_t\mathbb{E}[Y_t] + (1-X_t)\rho]\right\}\\
    &\supseteq \left\{\sum_{t=1}^T [X_tY_t+(1-X_t)\rho] \geq \theta + T\cdot \rho \right\},
\end{align*}
using the assumption $\mathbb{E}[Y_t]\leq \rho$. This inclusion together with \Cref{eq:ah} yields \Cref{eq:concentration}.
\newpage

\section{Online Convex Optimization: Regret Bounds from Prior Work}
\label{app:OCO}

Recall that we establishing individual guarantees through a reduction to \emph{online convex optimization} (OCO). This section summarizes the relevant background and guarantees from prior work on OCO.

The guarantees for \cref{alg:bg} invoke a reduction to \emph{stochastic gradient descent (SGD)}, a very standard OCO algorithm. The additional guarantees in \Cref{sec:other} invoke several other gradient-based OCO algorithms: \emph{optimistic gradient descent (OGD)}
\citep{Rakhlin-colt13,Rakhlin-nips13,mokhtari-aistats20,Jiang-siam25}, \emph{optimistic mirror descent} (OMD) \citep{Rakhlin-colt13,Rakhlin-nips13}, and \emph{optimistic follow-the-regularized-leader} (OFTRL) \cite{Rakhlin-colt13} with Euclidean regularizer.

\begin{algorithm}
\caption{Online Convex Optimization with gradient estimates}\label{alg:OCO}
\textbf{Environment Parameters:} convex set $\mK\subset \R$, time horizon $T$.

Algorithm chooses a point $x_1\in \mK$.

\For{$t=1,\ldots,T$}{

Nature chooses a convex function $f_t:\mK\to\R$, possibly depending on history $\mH_t$

\tcp{history $\mH_t$ comprises $\rbr{x_s, f_s, \myGrad[s]}$ for each round $s<t$}

Nature chooses a gradient estimate $\myGrad[t]\in \R$, possibly at random

\tcc{$\myGrad[t]$ is an unbiased estimate of $\nabla f_t(x_t)$
given $\mH_t$: $\E\sbr{\myGrad[t]\mid\mH_t}=\nabla f_t(x_t)$}

Algorithm observes $\myGrad[t]$ and chooses $x_{t+1}\in\mK$.
}
\end{algorithm}

The general framework of OCO with gradient estimates is summarized in
\Cref{alg:OCO}. While we focus on $\mK\subset \R$, we note in passing that all algorithms and guarantees extend to $\mK\subset \R^d$, $d\in\N$.

Within this framework, the particular algorithms are defines as follows. Throughout, $\eps>0$ denotes the step-size, and $P_{\mK}(x)$ denotes the projection of $x\in\R$ into $\mK$. For OMD and OFTRL, in each round $t$, an algorithm is given an estimate $M_t$ of the \emph{next} gradient $\nabla f_{t+1}(x_{t+1})$. Thus:

\begin{description}
\item[SGD]
$ x_{t+1} \leftarrow P_{\mK}
    \rbr{x_t-\eps\cdot\myGrad[t]}$.

\item[OGD]
$ x_{t+1} \leftarrow P_{\mK}
    \rbr{x_t-\eps\cdot\myGrad[t] - \eps'\cdot\myGrad[t-1]}$,
where $\eps' \in [-\eps/2,0]$ is another parameter.

\item[OMD]
$ \hat{x}_{t+1} \leftarrow
    P_{\mK}
        \rbr{\hat{x}_{t}-\eps\cdot\myGrad[t]}$ and
 $x_{t+1} \leftarrow
    P_{\mK}
        \rbr{\hat{x}_{t+1}-\eps \cdot M_t)}$,
where $\hat{x}_1=0$.

\item[OFTRL]
$ x_{t+1} \leftarrow
    P_\mK
        \rbr{-\eps\rbr{M_t+{\textstyle \sum_{s\in[t]}}\; \myGrad[s]}}$.
\end{description}

\begin{remark}
The original versions of OMD and OFTRL with Euclidean regularizer are defined for $\mK\subseteq \R^d$, $d\in\N$ and involve explicit regularizers and Bregman divergences. We only provide a simpler  equivalent formulation for $d=1$. The default choice for $M_t$ is $M_t=\myGrad[t]$, see \Cref{rem:predictable}.
\end{remark}

We use the following guarantees from prior work:

\begin{theorem}\label{thm:sgd}
Consider the framework in \Cref{alg:OCO}. Fix some numbers $D\geq 1$, $G>0$ and $P\geq 1$. Assume the feasible set $\mK$ has diameter at most $D$, and that  $|\myGrad[t]|\leq G$ almost surely for all rounds $t$. Fix an arbitrary comparator sequence $u_1,\ldots,u_T \in \mK$ satisfying $\sum_{t=1}^{T-1} \|u_{t+1}-u_t\|_2 +1\leq P$ almost surely. For OMD and OFTRL, posit that $\vert M_t\vert \leq G$ for all rounds $t$. Then SGD, OGD and OMD satisfy the following regret bound, for any given step size $\eps>0$:
\begin{align}\label{eq:thm:sgd}
    \E\sbr{\sum_{t=1}^T f_t(x_t) - \sum_{t=1}^T f_t(u_t)} \leq
    O\rbr{ \frac{D^2 P}{\eps}+\eps\cdot G^2 T}.
\end{align}
Moreover, OFTRL satisfies \refeq{eq:thm:sgd} for the stationary comparator (when all $u_t$'s are the same).
\end{theorem}

\begin{remark}
The result for SGD and OGD is essentially contained in the proof of
\citep[Theorem 10.1.1]{Hazan-OCO-book}. Since the latter only considers the special case of (deterministic) gradient descent, we spell out the proof below for the sake of completeness. The result for OMD is contained in \citep[Lemma 1]{jadbabaie2015online}. The result for OFTRL follows from \citep[Theorem 19 and Lemma 20]{Syrgkanis-nips15}, extending an earlier analysis of \citep{rakhlin2013online}.
\end{remark}

\subsection{Proof of \Cref{thm:sgd} for SGD and OGD}


Focus on OGD (SGD being a special case with $\eps'=0$).
It will be convenient to study the equivalent reparameterization of the dynamics:
\begin{equation}
\label{eq:reparameterize-final}
    x_{t+1} = P_{\mK}
    \rbr{x_t-\eps\cdot\myGrad[t] - \eps'\cdot N_t},
\end{equation}
where we define $N_t: = \widetilde{\nabla}_t-\widetilde{\nabla}_{t-1}$ and now have the equivalent assumption that $0 \leq \eps'\leq \eps$. (We've redefined $\eps \leftarrow \eps + \eps'$ and $\eps' \leftarrow - \eps' $.) Since we changed the original value of $\eps$ by a factor of at most $2$, it suffices to prove the regret bound for \eqref{eq:reparameterize-final}.

We have deterministically by convexity that
\begin{align*}
    \sum_{t=1}^T f_t(x_t) - \sum_{t=1}^T f_t(u_t) &\leq \sum_{t=1}^T \nabla f_t(x_t)^T(x_t - u_t)
     = \E\sbr{\sum_{t\in[T]} \tilde{\nabla_t}^T(x_t - u_t)},
\end{align*}
where the expectation is over all random choices in the dynamics using the assumption on the gradient estimates. 
Observe that
\begin{align*}
    \|x_{t+1}-u_t\|^2&\leq \|x_t-\epsilon \widetilde{\nabla}_t-\epsilon' N_t-u_t\|^2\\
    &=\|x_t-u_t\|^2+\|\eps\widetilde{\nabla}_t+\eps'N_t\|^2-2\epsilon \widetilde{\nabla}_t^T(x_t-u_t)-2\epsilon' N_t^T(x_t-u_t)\\
    &\leq \|x_t-u_t\|^2+9\epsilon^2 G^2-2\epsilon \widetilde{\nabla}_t^T(x_t-u_t)-2\epsilon' N_t^T(x_t-u_t)
\end{align*}
We first show that the last term is small. We surely have
\begin{align*}
    \sum_{t=1}^T N_t^T(x_t-u_t)&=\sum_{t=1}^T \widetilde{\nabla}_t^T(x_{t+1}-x_{t})+\sum_{t=1}^T\widetilde{\nabla}_t^T(u_{t+1}-u_{t})\\
    &\leq G\left(\sum_{t=1}^T \|x_{t+1}-x_t\|+\sum_{t=1}^T\|u_{t+1}-u_t\|\right)\\
    &\leq G\left(\sum_{t\in[T]} 3G\epsilon+P\right)
    =O(\epsilon G^2T + GP).
\end{align*}
where we simply use the definition of $N_t$ and rearrange terms. Therefore, we have
\begin{align*}
    2\sum_{t=1}^T \widetilde{\nabla}_t^T (x_t-u_t)&\leq \sum_{t=1}^T \frac{\|x_{t+1}-x_t\|^2-\|x_t-u_t\|^2}{\epsilon}+O\left(\eps'G^2T + \frac{ \eps' GP}{\eps}\right)\\
    &\leq \sum_{t=1}^T \frac{\|x_{t+1}-x_t\|^2-\|x_t-u_t\|^2}{\epsilon}+O\left(\eps G^2T + GP\right),
\end{align*}
where the extra contribution is from the last term we just bounded. Here, we use the fact that $\eps'\leq \eps$. At this point, the sum is bounded exactly as in Theorem 10.1.1 of \cite{Hazan-OCO-book} for an overall bound of
\begin{align*}
    \sum_{t=1}^T \nabla f_t(x_t)^T(x_t - u_t)\leq  O\left(\frac{D^2}{\epsilon}P+\epsilon G^2T + GP\right).
\end{align*}
Moreover, it is easy to see that $GP$ is bounded by the other terms (up to constants) for any choice of $\epsilon$. This is because the minimum is (up to constants) $DG\sqrt{PT}\gtrsim D^{1/2}GP\geq GP$. Here, we use the fact that $P\leq 1+2DT$ and the assumption $D\geq 1$.

\section{Omitted Proofs from \Cref{sec:regret}: Individual Guarantees}

\subsection{Stochastic environment}
\label{apx:proof stochastic}

We need to show that the perfect pacing multiple $\nu^*$ approximately optimizes the objective over all multipliers, for (a) value maximization and (b) second-price auctions. (While part (b) is implicit in \citep{BalseiroGur19}, we provide a proof here for the sake of completeness.)

\begin{lemma}\label[lemma]{lm:regret-stochastic-bench}
In the setting of \Cref{cor:regret-stochastic-gen}, assume either that (a) the objective is value-maximization: $\uniObj[t]=V_t$, or that (b) the auction rule is second-price. Then
\begin{align}
\uniObjFixed(\nu^*) \geq \sup_{\mu\in[0,\overmu]} \uniObjFixed(\mu) - O(\sqrt{T}).
\end{align}
\end{lemma}

The rest of this subsection proves this Lemma.




Fix the mixing parameter $\gamma\in [0,1]$ in the unified objective. Denote $U_t=U$, $V_t = V$, and $Z_t=Z$ (this being a stochastic environment, there's no dependence on the round $t$). Fix some pacing multiplier $\mu\in [0,\overline{\mu}]$. We will repeatedly use the fact that $\lambda U+(1-\lambda)V = V - \lambda Z$ by definition, and the same identity holds pointwise as well.

Define the random process $M_k = \sum_{t=1}^k \left[x_tv_t-\lambda z_t - (V(\mu)-\lambda Z(\mu))\right],
$ where $x_t$ and $v_t$ are the (random) allocation and valuations at each time $t$ when bidding with multiplier $\mu$ and $z_t$ is the expenditure. Notice that $M_k$ is a martingale by construction and $\mathbb{E}[M_{1}]=0$. Define $\tau_{\mu}$ to be first time that such a bidding algorithm runs out of money at the end of the period, or $T$ if this does not occur. Because $M_k$ is a martingale, by the Optional Stopping Theorem,
\begin{equation}
\label{eq:Mmart}
     \uniObjFixed(\mu)\leq \mathbb{E}\left[\sum_{t=1}^{\tau_{\mu}} (x_tv_t-\lambda z_t)\right]= \mathbb{E}\left[\sum_{t=1}^{\tau_{\mu}} (V(\mu)-\lambda Z(\mu))\right] = \mathbb{E}[\tau_{\mu}]\cdot (V(\mu)-\lambda Z(\mu)),
\end{equation}
where the inequality occurs because the stopping rule counts the (non-negative) value or utility obtained on the period where the agent may (strictly) exceed the budget, and then using Wald's identity in the last step as the function $V(\mu)$ is constant over time by assumption so independent of $\tau_{\mu}$. 

We now derive an upper bound on the expectation of the stopping time. On the one hand, we certainly have $\mathbb{E}[\tau_{\mu}]\leq T$ as $\tau _{\mu} \leq T$ by definition. An analogous martingale argument with $N_k = \sum_{t=1}^k \left[z_t-Z(\mu)\right]$ (where $z_t$ is the (random) expenditure at time $t$) similarly implies that
\begin{equation}
\label{eq:Nmart}
    \mathbb{E}[\tau_{\mu}]\cdot Z(\mu) =  \mathbb{E}\left[\sum_{t=1}^{\tau_{\mu}}z_t\right]\leq B+\overline{v},
\end{equation}
where the extra term arises in the analysis because the agent may spend $\overline{v}$ on the final round before she exceeds her budget. Combining \Cref{eq:Mmart} and \Cref{eq:Nmart} implies that $\uniObjFixed(\mu)\leq (\gamma U(\mu)+(1-\gamma) V(\mu))\cdot \min\left\{T,\frac{B+\overline{v}}{Z(\mu)}\right\}$.

Suppose first that $\mu\geq \nu^*$. In this case, it is clear that $V(\mu)\leq V(\mu^*)$ by monotonicity of the auction. This implies that when $\gamma=0$, we have
\begin{equation*}
\uniObjFixed(\mu)\leq V(\mu)\cdot T\leq V(\nu^*)\cdot T,
\end{equation*}
proving part (a).

For part (b), assume further that the auction is second-price. Then we have the inequality $U(\mu)\leq U(\nu^*)$ since utilities are non-increasing in the multiplier for any fixed valuation and competing bid so long as the pacing multipliers are nonnegative. It follows that for second-price auctions, for any $\gamma\in [0,1]$
\begin{equation*}
    \uniObjFixed(\mu)\leq (\lambda U(\mu)+(1-\lambda)V(\mu))\cdot T\leq (\lambda U(\nu^*)+(1-\lambda)V(\nu^*))\cdot T
\end{equation*}

If instead $\mu<\mu^*$, note that by definition this implies that $Z(\mu^*)=\rho$. For any $\gamma\in [0,1]$ and any MBB auction (not necessarily second-price), we then have:
\begin{align*}
    \uniObjFixed(\mu)&\leq (B+\overline{v})\frac{V(\mu)-\gamma Z(\mu)}{Z(\mu)}\\
    &= (B+\overline{v})\left(\frac{V(\mu)}{Z(\mu)}-\lambda\right)\\
    &\leq (B+\overline{v}) \left(\frac{V(\nu^*)}{Z(\nu^*)}-\lambda\right)\\
    &=\frac{B+\overline{v}}{\rho}(V(\nu^*)-\lambda Z(\mu^*))\\
    &\leq T\cdot (\lambda U(\nu^*)+(1-\lambda)V(\nu^*))+\overline{v}^2/\rho.
\end{align*}
where the inequality is the monotone bang-for-buck property (see the proof of Theorem \ref{thm:regmain}) and then using the fact that $B/\rho=T$.

In either case, a similar Wald argument then shows that
\begin{equation*}
    \uniObjFixed(\nu^*)\geq \left(\gamma U(\nu^*)+(1-\gamma)V(\nu^*)\right)\cdot \mathbb{E}[\tau^*-1]
\end{equation*}
where $\tau^*$ is the stopping time of the strategy that paces using $\nu^*$. However, it is easy to see that the expected stopping time is at least $T-O(\sqrt{T})$ by considering the fluctuations of the super-martingale $z_{t}-\rho$ since $\mathbb{E}[z_t]:=Z(\nu^*)\leq \rho$ (using e.g. the Azuma-Hoeffding inequality and integrating the tail). The desired inequality follows, noting that we treat the other parameters as absolute constants by assumption.

\subsection{Artificial objective $H_t$}
\label{app:H_t}

\begin{lemma}\label[lemma]{lem:H}
The function $H_t$ is convex and is $(\overline{v}+\rho)$-Lipschitz.
\end{lemma}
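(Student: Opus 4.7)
The plan is to differentiate $H_t$ and read off both properties directly from the sign and magnitude of $H_t'$.

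By the fundamental theorem of calculus applied to the definition \eqref{eq:def-H}, we have $H_t'(\mu) = \rho - Z_t(\mu)$ (where differentiability holds almost everywhere; on the nullset where $Z_t$ has jumps we can use subgradients, but since $Z_t$ is $\lambda$-Lipschitz under Assumption~\ref{assumption:regularity2}, we may treat it as differentiable a.e.\ in the standard sense). For convexity, I would recall that the paper already observed (just after the auction model is introduced) that the payment rule $p_k$ is weakly increasing in the agent's own bid, and the bid $v/(1+\mu)$ is decreasing in $\mu$, so the expected payment $Z_t(\mu)$ is monotonically non-increasing in $\mu$. Hence $H_t'(\mu) = \rho - Z_t(\mu)$ is monotonically non-decreasing in $\mu$, which is exactly the definition of convexity of $H_t$.

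For the Lipschitz bound, I would bound $|H_t'(\mu)|$ uniformly. Individual rationality of the auction implies the payment is at most the bid, and the generalized pacing algorithm bids at most the value $v_{k,t} \leq \overline{v}$; taking expectations gives $0 \leq Z_t(\mu) \leq \overline{v}$. Therefore $H_t'(\mu) = \rho - Z_t(\mu) \in [\rho - \overline{v}, \rho]$, so $|H_t'(\mu)| \leq \max(\rho, \overline{v} - \rho) \leq \rho + \overline{v}$. This gives the claimed $(\overline{v}+\rho)$-Lipschitz continuity.

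There is no real obstacle here: both properties reduce to elementary facts about $Z_t$ (monotonicity from auction monotonicity, boundedness by $\overline{v}$ from IR plus no-overbidding) that have already been established in Section~\ref{sec:model} and in the definition of generalized pacing. The only mild subtlety is that $Z_t$ need not be differentiable everywhere in principle, but Assumption~\ref{assumption:regularity2} guarantees it is absolutely continuous, so $H_t$ is continuously differentiable with the derivative identity above holding everywhere, and the two bounds follow immediately.
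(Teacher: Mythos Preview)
Your proof is correct and essentially the same as the paper's: both read off convexity from the monotonicity of $Z_t$ (so that $H_t'=\rho-Z_t$ is nondecreasing) and the Lipschitz bound from $0\le Z_t(\mu)\le \overline{v}$. One minor remark: you do not actually need Assumption~\ref{assumption:regularity2} here, since $H_t$ is an integral of a bounded monotone function and is therefore Lipschitz and a.e.\ differentiable regardless; the paper's proof likewise does not invoke that assumption.
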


\begin{proof}
The convexity is immediate from the fundamental theorem of calculus and monotonicity assumption on $Z_t$, as the expected expenditure is a weakly decreasing function of the multiplier. To show that the Lipschitz condition holds note that
\begin{equation*}
    \vert H_t(y)-H_t(x)\vert\leq \rho \vert y-x\vert + \left \vert \int_x^y Z_t(s)\mathrm{d}s\right\vert\leq (\rho+ Z_t(0)) \vert y-x\vert.
\end{equation*}
The proof of the Lipschitz condition follows from the fact the expected expenditure function is at most the expected valuation.
\end{proof}

\subsection{Proof of \Cref{lem:lipint}}

The conclusion is trivial if $\lambda=0$, so we assume $\lambda>0$. Moreover, note that we always have $R\geq 0$, and by considering the function $-f(-y)$ if necessary, we may assume without loss of generality that $x\geq 0$ and thus the same for $f(x)$.

For a contradiction, suppose that $f(x)> \sqrt{2\lambda R}$. Then by the assumption that $f$ is $\lambda$-Lipschitz, it follows that $f(y)> \sqrt{2\lambda R} - \lambda(x-y)$ for all $y\in [x-\sqrt{2R/\lambda},x]$ (note $f(y)>0$ on this region so this region is contained in $[0,x]$, where $f(y)\geq 0$). Hence, we have
\begin{equation*}
    \int_0^x f(y)\mathrm{d}y\geq \int_{x-\sqrt{2R/\lambda}}^x f(y)\mathrm{d}y> \int_{x-\sqrt{2R/\lambda}}^x \left[\sqrt{2\lambda R} - \lambda(x-y)\right]\mathrm{d}y.
\end{equation*}
This latter integral gives the area of a right triangle with height $\sqrt{2\lambda R}$ and base $\sqrt{2R/\lambda}$, which is clearly equal to $R$. This contradicts the assumption that $R=\int_0^x f(y)\mathrm{d}y$, proving the lemma.

\section{Omitted Proofs from \Cref{sec:other}: Other Pacing Algorithms}
\label{app:other}




\subsection{The algorithms do not run out of budget too early}

We state and prove an analogue of \Cref{lem:bg1} for \PacingOGD, \PacingOMD, and \PacingOFTRL, showing that the algorithms do not run out of budget too early. For this result, \PacingOMD and \PacingOFTRL can use arbitrary next-gradient predictors $M_{k,t}\leq \overline{v}$.  The only change compared to \Cref{lem:bg1} is the
slightly stronger condition on the problem parameters:
    $\overline{\mu}\geq 2\overline{v}/\rho_k+1$.

\begin{lemma}\label[lemma]{lem:other-runout}
Fix agent $k$ in a core auction with some (possibly adaptive, randomized, adversarially generated) set of valuations and competing bids. Suppose the agent  uses \PacingOGD, \PacingOMD, or \PacingOFTRL (the latter two algorithms with arbitrary next-gradient predictors satisfying $\vert M_{k,t}\vert \leq \overline{v}$). Let $\tau_k$ be the algorithm's stopping time. Assume all valuations are at most $\overline{v}$, and the parameters satisfy
    $\overline{\mu}\geq 2\overline{v}/\rho_k+1$
and
    $\epsilon_k\, \overline{v}\leq 1$.
Then $
    T-\tau_k\leq \frac{\overline{\mu}}{\epsilon_k \rho_k}+\frac{\overline{v}}{\rho_k}$
almost surely.
\end{lemma}

The analysis is modified from Proposition 2 of \citet{Balseiro-BestOfMany-Opre}.

\begin{proof}
Following the proof of \citep[Proposition 2]{Balseiro-BestOfMany-Opre}, it suffices to show that under the stated conditions on $\varepsilon_k$ and $\overline{\mu}$, either the pacing multipliers or the auxiliary sequences (depending on the setting) remain strictly below $\overline{\mu}$ at all times. In each case, the remainder of the argument to bound the stopping time then becomes identical to that in \citet{Balseiro-BestOfMany-Opre}.

\xhdr{\PacingOMD.}
We first show this for \PacingOMD with respect to the auxiliary sequence $\widehat{\mu}_{k,t}$. Observe that we deterministically have
\begin{equation}
\label{eq:spend_small}
    z_{k,t}\leq \frac{\overline{v}}{1+\mu_{k,t}}\leq \frac{\overline{v}}{1+\widehat{\mu}_{k,t}-\eps_k \vert M_{k,t}\vert}\leq \frac{\overline{v}}{1+\widehat{\mu}_{k,t}-\eps_k \overline{v}}\leq \frac{\overline{v}}{\widehat{\mu}_{k,t}}.
\end{equation}
The first inequality holds since expenditure is at most the bid, which in turn can be bounded using $\widehat{\mu}_{k,t}$ after accounting for the optimistic step. The second inequality holds by the assumption $\eps_k\overline{v}\leq 1$. Note that we also deterministically always have:
\begin{equation}
\label{eq:pacer_ub_spend}
     \widehat{\mu}_{k,t+1}\leq \widehat{\mu}_{k,t}+\eps_k z_{k,t}.
\end{equation}
We will now combine these inequalities to show that $\widehat{\mu}_{k,t}<\overline{v}/\rho_k +1$ for all $t$.

First, suppose that at some time $t$, it holds that
$ \widehat{\mu}_{k,t}\leq \overline{v}/\rho_k$. Then \eqref{eq:pacer_ub_spend} immediately implies that
\begin{equation*}
    \widehat{\mu}_{k,t+1}\leq \widehat{\mu}_{k,t}+\eps_k z_{k,t}\leq \widehat{\mu}_{k,t}+1< \overline{v}/\rho_k+1,
\end{equation*}
as claimed.

Suppose instead that $\overline{v}/\rho_k\leq \widehat{\mu}_{k,t}< \overline{v}/\rho_k +1$. Then by \eqref{eq:spend_small}, it immediately follows that $z_{k,t}\leq \rho_k$.
If $\widehat{\mu}_{k,t+1}>0$, then the recursion satisfies:
\begin{equation*}
    \widehat{\mu}_{k,t+1}\leq \widehat{\mu}_{k,t}+\eps_k(z_{k,t}-\rho_k)\leq \widehat{\mu}_{k,t}< \overline{v}/\rho_k +1.
\end{equation*}
If $\widehat{\mu}_{k,t+1}=0$, then we are done as well, so unconditionally $\widehat{\mu}_{k,t+1}$ remains strictly below this threshold as well in this case.

Therefore, it follows that under our assumption that $\overline{\mu}\geq 2\overline{v}/\rho+1>\overline{v}/\rho+1$, it deterministically holds that $\widehat{\mu}_{k,t}<\overline{\mu}$ at all times $t$, and thus there is never any projection back to this endpoint in this recurrence. As stated above, the bound on the stopping time then follows the exact same analysis as in Proposition 2 of \cite{Balseiro-BestOfMany-Opre}, which is based on writing out the full recurrence for $\widehat{\mu}_{k,t}$ until the stopping time in terms of the expenditure. 

\xhdr{\PacingOFTRL.}
The argument for \PacingOFTRL is nearly identical upon defining the auxiliary sequence
\begin{equation*}
    \widehat{\mu}_{k,t+1}=P_{[0,\overline{\mu}]}
        \rbr{-\eps_k\rbr{{\textstyle \sum_{s\in[t]}}\; \myGrad[k,s]}}.
\end{equation*}
The casework depending on the value of $\widehat{\mu}_{k,t}$ is the same as for \PacingOMD and establishes that $\widehat{\mu}_{k,t}$ remains strictly below $\overline{\mu}$ under the same assumptions on $\overline{\mu}$ and $\eps_k$.

\xhdr{\PacingOGD.}
For \PacingOGD, we show that the pacing multiplier again remains strictly below $\overline{\mu}$ under the stated assumptions. We argue as follows: observe that if $\mu_{k,t}\geq \frac{2\overline{v}}{\rho_k}-1$, then we must have
\begin{equation*}
z_{k,t}\leq \frac{\overline{v}}{1+\mu_{k,t}}\leq \frac{\rho_k}{2}.
\end{equation*}
Suppose that for some $t$, it holds that $\mu_{k,t}\geq \frac{2\overline{v}}{\rho_k}-1$. We claim that in this case, $\mu_{k,t+1}\leq \mu_{k,t}$. Indeed, either $\mu_{k,t+1}=0$ and we are already done, or otherwise there is no negative projection step and hence the \PacingOGD recursion implies that
\begin{equation*}
    \mu_{k,t+1}\leq \mu_{k,t}+\eps_k(z_{k,t}-\rho_k)+\eps_k'(z_{k,t-1}-\rho_k)\leq \mu_{k,t}- \eps_k\frac{\rho_k}{2}+\eps_k\frac{\rho_k}{2}\leq \mu_{k,t}.
\end{equation*}
In the penultimate step, we use the assumption in \PacingOGD that $\eps'_k\in [-\eps_k/2,0]$.
Thus, it follows that $\mu_{k,t+1}\leq \mu_{k,t}$ unconditionally when $\mu_{k,t}\geq \frac{2\overline{v}}{\rho_k}-1$.

Since \PacingOGD and our assumption $\eps_k\overline{v}\leq 1$ deterministically implies that
\begin{equation*}
    \mu_{k,t+1}\leq \mu_{k,t}+2\eps_k\overline{v}\leq \mu_{k,t}+2,
\end{equation*}
it follows that under \PacingOGD dynamics, $\mu_{k,t}$ can never exceed $\frac{2\overline{v}}{\rho_k}+1$. Indeed, we have shown that the sequence must be non-increasing when above $\frac{2\overline{v}}{\rho_k}-1$ in any iteration, and so can only ever exceed this threshold by $2$ from a single step. Therefore, since we assumed that $\overline{\mu}\geq \frac{2\overline{v}}{\rho_k}+1$, it follows that $\mu_{k,t}<\overline{\mu}$ for all times $t$.
\end{proof}

\subsection{Proof of \Cref{thm:other-general}: Liquid Welfare under Event-Feasible Algorithms}

Under the assumptions, we may follow the proof of \Cref{thm:main.new} with relatively minor modifications. In this case, we simply again define:

\begin{equation*}
    R_k(\boldsym{v}) \triangleq
    \textstyle \sum_{t=1}^T\; \left[\boldsym{1}\{\mE_{k,t}\}y_{k}(\boldsym{v})v_{k,t} + \boldsym{1}\{\mE_{k,t}\}\rho_k\right].
\end{equation*}
Note that the concentration inequality of \Cref{lem:concentration} applies equally well to this quantity since the events $\mE_{k,t}$ are determined by the history through time $t-1$ by definition.

At this point, we may simply follow the proof, again defining $A\subseteq [n]$ as the set of agents such that $\sum_{t=1}^T x_{k,t}v_{k,t}<B_k$. By our assumption, for each such agent $k\in A$, we have the analogous inequality
\begin{equation*}
    \sum_{t=1}^{T} x_{k,t}v_{k,t}\geq \sum_{t=1}^T (x_{k,t}v_{k,t}-z_{k,t})\cdot \mathbf{1}(\mE_{k,t})+(1-c)\sum_{t=1}^T\rho_k \cdot \mathbf{1}(\mE_{t,k}^c),
\end{equation*}
and again upon summing over $k\in A$, we derive exact analogue of \eqref{eq:group.sum.new}:

\begin{equation*}
    \sum_{k \in A}\sum_{t=1}^{T} x_{k,t}v_{k,t} \geq \sum_{t=1}^{T}\sum_{k \in A}[\boldsym{1}\{\mE_{k,t}\}(x_{k,t}v_{k,t} - z_{k,t})] + (1-c)\cdot\sum_{k \in A}\sum_{t = 1}^{T}\boldsym{1}\{\mE_{k,t}^c\}\cdot \rho_k.
\end{equation*}
We may now again appeal to the core auction assumption in the same way: for any $t\in [1,T]$, if we set $S\subseteq A$ to be the set of agents $k$ satisfying $\mE_{k,t}$, we find
\begin{align*}
&\textstyle
\sum_{k \in A}[\boldsym{1}\{\mu_{k,t}=0\}(x_{k,t}v_{k,t} - z_{k,t})]
= \sum_{k \in S}(x_{k,t}v_{k,t} - z_{k,t}) \\
&\qquad\geq \sum_{k \in S}x_{k,t}b_{k,t} - \sum_{k\in S}z_{k,t}\\
&\qquad \geq \sum_{k \in S}y_k(\boldsym{v}_t)b_{k,t} - \sum_{k =1}^{n}z_{k,t}\\
&\qquad \geq (1-c)\sum_{k \in S}y_k(\boldsym{v}_t)v_{k,t} - \sum_{k =1}^{n}z_{k,t}\\
&\qquad= (1-c)\sum_{k \in A}\boldsym{1}\{\mE_{k,t}\}y_k(\boldsym{v}_t)v_{k,t} - \sum_{k=1}^n z_{k,t}.
\end{align*}
Here, we crucially use the fact that $v_{k,t}\geq b_{k,t}$ but that \Cref{def:other-general} imposes a reverse inequality up to a factor $(1-c)$. We thus obtain the analogous inequality  \eqref{eq:RkLB} on the good event that concentration held:
\begin{equation*}
    \sum_{k\in [n]} \mathsf{WEL}_{k,\GPD}(\boldsym{v})\geq (1-c)\cdot\sum_{k\in [n]} R_k(\boldsym{v}) - \sum_{k\in [n]}\sum_{t\in[T]} z_{k,t} - n\overline{v}\sqrt{T\log(\overline{v}nT)}.
\end{equation*}
The rest of the proof extends verbatim up to this factor of $(1-c)$ in the final bound.

\subsection{Proof of \Cref{lem:other-prop}: Our Algorithms are Event-Feasible}

%

We need to prove that \PacingOGD, \PacingOMD, and \PacingOFTRL are $O(\overline{v}\sqrt{\eps_k/\rho_k})$-event-feasible.
Fix agent $k$ in what follows and define
\begin{equation*}
    c':=2\overline{v}\sqrt{\eps_k/\rho_k}.
\end{equation*}

\xhdr{\PacingOGD.} Let
        $\mE_{k,t} = \cbr{\mu_{k,t}\leq c'}$.
As in \Cref{lem:stronger}, we will prove the desired conditions hold on each epoch individually, where epochs are defined as before. Let $[t_1,t_2)$ denote a maximal epoch for agent $k$ so that $\mu_{k,t_1}=0$ and $\mu_{k,t_2-1}>0$. Because there are no negative projections on an epoch by assumption, writing out the gradient recurrence and rearranging yields:
    \begin{align*}
    \sum_{t=t_1}^{t_2-2}z_{k,t}&\geq (t_2-t_1-1)\rho_k + \frac{\eps_k'}{\eps_k+\eps_k'}(z_{k,t_2-2}-z_{k,t_1-1})\\
    &\geq (t_2-t_1-1)\rho_k - \overline{v}.
    \end{align*}
    In the last step, we use the assumption that $\vert \eps_k'\vert\leq \eps_k/2$ so that the ratio is bounded by $1$ in absolute value and drop a negative term.

    Say that the epoch is \emph{long} (resp. \emph{short}) iff the length of the interval satisfies:
    \begin{equation*}
        t_2-t_1\geq \frac{2\overline{v}}{\rho_k\,c'}.
    \end{equation*}
    It is immediate from basic algebra and the assumption $\overline{v}\geq \rho_k$ that on any long epoch:
    \begin{equation}
    \label{eq:long_epoch}
        \sum_{t=t_1}^{t_2-2}z_{k,t}\geq (1-c')(t_2-t_1)\rho_k.
    \end{equation}
    If an interval is short, then since each step of \PacingOGD can increase the pacer by at most $2\eps_k \overline{v}$ by assumption and the number of steps is at most $\frac{2\overline{v}}{\rho_k\,c'}$, the pacing multiplier can be at most
    \begin{equation*}
       2\eps_k \overline{v}\cdot \frac{2\overline{v}}{\rho_k\,c'}=\frac{4\eps_k \overline{v}^2}{\rho_k\,c'}=2\overline{v}\sqrt{\eps_k/\rho_k}=c'.
    \end{equation*}

    Therefore, the entire epoch will satisfy $\mE_{k,t}$ by construction. This means that for any short epoch,
    \begin{equation*}
        \sum_{t=t_1}^{t_2-1} x_{k,t}v_{k,t}\geq \sum_{t=t_1}^{t_2-1} (x_{k,t}v_{k,t}-z_{k,t})=\sum_{t=t_1}^{t_2-1} (x_{k,t}v_{k,t}-z_{k,t})\cdot \mathbf{1}(\mE_{k,t})+\sum_{t=t_1}^{t_2-1}\rho_k\cdot \mathbf{1}(\mE_{k,t}^c)
    \end{equation*}
    If instead the epoch is long, then we can instead bound using \eqref{eq:long_epoch}:
    \begin{align*}
        \sum_{t=t_1}^{t_2-1}x_{k,t}v_{k,t} &= \sum_{t=t_1}^{t_2-1}(x_{k,t}v_{k,t}-z_{k,t})+\sum_{t=t_1}^{t_2-1}z_{k,t}\\
        &\geq \sum_{t=t_1}^{t_2-1}(x_{k,t}v_{k,t}-z_{k,t})\cdot\mathbf{1}(\mE_{k,t})+(1-c')(t_2-t_1)\rho_k\\
        &\geq \sum_{t=1}^T (x_{k,t}v_{k,t}-z_{k,t})\cdot \mathbf{1}(\mE_{k,t})+(1-c')\sum_{t=t_1}^{t_2-1}\rho_k \cdot\mathbf{1}(\mE_{k,t}^c).
    \end{align*}
    The first inequality holds from nonnegativity of quasilinear utility under any pacing algorithm in an individually rational auction. Finally, note that on $\mE_{k,t}$,
    \begin{equation*}
        \frac{1}{1+\mu_{k,t}}\geq \frac{1}{1+c'}\geq 1-c'.
    \end{equation*}
    Doing this for each agent $k$ and taking the minimum of $\eps_k/\rho_k$ yields the desired conditions with $c=c'$.

\xhdr{\PacingOMD.} We can reduce to the previous analysis. Since the auxiliary sequence $\widehat{\mu}_{k,t}$ follows \PacingOGD (with $\eps_k'=0$), defining the events
    \begin{equation*}
        \mE_{k,t} = \cbr{\widehat{\mu}_{k,t}\leq c'},
    \end{equation*}
    the same lower bound on valuations holds by an identical analysis.
    Since $\vert \mu_{k,t}-\widehat{\mu}_{k,t}\vert\leq \eps_k\vert M_k\vert \leq \eps_k\overline{v}$ by assumption, the event $\mE_{k,t}$ implies that
    \begin{equation*}
        \mu_{k,t}\leq 2\overline{v}\sqrt{\eps_k/\rho_k}+\eps_k\overline{v}\leq 4\overline{v}\sqrt{\eps_k/\rho_k},
    \end{equation*}
    and so we satisfy the desired properties after taking $c = 2c'$.

\xhdr{\PacingOFTRL.} We use a similar analysis. Again, we prove the desired inequality on each epoch $[t_1,t_2)$ separately defined in terms of the $\mu_{k,t}$. The key observation is that by definition of an epoch:
    \begin{align*}
        \mu_{k,t_1}=0&\implies \eps_k\sum_{s=1}^{t_1-1}(z_{k,s}-\rho_k)+\eps_k(z_{k,t_1-1}-\rho_k)\leq 0\\
        \mu_{k,t_2-1}>0&\implies \eps_k\sum_{s=1}^{t_2-2}(z_{k,s}-\rho_k)+\eps_k(z_{k,t_2-2}-\rho_k)>0,
    \end{align*}
    so subtracting the first inequality from the latter yields
    \begin{equation*}
        \eps_k\sum_{s=t_1}^{t_2-2}(z_{k,s}-\rho_k)+ \eps_k (z_{k,t_2-2}-z_{k,t_1-1})> 0.
    \end{equation*}
    After rearrangement and dropping a negative term, we again find that
    \begin{equation*}
        \sum_{s=t_1}^{t_2-2}z_s> (t_2-t_1-1)\cdot \rho_k - \overline{v}.
    \end{equation*}
    But this is identical to the lower bound we derived for \PacingOGD, and so we can follow the same exact analysis by defining the events
        $\mE_{k,t}=\cbr{\mu_{k,t}\leq c'}$
    to obtain the desired conclusion.


\section{Additional Numerical Simulation Plots}
\label{app:expts}

Our empirical estimation of regret rate in Section~\ref{sec:expts} builds on a hypothesis that regret accumulates at a rate of $\Theta(T^\alpha)$ for some $\alpha \in [0,1]$.  To provide additional evidence for this hypothesis, we illustrate traces of regret evolution for 100 randomly-sampled campaigns from among those that exceed our participation threshold of at least $\theta = 1000$ auction instances.  For each campaign, we trace out the accumulation of regret as the time horizon (i.e., the number of auction instances) is amplified.  Our hypothesis corresponds to linear evolution of these traces when plotted in log-log scale.  See Figure~\ref{fig:regret}.

\begin{figure}[h]
    \centering
    \begin{tabular}{ccc}
        \includegraphics[width=0.4\textwidth]{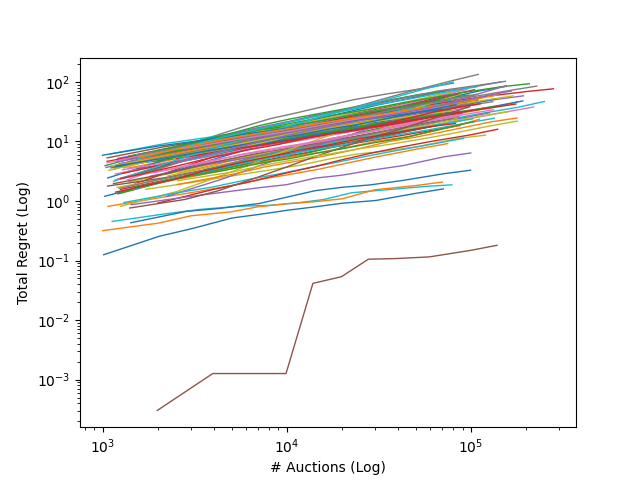}
        &
        \includegraphics[width=0.4\textwidth]{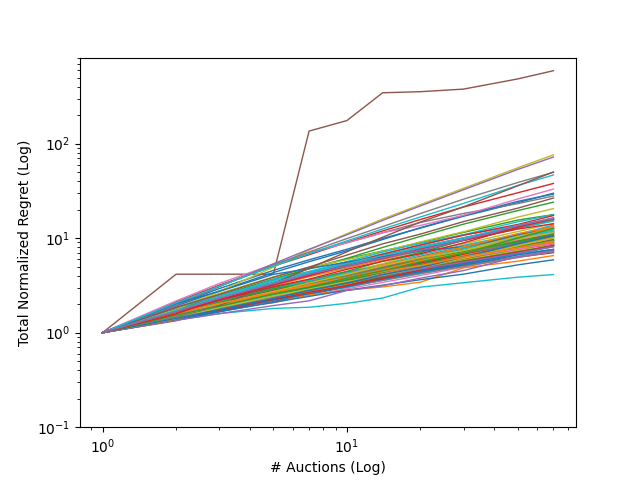}\\
        (a) utility-maximization in SPA & ... on normalized data\\

        \includegraphics[width=0.4\textwidth]{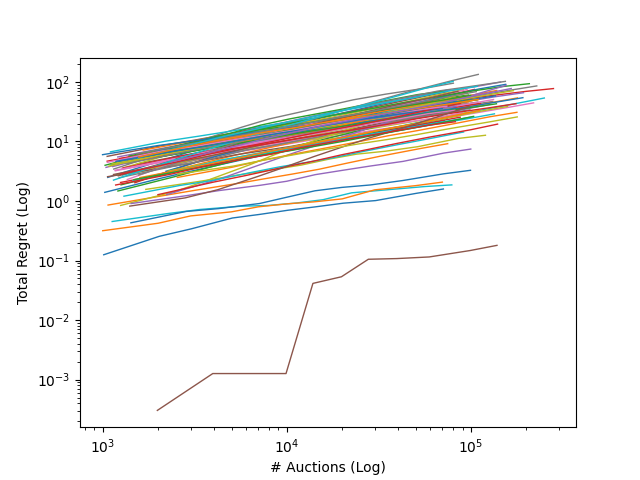}
        &
        \includegraphics[width=0.4\textwidth]{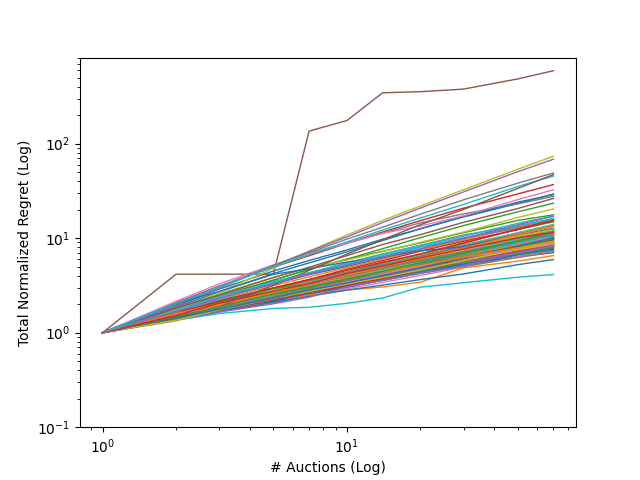}\\
        (b) value-maximization in SPA & ... on normalized data\\

        \includegraphics[width=0.4\textwidth]{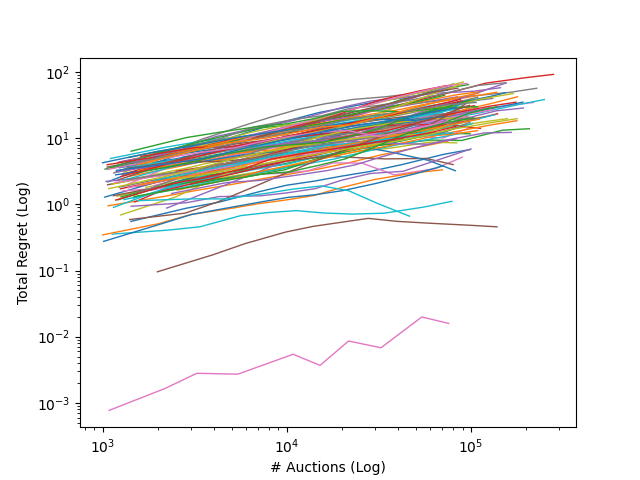} 
        &
        \includegraphics[width=0.4\textwidth]{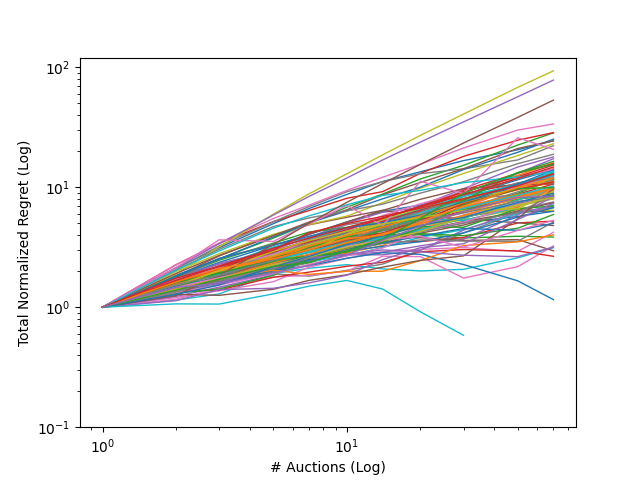}\\
        (c) value-maximization in FPA & ... on normalized data
    \end{tabular}
    \caption{Illustration of estimated regret in repeated auction simulation.  For each of 100 randomly-sampled campaigns, we trace out the evolution of regret as the time horizon (the number of auction instances) is amplified, illustrated in log-log scale.  Each line corresponds to a single campaign, plotted for different choices of the amplification parameter K (the number of iterations through the dataset).  Results are shown for (a) utility maximization in second-price auctions, (b) value maximization in second-price auctions, and (c) value maximization in first-price auctions. Instances with negative regret are excluded.  In the right column we show the same data but normalized so that all traces begin at (1,1).}
    \label{fig:regret}
\end{figure}


\end{document}